\def\BState{\State\hskip-\ALG@thistlm}
\newenvironment{psmallmatrix}
  {\left(\begin{smallmatrix}}
  {\end{smallmatrix}\right)}
\newcommand{\mF}{\mathbb{F}}
\newtheorem{theorem}{Theorem}
\newtheorem{lemma}{Lemma}
\newtheorem{corollary}{Corollary}
\newtheorem{definition}{Definition}
\newtheorem{remark}{Remark}
\newtheorem{conjecture}{Conjecture}
\begin{document}
\title{A proof that Reed-Muller codes achieve Shannon capacity\\ on symmetric channels}
\author{Emmanuel Abbe and Colin Sandon \\ EPFL}
\date{}
\maketitle

\begin{abstract}
Reed-Muller codes were introduced in 1954, with a simple explicit construction based on polynomial evaluations, and have long been conjectured to achieve Shannon capacity on symmetric channels. Major progress was made towards a proof over the last decades; using combinatorial weight enumerator bounds, a breakthrough on the erasure channel from sharp thresholds, hypercontractivity arguments, and polarization theory. Another major progress recently established that the bit error probability vanishes slowly below capacity. However, when channels allow for errors, the results of Bourgain-Kalai do not apply for converting a vanishing bit to a vanishing block error probability, neither do the known weight enumerator bounds. The conjecture that RM codes achieve Shannon capacity on symmetric channels, with high probability of recovering the codewords, has thus remained open. 

This paper closes the conjecture's proof. It uses a new recursive boosting framework, 
which aggregates the decoding of codeword restrictions on `subspace-sunflowers', handling their dependencies via an $L_p$ Boolean Fourier analysis, and using a list-decoding argument with a weight enumerator bound from Sberlo-Shpilka. The proof does not require a vanishing bit error probability for the base case, but only a non-trivial probability, obtained here for general symmetric codes. This gives in particular a shortened and tightened argument for the vanishing bit error probability result of Reeves-Pfister, and with prior works, it implies the strong wire-tap secrecy of RM codes on pure-state classical-quantum channels. 
\end{abstract}

\newpage

\tableofcontents
\newpage

\section{Introduction}
Shannon introduced in 1948 the notion of channel capacity \cite{Shannon48}, as the largest rate at which messages can be reliably transmitted over a noisy channel. In particular, for the canonical binary symmetric channel which flips every coordinate of a codeword independently with probability $\epsilon$, Shannon's capacity is $1-H(\epsilon)$, where $H$ is the binary entropy function. To show that the capacity is achievable, Shannon used a probabilistic argument, i.e., a code drawn uniformly at random. In the `worst-case' or `Hamming' setting \cite{Hamming50}, random codes can also be used to show that rates up to $1-H(2 \epsilon)$ are achievable\footnote{One should use $1-H(2 \epsilon) \wedge 1$ in case $\epsilon>1/2$.}, as codewords there must produce a packing of $\epsilon n$-radius balls (i.e,  a distance of $2\epsilon n$). Thus Shannon can reach higher rates due to rare error events being tolerated. 

Obtaining explicit code constructions achieving these limits has since then generated decades of research activity across electrical engineering, computer science and mathematics.   

\paragraph{A simple explicit code}
The Reed-Muller code was introduced by Muller in 1954 \cite{Muller54}, and is one of the first and simplest code. Reed developed shortly after a decoding
algorithm succeeding up to half its minimum distance \cite{Reed54}. The code construction can be described with a greedy  procedure. Consider the construction of a subspace of $\mF_2^n$ (a linear code) with blocklength given by a power of two, $n=2^m$. One can naturally start with the all-0 codeword. If one has to pick
a second codeword, then the all-1 codeword is the best choice under most relevant criteria. If one
has to now keep these two codewords, the next best choice to maximize the code distance is a codeword with half-0 and half-1, and to continue building a basis sequentially, one can add a few more vectors of weight $n/2$ to preserve a
relative distance of half, completing the Reed-Muller $RM(m,1)$ code of order $1$ (also called the Hadamard or augmented simplex code).  At this point, it may be less clear how to pick the next codeword,
but one can simply iterate the previous construction on the support of the previously picked
codewords, and re-iterate this a number of time after each saturation, reducing each time the distance by half. This gives the $RM(m,r)$
code, whose basis is equivalently defined by the evaluation vectors of monomials of degree at most $r$ on $m$ Boolean variables:
\begin{align}
RM(m,r)=\{(f(x_1),\ldots, f(x_{2^m})): x_1,\ldots, x_{2^m} \in \mathbb{F}_2^m, f \in \mathbb{F}_2(X_1,\ldots,X_m), \mathrm{deg}(f)\le r\}.
\end{align}

As mentioned, the first order RM code $RM(m,1)$ is the augmented simplex code or Hadamard code. The simplex code is the dual of the Hamming code that is `perfect', i.e., it provides a perfect sphere-packing of the space (it achieves the sphere-packing bound). This strong property is clearly lost for general RM codes, but RM codes preserve nonetheless a decent distance (at root block length for constant rate). In the worst-case model, where the distance controls the error-decoding capabilities, the RM code does not give a `good' family of codes (i.e., a family of codes with asymptotically constant rate and constant relative distance). Other codes can achieve better distance/rate trandeoffs such as expander codes \cite{expander}. However, once put under the light of Shannon's probabilistic error model, for which the minimum distance is no longer the right figure of merit, RM codes perform surprisingly well,  even as well as random codes potentially.  

\subsection{The conjecture}
It has long been conjectured that RM codes achieve Shannon capacity on the binary symmetric channel (BSC), and more generally  binary-input memoryless symmetric (BMS) channels (also called here simply `symmetric channels'). We refer to Section \ref{bms} for the formal definition of BMS channels; for now, it is sufficient to consider the BSC, the main case of interest, as BMS are mixtures of BSCs.

\begin{conjecture}
{\it For any BMS channel $\mathcal{P}$, and any rate $R$ below the capacity of $\mathcal{P}$ given by $C(\mathcal{P})=I(U,\mathcal{P})=(1/2)\sum_{x \in \{0,1\}, y \in \mathcal{Y}}\mathcal{P}(y|x)\log_2 \mathcal{P}(y|x)/(\mathcal{P}(y|0)/2+\mathcal{P}(y|1)/2)$,  where\footnote{This the mutual information of the channel with the uniform input distribution, which is the optimal input distribution due to the symmetry of BMS channels.} $\mathcal{Y}$ is the output alphabet of $\mathcal{P}$ (with $C(\mathcal{P})=1-H(\epsilon)$ when $\mathcal{P}=\mathrm{BSC}(\epsilon)$), a sequence of RM$(m_i,r_i)$ codes of rate $R_i={m_i \choose \le r_i}2^{-m_i}$ tending to $R$ can be decoded successfully with high probability, i.e., any of the $\lfloor 2^{nR_i} \rfloor$ codewords can be decoded with probability $1-o_{m_i}(1)$ despite corruptions from $\mathcal{P}$.} 
\end{conjecture}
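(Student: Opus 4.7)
My plan is to reduce the block-decoding claim to an inductive argument over the RM family, combining a weak base case with a recursive boosting step and a list-decoding endgame. The induction is on a parameter controlling how close the rate $R_i = \binom{m_i}{\le r_i} 2^{-m_i}$ is to capacity; the RM family is well suited to this because the restriction of $RM(m,r)$ to any affine subspace of dimension $k$ is $RM(k,r)$, and suitable quotients produce RM codes with the same order but smaller block length. For the base case I would not ask for vanishing bit error probability; I would only show that some decoder produces a bit estimate at a uniformly random coordinate whose error probability is bounded away from $1/2$. This holds for any symmetric code whose rate is strictly below capacity: Shannon's capacity formula plus the doubly transitive automorphism group of RM codes forces each coordinate to carry nontrivial information under the channel, and a simple decision rule (e.g., a thresholded posterior on a single coordinate conditional on the output) realizes the nontrivial bit advantage without any Bourgain-Kalai argument.

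The recursive boosting step is the heart of the argument. Fix a coordinate $x \in \mathbb{F}_2^m$ and consider a collection of affine subspaces $A_1,\dots,A_N$ forming a ``subspace-sunflower'' through $x$: they share a small common core and their pairwise intersections are exactly this core. Restricting the sent codeword to each $A_i$ yields an $RM$ codeword on $A_i$, and restricting the received word yields an independent copy of the same BMS channel acting on that restriction. Applying the inductive decoder to each petal gives $N$ estimates of $c_x$; one aggregates them by majority (or a weighted variant) to obtain an amplified estimate. If the petals were independent this would be a Chernoff amplification, but because all petals lie in the same global codeword and share the core, the estimates carry nontrivial correlations.

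Controlling these correlations is, I expect, the main technical obstacle. I would represent each petal's error indicator as a Boolean function of the channel noise restricted to that petal, and write the variance of the aggregated estimator as a sum of pairwise covariances. Each covariance is bounded via Boolean Fourier analysis: since petals overlap only on the core, the common Fourier support is small, and the decay of the noise operator on high-degree characters kills contributions away from low levels. A straightforward $L_2$ bound is not strong enough to beat the inverse-petal-count factor needed for amplification, so I would use an $L_p$ norm (for some $p>2$) and invoke hypercontractivity to turn $L_2$ mass on low-degree Fourier coefficients into an $L_p$ bound; this is precisely where the Fourier machinery buys the extra slack. Iterating the boosting step then drives the bit error probability to $o(1)$.

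Finally, to convert $o(1)$ bit error probability into $o(1)$ block error probability, I would run list decoding around the bit-wise decoder's output. The Sberlo--Shpilka weight enumerator bound limits the number of RM codewords in a Hamming ball of appropriate sub-constant radius to a manageable (e.g., subexponential) quantity; with high probability the true codeword is in this ball, so one enumerates the short list and returns the maximum-likelihood candidate. The calibration between the bit error rate delivered by the boosting step and the radius allowed by the Sberlo--Shpilka bound is the final quantitative check, and together with the recursion and the base case it closes the conjecture.
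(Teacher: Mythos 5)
Your plan reproduces the paper's skeleton quite faithfully: an entropic base case showing bit advantage $1/2-\Omega(1)$ for any symmetric code strictly below capacity; a boosting step built on subspace sunflowers (petals sharing a small kernel) with a majority vote over petal decodings; Boolean Fourier analysis with a higher-moment ($L_p$) argument to control the petal dependencies; and a list-decoding endgame using the Sberlo--Shpilka weight enumerator bound. This is essentially the paper's decomposition, and the instinct to avoid the Bourgain--Kalai machinery and instead build from a weak, non-vanishing base case is exactly right. Two technical points in the paper that your plan leaves implicit but that matter: the bit decoder must ignore the target coordinate's own noisy observation (otherwise the Fourier transform of the conditional error function has a term supported on $\{0^m\}$ that never decays under the affine-orbit argument); and rather than bounding pairwise covariances of petal error indicators, the paper conditions on the restriction of the noise to the kernel $V$, which makes the petal errors exactly conditionally independent, reducing the majority-vote analysis to Chernoff plus a Fourier bound on the probability that the kernel restriction is ``bad''. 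Your covariance framing is not wrong, but the conditional-independence route is what makes the $L_4$ anti-concentration computation tractable.

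The one genuine gap is quantitative, and it is where most of the paper's work lies. You write that iterating the boosting ``drives the bit error probability to $o(1)$'' and then hand off to list decoding. A vanishing bit error probability is far from sufficient for the list-decoding step: a bit error rate of, say, $1/\log m$ yields a decoding within Hamming distance $\approx 2^m/\log m$ of the true codeword, and at that radius the Sberlo--Shpilka codeword count overwhelms the Bhattacharyya-type probability in the union bound. One actually needs the bit error probability to decay at rate $2^{-\Omega(\sqrt{m}\,\log m)}$ (equivalently $m^{-\Omega(\sqrt{m})}$) so the list-decoding radius is $2^{m-\Omega(\sqrt{m}\log m)}$, and reaching this is the technical heart of the argument: a single sunflower-boosting step only delivers $O(2^{-m^{1/3}})$, and the recursive, carefully parameterized boosting (choosing $\overline{m}-m$ and $k=m-\underline{m}$ on the order of $\log m$ per round, iterating $\Theta(\sqrt m/\log m)$ times) is what closes the gap. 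As written, your plan understates this, and ``a final quantitative check'' does not capture that the $o(1)$ target is the wrong intermediate goal. A secondary caveat: the paper's bound on $\langle\mathcal{X}_{\overline{S}}^4\rangle$ is obtained by a direct combinatorial count over linear transformations of $S$ that pair up coordinates, not by invoking hypercontractivity; hypercontractivity at degree $|S|\le 2^{\dim S}$ would give a comparable bound, so this is a difference of technique rather than substance, but it is worth knowing which computation actually appears.
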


It is hard to track back the first appearance of this belief in the literature, but \cite{Kudekar16STOC} reports that it was likely already present in the late 60s.
The claim was mentioned explicitly in a 1993 talk by
Shu Lin, entitled ?RM Codes are Not So Bad? \cite{Lin93}. A 1993 paper by Dumer and
Farrell also contains a discussion on the matter \cite{Dumer93}, as well as the 1997 paper of Costello and Forney on the `road to channel capacity' \cite{Costello07}. Since
then, the topic gathered significant attention, and the activity sparked with the emergence of polar codes \cite{Arikan09}; Ar\i kan mentioned this as one of the major open problems in coding theory at ITW Dublin in 2010. Due to the broad  relevance\footnote{RM codes with binary or q-ary fields have been used for instance in cryptography \cite{Shamir79,BF90,Gasarch04,Yekhanin12}, pseudo-random generators and randomness extractors \cite{DBLP:journals/jcss/Ta-ShmaZS06,bogdanov-viola}, hardness amplification, program testing and interactive/probabilistic proof systems \cite{BFL90,Sha92,ALMSS98}, circuit lower bounds \cite{Razborov}, hardness of approximation \cite{barak2012making,BKSSZ}, low-degree testing \cite{DBLP:journals/tit/AlonKKLR05,DBLP:journals/siamcomp/KaufmanR06,DBLP:journals/rsa/JutlaPRZ09,DBLP:conf/focs/BhattacharyyaKSSZ10,DBLP:journals/siamcomp/HaramatySS13}, private information retrieval \cite{DBLP:journals/jacm/ChorKGS98,DBLP:journals/jacm/DvirG16,DBLP:journals/jacm/ChorKGS98,BEIMEL2005213,DBLP:conf/focs/BeimelIKR02}, and compressed sensing  \cite{Calderbank2010reed,Calderbank10,Barg15}.} of RM codes in computer science, electrical engineering and mathematics, the activity scattered in a wide line of works \cite{dumer1,dumer2,dumer3,Carlet05,hell,arikan-RM,Arikan09,Arikan2010survey,Kaufman12, Abbe15stoc,Abbe15,Kudekar16STOC,Kudekar17,Mondelli14,Saptharishi17,AY18,YA18,Sberlo18,comparingBitMAP,Samorod18old,Samorod18,abbe2020almost,HSS,Lian20,Fathollahi21,ASY21,Reeves21,Geiselhart21,BhandariHS022,RM_fnt}; see also \cite{RM_fnt}.

\paragraph{Relations to polar codes} 
In a breakthrough paper, Ar\i kan showed that the explicit class of polar codes achieve Shannon capacity on any BMS channel \cite{Arikan09,ArikanTelatar}. 
Given the close relationship between polar and RM codes, the belief that RM codes could also be proved to achieve capacity on BMS channels intensified. 
Polar codes are derived from the same square matrix, i.e., the matrix whose rows correspond to evaluations of monomials, which can also be expressed as $
G_n:=\begin{psmallmatrix}1 & 1\\0 & 1\end{psmallmatrix}^{\otimes m}
$, but polar codes use a different row selection that is channel dependant. One can view the difference as follows: apply $G_n$ (over $\mathbb{F}_2$) to a vector with i.i.d.\ Bernoulli$(\epsilon)$ components in order to compress the noise vector of the BSC channel; polar codes keep the rows having largest `conditional entropy' in the transformed vector, proceeding top to down. Using `conditional' entropies is a notable constraint of polar codes; it gives significant benefit in terms of decoding complexity, it is also convenient to bound the error probability (it is an `average-case' measure) but it may not be (and is not) the best way to proceed for performance. Instead, RM codes apply a more `macroscopic' and `universal' rule: keep the rows having largest Hamming weights. This is a natural choice for the worst-case model (in fact one can show that RM codes are the best subcode of $G_n$ in terms of distance), but this may not necessarily be natural in the Shannon setting. It remains reasonable to expect that the densest rows will extract the noise randomness effectively\footnote{This insight initiated the collaboration with A. Wigderson and A. Shpilka \cite{Abbe15stoc}.}, but this is much harder to establish as it uses a `worst-case measure' for an `average-case' problem.

\subsection{Progress and related references} We now overview the main progress on the conjecture in rough chronological order. 
\begin{enumerate}

\vspace{-.1cm}
\item \cite{sidel} is one of the first work establishing performance guarantees for RM codes in the Shannon setting, but for fixed $r$ and with an offset compared to the capacity scaling. 
A first capacity-achieving scaling is obtained in \cite{hell} for the special case of $r=1,2$. During that time, a line of work by Dummer also makes major progress on the RM code recursive decoding \cite{dumer1,dumer2,dumer3}, albeit without capacity-achieving results. The  polar codes breakthrough then takes place in \cite{Arikan09}, reviving the   conjecture as discussed earlier.  

\vspace{-.1cm}
\item The first capacity-achieving result for non-constant $r$ appears in \cite{Abbe15stoc,Abbe15}, which develops the weight enumerator bound approach for both the BEC and BSC channels, leveraging techniques from the key paper \cite{Kaufman12}. Prior to \cite{Kaufman12}, a line of work in the 70s derived weight enumerator bounds for RM codes, from the original work of Sloane and Berlekamp for $r=2$ \cite{sloane-RM} to twice (and 2.5) the minimum distance \cite{kasami1,kasami2}. However, no notable progress had appeared since 1976, until \cite{Kaufman12} managed to break through the 2.5 barrier and provide bounds for distances in a much broader regime of small $r$, and in particular, introducing the epsilon-net approach. One contribution of \cite{Abbe15stoc} is to expand and refine this approach to larger regimes of $r$, possibly linear in $m$. 

\vspace{-.1cm}
\item A breakthrough then took place for the binary erasure channel (BEC), with the celebrated papers \cite{Kudekar16STOC,Kudekar17}. The BEC is the simplest non-trivial channel; it does not allow for errors but only erasures. The BEC benefits in particular from a strong property: the event of decoding a bit incorrectly given the other noisy bit is a monotone property on the set of erasure patterns (the exit function). This allows \cite{Kudekar16STOC} to connect to generic results from sharp thresholds of monotone properties \cite{Friedgut96,Bourgain97} and conclude with an elegant proof that any code having from enough symmetries (namely, doubly transitivity), is capacity achieving on the BEC.  Further, due to the results of \cite{Bourgain97}, under such strong symmetries, the threshold must be in fact sharp enough, which implies that the `local' analysis showing that the exit function is vanishing is sufficient to  obtain a vanishing block error probability. 

\vspace{-.1cm}
\item In \cite{Sberlo18}, an important improvement of \cite{Abbe15stoc} is obtained, with a similar proof technique but with tighter and more generic estimates, producing useful weight enumerator bounds for larger $r$, in particular the regime of $r$ inducing constant rates, of interest for the conjecture. As we shall see, this improvement will turn out to be key in one step of our proof. 

\vspace{-.1cm}
\item While polar codes have revived the activity on the RM code conjecture, due to their similarity, the prior progress is based on techniques that have no apparent relations to polar codes. 
In \cite{AY18}, a connection to polarization theory is developed, showing that by decoding bits sequentially in the RM code order (by decreasing weight), the conditional entropy also polarizes. This in turn implies that the code obtained by retaining the `large-entropy' components, defined as the `twin-RM code', must be capacity achieving on any BMS channel. To close the conjecture, it remains to show that this `twin-RM' code is equivalent to the RM code. There appears to be sufficient evidence supporting this --the two codes are equivalent for the BEC, and numerical simulations as well as partial theoretical results \cite{abbe2020almost} support it--but that program remains to be closed, with intriguing connections to additive combinatorics \cite{tao_vu_2006,lovett_book,sanders}. 

 \vspace{-.1cm}
\item A new analytical approach was developed in \cite{Samorod18old,Samorod18} to obtain bounds for the weight enumerator of RM codes in new parameter regimes. In particular, \cite{Samorod18} allows to leverage results for erasure channels to new bounds on the weight enumerator, via hypercontracivity arguments. Improvements on how to exploit the erasure duality were further developed in \cite{HSS} which achieves an important milestone: showing that RM codes can be decoded at constant rate over the BSC channel. While this does not give a capacity result, it settled an open problem from \cite{Abbe15stoc} as techniques therein did not seem able to capture this regime. 

\vspace{-.1cm}
\item In \cite{YA18}, a recursive projection aggregation algorithm is developed, decoding a codeword by recursively projecting it on subspaces, and aggregating the projection decodings with a majority vote. While no capacity results are obtained for this algorithm, which is also limited to low $r$, some of the high-level ideas of recursively combining sub-components decoding of the codewords have a common flavor with the generic boosting procedure developed here. However, we rely here not on projections but restrictions (which do not suffer from noise increase) and on the carefully chosen `sunflower subspaces', to aggregate effectively the restrictions and to allow for a controlled dependency analysis. 

\vspace{-.1cm}
\item Finally, another major result recently showed that the exit function or bit error probability vanishes below capacity for RM codes on any BMS channel. The approach of \cite{Reeves21} is based on the area theorem and exit function as in \cite{Kudekar16STOC}, working out more tailored bounds for BMS channels based on an mmse version of the exit function and the nesting property of RM codes.  While \cite{Reeves21} comes close to establishing the conjecture, due to the very slow decay obtained for the exit function, the dominant part of the conjecture appears to be left open as discussed next.  
\end{enumerate}


We refer to \cite{ASY21,RM_fnt} for further references. 
\vspace{-.1cm}
\subsection{From bit to block and the critical part of the proof}\label{critical}
As mentioned above, a recent paper \cite{Reeves21} showed that at any constant rate below capacity, RM codes achieve a vanishing `bit-error probability' on all BMS channels, with a bit error scaling at the rate of $O(\log\log(n)/\sqrt{\log(n)})$. 
While this is an important result, such a slow rate does not imply that RM codes achieves capacity in the classical sense of recovering the codewords, neither from a union bound nor the current weight enumerator bounds as discussed next. In fact, with our boosting framework, such a slow decay allows to jump the first step but is otherwise not much further in the progress compared to a constant but non-trivial decay, which we can establish here more directly and more generically as our base case. Most of the work then relies on reaching a fast-enough decay to be able to recover the codewords with list-decoding, and this requires a much faster decay. 

Let us first clarify the different types of decoding requirement that one can study for the channel coding problem (assuming a BSC). The Shannon capacity is classically defined \cite{csiszar_book,cover_book,gallager_book} as 
the supremum over all rate $R$ for which there exists an encoding map $E^n: [ \lfloor 2^{nR} \rfloor ] \to \mathbb{F}_2^n$ and a decoding map $D^n: \mathbb{F}_2^n \to [ \lfloor 2^{nR} \rfloor ]$ such that for any message $m \in \lfloor 2^{nR} \rfloor$, $\mathbb{P}(m \ne D^n(E^n(m) \oplus Z^n))=o_n(1)$. Thinking in terms of codewords, i.e., denoting by $X^n$ the map $E^n$ and $\hat{X}^n$ the map that returns the codeword corresponding to the decoded message, i.e., $\hat{X}^n(\cdot)=E^n(D^n(\cdot))$, the previous message error probability is equal to $\mathbb{P}(X^n(m) \ne \hat{X}^n(X^n(m)\oplus Z^n))$.  
Thus the probability of error applies to the messages, or equivalently, to the entire codewords, and is usually called the `block' error probability or simply error probability. Since one can also consider $m$ to be drawn uniformly at random, adding this randomness to the probability\footnote{RM codes have enough symmetries that any codeword behaves similarly for error events.
}, one can simplify notation of the block error to $\mathbb{P}(m \ne \hat{m})$ or $\mathbb{P}(X^n \ne \hat{X}^n)$ (where $X^n$ is drawn uniformly at random among the codewords).

One can also define the `bit' error probability, as the probability that a coordinate in a codeword is incorrectly decoded, i.e., $\mathbb{P}(X_i \ne \hat{X}_i(X^n\oplus Z^n))$ (which is $i$ independent for RM codes). As will be discussed next, one may also consider the exit function variant where one removes the noisy observation of the bit to be decoded, using the decoder based on the other bits, $\mathbb{P}(X_i \ne \hat{X}_1(X_{-i}^n\oplus Z_{-i}^n))$ where $X_{-i}^n$ is the $(n-1)$-dimensional vector obtained by removing component $i$ from $X^n$.

The bit-error probabilty is sometimes used and desirable to establish a stronger\footnote{A `strong' converse also refers to an error probability tending to 1 above capacity.} converse, i.e., showing that one cannot even reach a vanishing bit-error probability above capacity (as shown here and in \cite{Reeves21} as well), while the block-error probability is desirable and commonly used to establish the achievability part of the capacity theorem, to recover the actual messages. 

If the bit-error probability vanishes at a rate $o(1/n)$, then a simple union-bound allows to imply that the block-error probability is also vanishing. However, it is a challenge to establish this scaling for RM codes and \cite{Reeves21} obtains a scaling far slower, i.e., $O(\log\log(n)/\sqrt{\log(n)})$. A weaker scaling than $o(1/n)$ could still suffice for a slightly more elaborate upper-bound that uses weight enumerator bounds such as in \cite{Sberlo18}, with a list-decoding argument, as discussed for instance in \cite{comparingBitMAP}. However, the required scaling for such a more elaborate technique is again far from $O(\log\log(n)/\sqrt{\log(n)})$, in paricular \cite{comparingBitMAP} requires $n^{-\Omega(1)}$. 

In the case of the BEC, one can take a much more direct path to go from the bit to the block error probability, because the BEC channel can be framed  in the study of a monotone Boolean formula, and under enough  symmetry (granted here by the RM code), general results from Bourgain and Kalai \cite{Bourgain97} have derived bounds on the critical window scaling, which imply directly that the bit-error probability must be vanishing as $o(1/n)$ (thus giving a vanishing block error from a union bound). We refer to \cite{Kudekar16STOC} for the details.  Unfortunately, for the BSC channel or other BMS channels, one cannot cast the error events as Boolean monotone properties when studying the bit (or block) error probabilities, and thus one cannot import directly results from sharp thresholds for monotone properties. In fact, besides being formally not `Boolean', the error events are non `monotone' (for potential extensions to non-Boolean properties). This is why additional machinery is already required to show that the exit function on the BSC channel is at least vanishing in \cite{Reeves21}. 

Establishing a fast bit-error probability decay is thus highly non-trivial. Our proof allows to get there with a collection of steps that hold each other tightly. 
In fact, we do not prove directly a $o(1/n)$ decay, but can obtain it with 5 intermediate levels of our boosting framework (cf.\ Figure \ref{boosting_blocks_1}): 
\begin{enumerate}

\item We first rely on a weak decoding criteria for channel coding; this is not a vanishing exit probability, but a constant and non-trivial probability. I.e., using generic entropic arguments, we first show that for any code operating below capacity that is symmetric, the probability of estimating a bit given the other noisy bits is $1/2-\Omega(1)$;
\item We then boost this to $O(2^{-\sqrt[3]{m}})$ using a one step `sunflower-boosting' and $L_1,L_2$ Boolean Fourier estimates. This improves the $O(\log(m)/\sqrt{m})$ scaling of \cite{Reeves21} (and gives a shorter proof) but it is still far from the $o(1/ 2^m)$ needed for a union bound, or from the scaling needed for a list-decoding step using \cite{Sberlo18};

\item We further boost this to $2^{-\Omega(\sqrt{m}\log(m))}$ using `recursive sunflower-boosting'; this is where most of the technical work lies, in particular the $L_4$ Boolean Fourier estimates;

\item We then combine our last estimate with a list decoding argument, using a weight enumerator bound from \cite{Sberlo18} to obtain a vanishing block error probability; this requires a similar argument as in \cite{comparingBitMAP}, although the theorem therein stated for a stronger scaling. Also, we do not only need the improvement of \cite{Sberlo18} over \cite{Abbe15stoc}, but also an intermediate step in the proof of Theorem 1.2 from \cite{Sberlo18}; 

\item We further boost this block error probability to $2^{-2^{\Omega(\sqrt{m}})}$ using `grid-boosting', which implies, a fortiori, a bit-error probability of $2^{-2^{\Omega(\sqrt{m}})}$ (which is in particular $o(1/n)$). 
\end{enumerate}

\section{Main result}
\begin{theorem}\label{main_thm}
Let $\mathcal{P}$ be a BMS channel and let $C(\mathcal{P})$ be its capacity, i.e., $C(\mathcal{P})=I(U,\mathcal{P})=(1/2)\sum_{x \in \{0,1\}, y \in \mathcal{Y}}\mathcal{P}(y|x)\log_2\left( \mathcal{P}(y|x)/(\mathcal{P}(y|0)/2+\mathcal{P}(y|1)/2)\right)$ where $\mathcal{Y}$ is the output alphabet of $\mathcal{P}$ ($C(\mathcal{P})=1-H(\epsilon)$ when $\mathcal{P}=\mathrm{BSC}(\epsilon)$). Let $\{m_i\}_{i \ge 1}$ and $\{r_i\}_{i \ge 1}$ be sequences of positive integers such that $r_i \le m_i$ for all $i$ and $\lim_{i\to\infty} m_i=\infty$. 
\begin{itemize}
\item If $\lim\sup_{i\to\infty} {m_i \choose \le r_i}2^{-m_i}=R < C(\mathcal{P})$, then the block error probability of $RM(m_i,r_i)$ vanishes as $2^{-2^{\Omega(\sqrt{m_i}})}$, i.e., one can recover with high probability any RM codeword from its corruption. 

\item If $\lim\inf_{i\to\infty} {m_i \choose \le r_i}2^{-m_i}=R > C(\mathcal{P})$, then the bit error probability of $RM(m_i,r_i)$ blows up as $1/2-2^{-\Omega(\sqrt{m})}$, i.e., no single component of a RM codeword can be decoded\footnote{This holds for the optimal maximum likelihood or MAP decoder.} from its corruption with asymptotically non-trivial probability. 
\end{itemize}
\end{theorem}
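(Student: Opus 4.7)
The plan is to prove the achievability bullet through the five-stage boosting chain sketched in Section \ref{critical}, and handle the converse by combining a standard entropic lower bound with the same Fourier machinery run contrapositively. The one structural fact on which every stage rests is that the restriction of an $RM(m,r)$ codeword to any affine subspace of dimension $m'$ is again an $RM(m',r)$ codeword, so decoders on smaller subcubes can be assembled into decoders on $\mathbb{F}_2^m$.

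\textbf{Base case (weak bit decoding).} Fix $R<C(\mathcal{P})$. I would first establish the base estimate generically: for \emph{any} symmetric linear code with rate below capacity on a BMS channel, the MAP estimator of a single bit given the other noisy coordinates has advantage $\Omega(1)$ over the fair coin. The ingredients are Fano's inequality applied position by position, the symmetry (which makes the per-bit error probability position-independent), and the fact that below capacity the average bit-wise conditional entropy is bounded away from one. This yields a bit-error at most $1/2-c$ for some $c=c(R,\mathcal{P})>0$, which is enough to seed the boosting.

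\textbf{Sunflower boosting and Fourier aggregation.} To amplify the base advantage into the quantitative rate $2^{-\Omega(\sqrt{m}\log m)}$ I would choose a \emph{sunflower} of affine subspaces $\{L_j\}$ in $\mathbb{F}_2^m$ sharing a common core containing the bit to be decoded. On each petal the restricted codeword is a lower-dimensional RM codeword, so the base estimator gives a weak decoder per petal; the outputs are then aggregated by a weighted vote. The key obstacle is that the petal decoders share randomness through the core and through overlapping noise, so their errors are correlated. I would expand the indicator of each petal's error event in the Boolean Fourier basis of $\mathbb{F}_2^m$ and control the aggregated variance by $L_1,L_2$ estimates, which is enough for a single round reaching $O(2^{-m^{1/3}})$. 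Iterating this - decoder $k+1$ uses a sunflower whose petals are decoded by decoder $k$ - requires fourth-moment $L_4$ Fourier bounds so that the dependency budget does not swamp the gain; with $\Theta(\sqrt{m})$ recursive levels and an $O(\log m)$ per-level improvement one reaches $2^{-\Omega(\sqrt{m}\log m)}$.

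\textbf{From bit to block via list decoding and grid boosting.} The recursive bit-error estimate is still far from $o(1/n)$, so a naive union bound fails. Following the strategy of \cite{comparingBitMAP}, I would apply the bit-decoder coordinate-wise to produce a noisy codeword estimate and then use the Sberlo--Shpilka weight enumerator bound \cite{Sberlo18} - specifically an intermediate inequality inside the proof of their Theorem 1.2, not only its stated form - to show that very few RM codewords can lie close to this estimate, so that maximum likelihood picks out the true one. This yields vanishing block error. Finally, to lift the block error all the way down to $2^{-2^{\Omega(\sqrt{m})}}$ I would run one more boosting step at the block level: a \emph{grid} of nearly disjoint parallel affine restrictions, on each of which the block decoder already succeeds with high probability, combined via a Chernoff-type aggregation. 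For the converse bullet one argues contrapositively: if the bit error of $RM(m_i,r_i)$ above capacity were better than $1/2-2^{-\Omega(\sqrt{m})}$, the very same sunflower-plus-list-decoding chain would upgrade it to a vanishing block-error decoder at rate above $C(\mathcal{P})$, contradicting Shannon's converse.

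The hard part will clearly be the recursive sunflower step: the $L_4$ Fourier estimates must be tight enough that the advantage of the petal decoders composes multiplicatively across $\Theta(\sqrt{m})$ recursion levels, since any polynomial loss per level would erase the exponential gain. Everything else - the entropic base case, the list-decoding interface with the weight enumerator bound, and the final grid step - is modular once this Fourier-analytic core is established, and the converse bullet is obtained essentially for free by running the same machinery in reverse.
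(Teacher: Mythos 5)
Your proposal reproduces the paper's argument faithfully: the same five-stage chain (generic entropic base case for a weak bit advantage, one round of sunflower boosting with $L_1,L_2$ Fourier estimates to reach $O(2^{-m^{1/3}})$, recursive sunflower boosting with $L_4$ anti-concentration to reach $2^{-\Omega(\sqrt m\log m)}$, list decoding via the intermediate Sberlo--Shpilka weight-enumerator bound, and a final grid-boosting step to $2^{-2^{\Omega(\sqrt m)}}$), and the same contrapositive use of this machinery for the converse. The only cosmetic difference is in phrasing the base case via Fano rather than the paper's chain-rule-plus-min-entropy argument; the substance and the critical technical bottleneck you identify ($L_4$ moment bounds controlling correlation across recursion levels) coincide with the paper's.
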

Some implications of this theorem: 
(i) This shows that RM codes, despite having a simple explicit construction of encoding complexity $O(n \log_2 n)$ that is universal for all BMS channels of same capacity, can match the performance of Shannon's random codes in terms of achieving capacity with maximum likelihood decoding. (ii) As discussed in \cite{Reeves21}, our results have also consequences in the quantum setting. It shows that Quantum Reed-Muller codes achieve the hashing bound from \cite{wilde_2013}, and since our result holds for a vanishing block error probability, it implies with the duality result of \cite{renes} that RM codes achieve strong secrecy on classical-quantum channels for both the BSC and pure-state wire-tap channels.

Theorem \ref{main_thm} part (i) and (ii) are proved in Sections \ref{recovering} and \ref{converse} respectively, for the BSC, and Section \ref{bms} provides the adjustment needed to extend the proofs to BMS channels.

\begin{figure}
    \centering
    \includegraphics[width=1\textwidth]{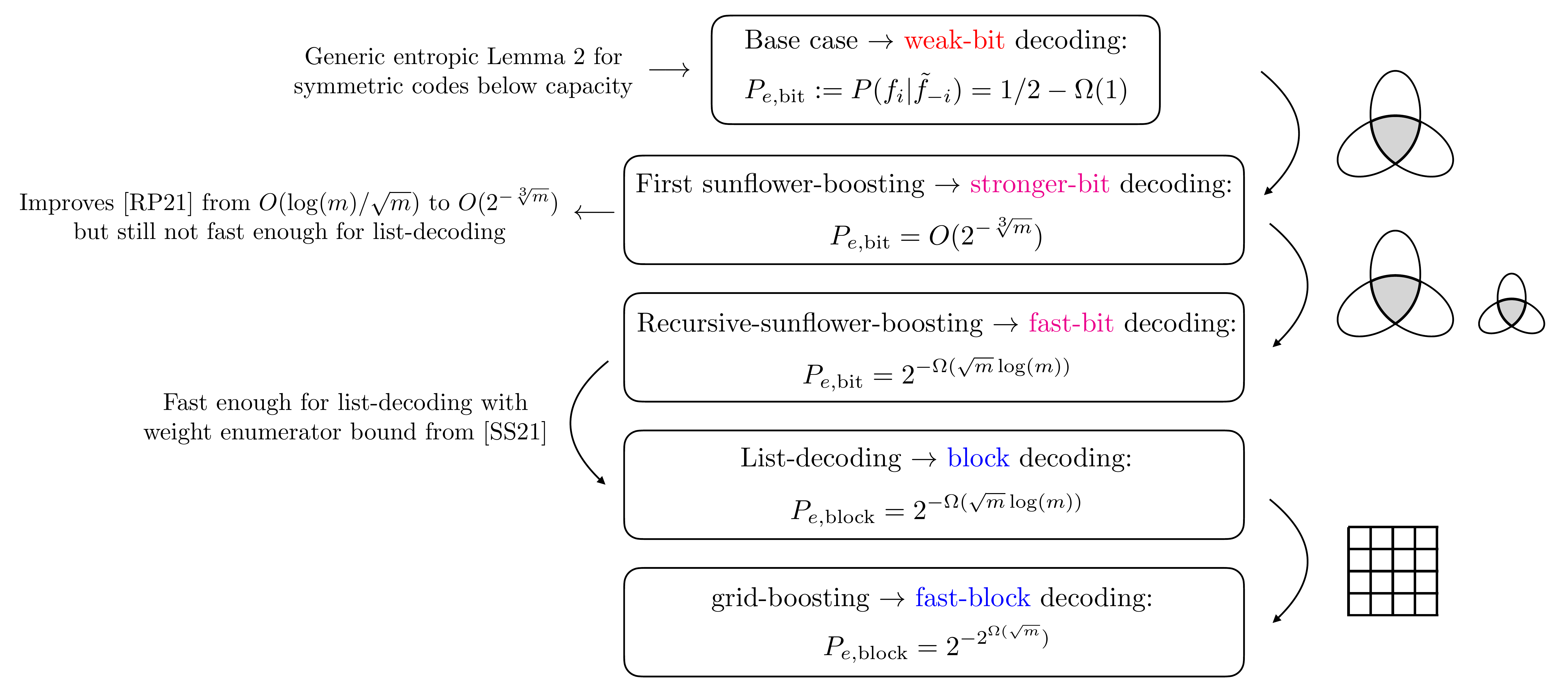}
    \caption{The various improvements of the local to global error bounds in our boosting framework.}
    \label{boosting_blocks_1}
    \vspace{-.5cm}
\end{figure}

\section{Proof technique}\label{technique}
We will prove increasingly strong bounds on local and global error events of RM codes. 

{\bf Recursive boosting decoding algorithm.} Most of the work relies on providing a method to convert any algorithm that decodes a codeword coordinate in $RM(m,r)$ with non-trivial but otherwise weak success probability, to an algorithm having a higher and controllable success probability for a codeword component in $RM(\overline{m},r)$ with an $\overline{m}$ slightly larger than $m$.

{\bf Base case.} Our base case is a weak bound on the probability of error when decoding a bit when observing the other noisy bits (i.e., the exit function). However, this bound only needs to be of order $1/2 - \Omega(1)$, which is much weaker than a vanishing bound for the exit function as proved in \cite{Reeves21}. In fact, we can 
obtain such a weak bound generically, for any code below capacity having sufficient symmetries, with an argument that relies on entropic inequalities (Lemma \ref{baseErrorGen}).

{\bf Constructing a subspace sunflower.\footnote{We use the `sunflower' terminology from \cite{sunflower_1,sunflower_2}, but do not need the ``sunflower lemma'' \cite{sunflower_1}.}} 
From this base case, we start boosting the weak error bound to better bounds. To develop the boosting technique, we consider a subspace $V$ of $\mathbb{F}_2^{\overline{m}}$ containing the bit to be decoded, where $V$ is chosen arbitrarily but independently of $Z$ and with dimension $\underline{m}$, such as  take $V=\mathbb{F}_2^{\underline{m}} \times 0^{\overline{m}-\underline{m}}$. 
Then 
we use a large number of $m$-dimensional subspaces of $\mathbb{F}_2^{\overline{m}}$, $W_1$,...,$W_b$, that all intersect at $V$ and only at $V$. We use a greedy algorithm to construct these `petal' subspaces $W_1$,...,$W_b$ (cf.\ Lemma \ref{sunflowerLem}). We next independently attempt to decode the bit in question from the restriction of the noisy codeword to each of these petals, ignoring the noisy value of the target bit in order to mitigate the dependencies between these decodings. 
At the end we take a majority vote on the petals to decode the original bit. The advantage of working on the petals is that we have multiple of them and their overlap is controlled by the intersection\footnote{We use an overlapping set $V$ greater than $0^m$ as any two subspaces of dimension greater than $m/2$ intersect somewhere other than $0^m$, and spaces of dimension $m/2$ are too small to decode on effectively.} $V$, which allows us to manage the dependencies and gain concentration in the majority vote analysis.

{\bf Analyzing the conditional probabilities on the petals.} The main object of study then becomes the probability that we decode wrongly a bit conditioned on the restriction of the noise to a petal $W_i$ taking a fixed value. In order to discuss this a little more formally, consider the case of the binary symmetric channel that flips bits independently with probability $\epsilon$, and define $P_e(m,r,\epsilon)$ to be the error probability of decoding a bit of a codeword in $RM(m,r)$ when observing the other noisy bits. We denote by $Z\in\mathbb{F}_2^{2^m}$ the noise vector indicating which bits are flipped. For any $i\ne j$, the intersection of $W_i$ and $W_j$ is $V$, so the restrictions of $Z$ to the $W_i$ are independent\footnote{The restrictions of the actual code to the $W_i$ are not independent conditioned on the restriction of the code to $V$, but that does not matter because the probability of decoding the target bit incorrectly depends only on $Z$.} 
conditioned on the restriction of $Z$ to $V$. That in turn means that whether or not they decode the target bit correctly are independent conditioned on the restriction of $Z$ to $V$. Therefore if the restriction of $Z$ to $V$ is ``good", in the sense that the probability of decoding the restriction of the code to each of the $W_i$ in a way that gets the target bit wrong conditioned on that restriction is at most $1/2-\Omega(1)$, and if $b$ is large enough, we will almost certainly determine the value of the target bit correctly. 

We thus need to show that the restriction of $Z$ to $V$ is sufficiently likely to be good. Let $Q_{\underline{m},m}(z')$ be\footnote{We drop the parameters $r, \epsilon$ as they become implicit from now on.} the probability that we decode wrongly the bit $0^m$ by observing the other noisy bits, conditioned on the restriction of $Z$ to an $\underline{m}$-dimensional subspace ($V=\mathbb{F}_2^{\underline{m}} \times 0^{m-\underline{m}}$) being $z'$. We now want to bound $\mathbb{P}_{z'}[Q_{\underline{m},m}(z')>1/2-c]$ for some suitable constant $c$. A little more specifically, early in the boosting process we will have an upper bound on $\mathbb{E}_{z'}[Q_{\underline{m},m}(z')]=P_e(m,r,\epsilon)$ that is slightly less than $1/2$ and we will need to prove that $\mathbb{P}_{z'}[Q_{\underline{m},m}(z')>(\mathbb{E}_{z'}[Q_{\underline{m},m}(z')]+1/2)/2]=o(1)$ in order to prove that $P_e(\overline{m},r,\epsilon)=o(1)$ (first case). Later on in the process, we will have fairly tight bounds on $P_e(m,r,\epsilon)$ and we will need to prove that $\mathbb{P}_{z'}[Q_{\underline{m},m}(z')>1/3]$ is very small in order to prove sufficiently tighter bounds on $P_e(\overline{m},r,\epsilon)$ (second case).

{\bf Boolean Fourier analysis of $Q$.} Since $\mathbb{P}_{z'}[Q_{\underline{m},m}(z')>1/3]\le 9 \mathbb{E}_{z'}[Q^2_{\underline{m},m}(z')]$ and $\mathbb{P}_{z'}[Q_{\underline{m},m}(z')>(\mathbb{E}_{z'}[Q_{\underline{m},m}(z')]+1/2)/2]\le\frac{16}{(1-2\mathbb{E}_{z'}[Q_{\underline{m},m}(z')])^2}Var_{z'}[Q_{\underline{m},m}(z')]$, we rely on the Fourier transform of $Q$ to estimate its $L_1$ and $L_2$ norms. For any given $z'$, the value of $Q_{\underline{m},m}(z')$ is the expectation of $Q_{m,m}(z'')$ over all $z''\in\mathbb{F}_2^m$ having a restriction to $\mathbb{F}_2^{\underline{m}}$ that is $z'$. This implies that every term in the Fourier transform of $Q_{m,m}$ having support in $\mathbb{F}_2^{\underline{m}}$ is also in the Fourier transform of $Q_{\underline{m},m}$, while those having supports not in $\mathbb{F}_2^{\underline{m}}$ cancel themselves out when the averaging is applied. Also, $\mathbb{E}_{z'}[Q^2_{m,m}(z')]\le \mathbb{E}_{z'}[Q_{m,m}(z')]=P_e(m,r,\epsilon)$. So, in order to show that $P_e(\overline{m},r,\epsilon)$ is small compared to $P_e(m,r,\epsilon)$, it will suffice to show that
most of the weight on nonconstant terms is on terms that are not supported in $\mathbb{F}_2^{\underline{m}}$ using weaker assumptions (first case), 
 and that most of the weight is on terms that are not supported in $\mathbb{F}_2^{\underline{m}}$ using stronger assumptions (second case). 

{\bf Using affine invariance to reduce to non-concentration on subspaces.} Next, we observe that from the affine-invariance of RM codes, if $S\subseteq\mathbb{F}_2^m$, and $S'$ is the set resulting from applying a linear transformation to $S$, then the coefficients of the terms supported by $S$ and $S'$ in the Fourier transform of $Q_{m,m}$ must be the same. We also observe that if the subspace of $\mathbb{F}_2^m$ spanned by $S$ is $d$-dimensional then the fraction of the elements of the orbit of $S$ under linear transformations that are contained in $\mathbb{F}_2^{\underline{m}}$ is at most $2^{-d(m-\underline{m})}$. Every nonconstant term in the Fourier transform of $Q_{m,m}$ has a support that spans a subspace of dimension at least $1$,\footnote{Here is where the requirement that our attempt to decode a bit not use the noisy version of that bit comes in. If we allowed using the noisy version of the target bit the transform would have a term with support $\{0^m\}$. } which is enough to show that $Q_{\underline{m},m}$ is close enough to constant that most $z$'s are good for $m$ slightly larger than $\underline{m}$. This will imply a vanishing bit-error probability, scaling as $O(2^{-\sqrt[3]{m}})$, improving on the $O(\log(m)/\sqrt{m})$ from \cite{Reeves21} but not sufficiently to proceed with list-decoding. 
In order to establish tight enough bounds on $P_e(\overline{m},r,\epsilon)$ for the next phase, we will need to show that most of the weight is on terms having supports that are not contained in low dimensional subspaces, which is more involved. 

{\bf Non-concentration on subspaces via $L_4$-norm.} 
We next prove that for any low-dimensional set $S$, the Fourier energy of $Q_{m,m}$ on terms supported by linear transformations of $S$ is relatively small. For any given value of $z'\in\mathbb{F}_2^m$ and conditioned on $Z=z'$, one will always get the value of the target bit either right, wrong or both possible values of the target bit will be equally likely (tie case). Thus $Q_{m,m}(z')\in\{0,1/2,1\}$ for all $z'$. We know that $\mathbb{E}_{z'}[Q_{m,m}(z')]=o(1)$, so that implies that $Q_{m,m}$ is concentrated on a small fraction of the inputs. For any set $S$ spanning a low dimensional space, we will prove that the sum of the Fourier basis elements supported by linear transformations of $S$ is not concentrated on a small fraction of the inputs by bounding its $L_4$ norm and comparing that to its $L_2$ norm. That implies that the orbit of any given low-dimensional set cannot account for too much of the Fourier weight of $Q_{m,m}$, and there are at most $2^{2^d}$ different orbits of $d$-dimensional sets under linear transformations, so the combined weights of those terms will be small. 

{\bf Repeat.} Previous results allow us to show that $P_e(\overline{m},r,\epsilon)$ is small relative to $P_e(m,r,\epsilon)$ for $\overline{m}$ relatively close to $m$. Applying these bounds repeatedly, we show that $P_e(m,r,\epsilon)=2^{-\Omega(\sqrt{m}\log(m))}$ for the values of $m$, $r$, and $\epsilon$ that we care about.

{\bf Technical justifications.} There are a few details of this argument that are important to establishing that the boosted error bounds decrease fast enough. In the first phase of boosting, our bound on the error rate decreases exponentially in $\overline{m}-m$. That is not fast enough to get the bounds we need for our argument establishing that we can recover the codeword, so we need a second phase of boosting using subspace sunflowers. Similarly, it would be simpler to just always have all of the $W_i$ with dimensionality one greater than that of $V$, but then we would still need to have $\overline{m}-m=\Omega(\log(m))$ in order to limit the risk that the majority of the decodings of the $W_i$ get the target bit wrong despite the restriction of $Z$ to $V$ being good, and our bound on $P_e(\overline{m},r,\epsilon)$ would be roughly $P_e(m,r,\epsilon)/\sqrt{m}$, which would tighten too slowly. 

{\bf Bit-to-block decoding with tight list-decoding.} The next part is more standard, but requires a tight control of the bounds to bind the pieces together. We can apply our previous bound to every bit in order to obtain a decoding that gets at most $2^{m-\Omega(\sqrt{m}\log(m))}$ bits wrong with high probability. To determine exactly what the codeword was, we use list decoding, i.e., we make a list of every codeword that is within distance $2^{m-\Omega(\sqrt{m}\log(m))}$ of the previous decoding and then decode to the list element that is closest to the corrupted codeword. We then use a union bound over all codewords that are within distance $2\cdot 2^{m-\Omega(\sqrt{m}\log(m))}$ of the true codeword. By an intermediate\footnote{Theorem 1.2's bound in \cite{Sberlo18} does not suffices to prove our bound, but the intermediary bound at page 17 does.} result in the proof of Theorem 1.2 of \cite{Sberlo18}, the number of elements of $RM(m,r)$ that disagree with any codeword on at most $2^{m-\l}$ elements is at most 
\[2^{\sum_{j=\l}^r 17m(j-1)(j+2)+17(j+2){m-(j-1) \choose \le r-(j-1)}}\]
which is $2^{O(\l 2^{m-\l} \left(4(r-\l)(m-r)/(m-\l)^2\right)^{m/2}) }$ for the values of $\l$ that we care about. For a given codeword that disagrees with the true codeword on at least $2^{m-\l-1}$ bits, the probability that the corrupted code has greater agreement with it than with the true codeword is at most $(4\epsilon(1-\epsilon))^{2^{m-\l-2}}$ (Bhattacharyya bound). 
These last steps are similar to the arguments made in \cite{comparingBitMAP}, although their theorem is stated to require a much tighter bound on the bitwise error (namely $n^{-\Omega(1)}$) than allowed here. 
The probability that there is a codeword within $2\cdot 2^{m-O(\sqrt{m}\log(m))}$ of the true codeword that is more likely is then $\sum_{\l= \Omega(\sqrt{m}\log(m))}^m 2^{O(\l 2^{m-\l} \left(4(r-\l)(m-r)/(m-\l)^2\right)^{m/2}) } (4\epsilon(1-\epsilon))^{2^{m-\l-2}} \ll 2^{-\Omega(\sqrt{m}\log(m))}$.
This shows that we can recover the codeword with probability $1-2^{-\Omega(\sqrt{m}\log(m))}$. 

{\bf Boosting the block-error probability via high-dimensional grid boosting.} In order to establish tighter bounds on the block-error probability, we first observe that for sufficiently small $c$, any set $S\subseteq\{0,...,m\}$ with $|S|=c\sqrt{m}$ and any $x\in\{0,1\}^S$, the restriction of the code to the bits whose variables with indices in $S$ take on the values specified by $x$ is still an RM code with rate below the threshold. So, we can determine the restriction of the codeword to these bits from its noisy version with probability $1-2^{-\Omega(\sqrt{m}\log(m))}$. Furthermore, whether or not we can recover the restriction of the codeword to the bits whose coordinates in $S$ take on the values specified by $x$ from its noisy version, is independent of the corresponding event for another $x'$. With very high probability, it will thus be the case that most values of $x$ are ``good", where ``good'' means here that we can recover the corresponding restrictions of the codeword from their noisy versions. 

The advantages of doing this additional grid-boosting is that we can virtually guarantee that it will not get too many bits wrong, and the bits that it does get wrong will typically be arranged in patterns that are more convenient to handle. Specifically, we partition $\{0,...,m\}$ into sets $S_1,...,S_{m'}$ of size about $c\sqrt{m}$. Then, for every $1\le i\le m'$ and $x\in \{0,1\}^{S_i}$ we determine the most likely value of the restriction of the codeword to the corresponding bits from its noisy version. We then vote, i.e., we guess that every bit in the codeword has the value that the largest number of these reconstructions assign to it. For every $x_1\in\{0,1\}^{S_1}$, $x_2\in\{0,1\}^{S_2}$,...,$x_{m'}\in\{0,1\}^{S_{m'}}$, there is exactly one bit of the code having coordinates in $S_i$ taking on the values specified by $x_i$ for all $i$. Hence the number of bits that the majority of the partial reconstructions get wrong is at most the number of tuples $(x_1,...,x_{m'})$ such that at least half of the $x_i$ are bad. This allow us to get most of the bits right except in the extreme low probability case with unusually many bad $x$'s. That in turn lets us apply the previous list-decoding technique to reach a block error probability of $2^{-2^{\Omega(\sqrt{m}})}$.

\section{Conclusion and open problems}
This paper proves the conjecture that RM codes achieve capacity on all BMS channels, in the classical sense of recovering the codewords down to capacity. It shows that, despite their simple deterministic channel-independent construction taking $\tilde{O}(n)$ time,  RM codes match random codes in universally achieving capacity on BMS channels.

The paper puts forward in particular a new boosting framework for decoding.  Boosting techniques have been developed extensively in supervised learning \cite{schapire}, as well as community detection \cite{abbe_fnt} where weak recovery consist in recovering a coordinate with error probability at most $1/2-\Omega(1)$. However, boosting techniques have not been used extensively in coding theory so far, and weak recovery has also not been considered extensively in this context. Our boosting requires as base case a weak recovery notion for decoding a bit given the other noisy bits, which must be non-trivial but possibly constant (or even slowly increasing to $1/2$). This weak base case benefits from being very general: we show that every code below capacity affords such a base case as long as it is symmetrical (i.e., making the coordinate choice irrelevant). In addition, it is much simpler to establish this base case than a vanishing scaling of $O(\log(m)/\sqrt{m})$ as in \cite{Reeves21}, while a direct round of sunflower-boosting already gives a faster decay to $O(2^{-\sqrt[3]{m}})$. As discussed in Section \ref{critical}, the main challenge then lies in reaching a scaling of $2^{-\Omega(\sqrt{m}\log(m))}$, which the recursive boosting algorithm achieves with the $L_2,L_4$ anti-concentration analysis. Two natural directions are now: 


\begin{itemize}
\item {\bf The decoding complexity:} the main open problem regarding RM codes is now to obtain also an efficient decoding algorithm, ideally down to capacity. This has also been an active area of research over the last decades, with results surveyed in \cite{RM_fnt}. Polar codes have currently the significant advantage of being very efficiently decodable, at the expense of having a more involved (albeit still efficient \cite{Tal13}) and channel-dependent encoding. Further, as shown in \cite{Hassani14,Guruswami15} and with the general strong polarization \cite{strong_polar}, polar codes have a polynomial gap to capacity, and thus approach capacity with an overall polynomial complexity. In fact polar codes have practical relevance and have entered 5G protocols \cite{3gpp}. Their scaling exponent is however not optimal, compared to random codes and to RM codes \cite{Hassani18}, and in some cases RM codes have already proved to have a favorable trade-off \cite{Mondelli14,YA18}.
If one can  obtain an efficient decoding algorithm for RM codes down to capacity, with a quasi-optimal scaling law, the RM code would be astonishing: essentially matching random codes in performance, but efficiently.

Note that the boosting framework could also add to the quest for an efficient decoding algorithm. In fact, the  boosting steps are all quite efficient. It gets to a vanishing bit error probability with $2^{O(\sqrt[3]{m})}$ invocations of the base case, and further down to $e^{-O(\sqrt{m}\log(m))}$ with $e^{O(\sqrt{m}\log(m))}$ invocations of the previous case, and $e^{O(\sqrt{m}\log(m))}$ extra time usage to run. These could be improved. However, two steps require an efficient counter-part to turn the current decoding into an efficient one: (1) the base case, (2) the list decoding argument. Since the base case is now much weaker than a vanishing bit error probability, in fact the weakest possible base probability for our argument to work would be $1/2-2^{-o( \sqrt{m})}$, one may hope for an efficient implementation, but this is currently an open problem. Obtaining an efficient list-decoding for the RM codes is a second interesting open problem. 

\item {\bf Beyond RM codes:} as mentioned several times, our base case is generic and applies to any symmetric code. One could hope to develop the boosting framework in a more general setting than RM codes. Our current version is tailored to evaluations of polynomials on the hypercube, and a first extension would be in such settings.  
To go beyond polynomial evaluation codes, one would need to first understand the behavior of restricted codes.    
\end{itemize}

\section{Setup: boosting framework and sunflowers}
We present first the proof for the binary symmetric channel, BSC$(\epsilon)$, since it captures the essence of the problem, and generalize to other BMS channels in Section \ref{bms}.

Let $m,r\in\mathbb{Z}^+$, $r\le m$, and $\epsilon\in (0,1/2)$. The notations used are as follows,
\begin{enumerate}
\item Draw $f$ uniformly at random in $RM(m,r)$; this is the transmitted codeword;
\item Generate the noise vector $Z\in\mathbb{F}_2^{2^m}$ by independently setting $Z_x$ to $1$ with probability $\epsilon$ and $0$ otherwise for each $x$.
\item Set $\tilde{f}=f+Z$ (component-wise over $\mathbb{F}_2$).
\end{enumerate}

Now, let $L_{m,r,\epsilon}(\tilde{f})$ be the algorithm that finds the most likely value of $f(0^m)$ given the values of $\tilde{f}(x)$ for all $x\ne 0^m$ under the above model, or returns a uniform random value if $0$ and $1$ are equally likely. Note that we remove the observation of $0^m$ since we need to control the dependencies between the petals and this also matches the setting of the base case. 
Let $$P_e(m,r,\epsilon)= \mathbb{P}(L_{m,r,\epsilon}(\tilde{f})\ne f(0^m)).$$ 
\begin{remark}
Observe that for any $\hat{f}\in RM(m,r)$, running $L_{m,r,\epsilon}(\tilde{f}+\hat{f})$ is equivalent to running $L_{m,r,\epsilon}(\tilde{f})$ and then adding $\hat{f}(0^m)$ to its output. So, the probability that $L_{m,r,\epsilon}(\tilde{f})$ returns the correct value depends only on the value of $Z$. 
\end{remark}
\begin{remark}
Providing or not the noisy bit for $0^m$ to the decoder, i.e.,using  the bit-error probability versus the exit function, makes a difference when these probabilities are of constant order, but no longer when they are vanishing (since they differ by a constant factor). Hence having a vanishing bit-error probability is equivalent to having a vanishing exit function.   
\end{remark}

\begin{figure}
    \centering
    \includegraphics[width=.6\textwidth]{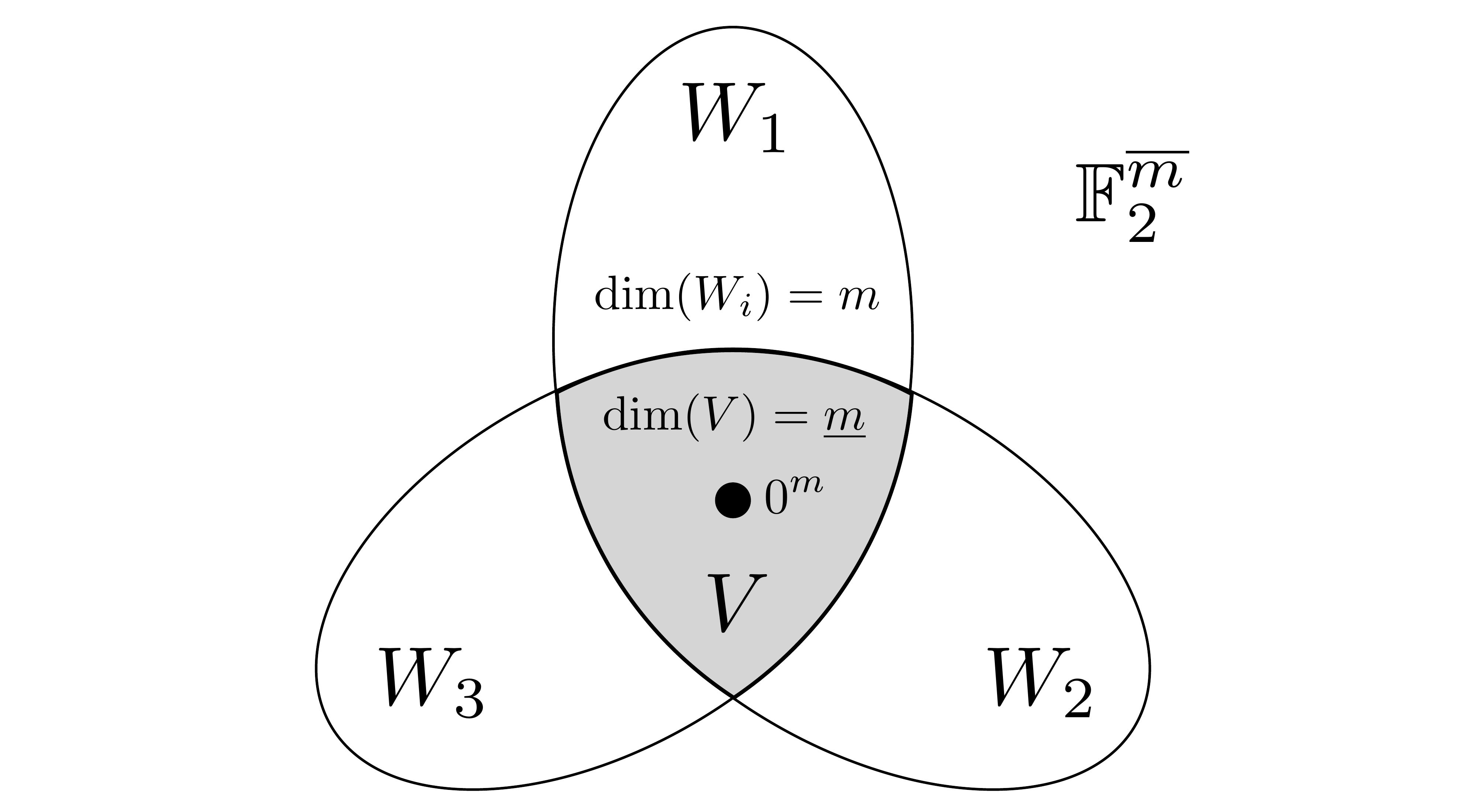}
    \caption{A $(\underline{m},m,\overline{m})$-subspace sunflower illustration with 3 petals $W_1,W_2,W_3$ and kernel $V$.}
    \label{fig:sunflower}
\end{figure}

\begin{definition}
 For any $0\le \underline{m}<m<\overline{m}$ and $b>0$, a $(\underline{m},m,\overline{m})$-subspace sunflower of size $b$ is a collection of $m$-dimensional subspaces $W_1,...,W_b\subseteq\mathbb{F}_2^{\overline{m}}$ such that there exists an $\underline{m}$-dimensional space $V\subseteq\mathbb{F}_2^{\overline{m}}$ such that $V\subseteq W_i$ for all $i$ and $W_i\cap W_j=V$ for all $i\ne j$. The $W_i$ are referred to as the petals of the sunflower and $V$ is referred to as its kernel.
\end{definition}
The terminology `sunflower' is used in \cite{sunflower_1} for sets and in \cite{sunflower_2} for subspaces; we only rely here on this existing terminology for the subspace structure of interest, but not need results related to the `sunflower lemma' \cite{sunflower_1} as we can (efficiently) construct our subspace sunflower.  

Now, consider the following algorithm for converting the ability to recover $f(0^m)$ with high accuracy for some specified $m$ and $r$ to a method of recovering $f(0^{\overline{m}})$ with higher accuracy for some $\overline{m}$ moderately higher than $m$ and the same $r$, relying on additional parameters $\underline{m} \le m$ and $b$ as follows:

{\bf BoostingAlgorithm}$(\tilde{f},\underline{m},m,\overline{m},r,\epsilon,b)$:
\begin{enumerate}

\item Pick an $(\underline{m},m,\overline{m})$-subspace sunflower of size $b$.

\item Run $L_{m,r,\epsilon}$ on the restriction of $\tilde{f}$ to each petal of the sunflower, and return the most common output (flipping a fair coin in case of a tie.).
\end{enumerate}

In order to establish that this is a useful approach, we will need a couple of preliminarly results. First of all, we will need to know that it is possible to find a suitably large subspace sunflower, as shown by the following lemma.

\begin{lemma}\label{sunflowerLem}
Let $\underline{m}<m<\overline{m}$. Then there exists a $(\underline{m},m,\overline{m})$-subspace sunflower of size $2^{\overline{m}+\underline{m}+1-2m}$.
\end{lemma}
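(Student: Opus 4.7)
The natural strategy is to fix the kernel first and then reduce to a problem about subspaces with trivial pairwise intersections via a quotient. Choose any $\underline{m}$-dimensional subspace $V \subseteq \mathbb{F}_2^{\overline{m}}$; this will serve as the kernel of the sunflower. The $m$-dimensional petals $W_i \supseteq V$ with $W_i \cap W_j = V$ for $i \neq j$ correspond, under the quotient map $\mathbb{F}_2^{\overline{m}} \to \mathbb{F}_2^{\overline{m}}/V \cong \mathbb{F}_2^{\overline{m}-\underline{m}}$, to $(m-\underline{m})$-dimensional subspaces $U_i$ whose pairwise intersections are $\{0\}$. So, setting $n := \overline{m} - \underline{m}$ and $d := m - \underline{m}$, it suffices to exhibit at least $2^{n-2d+1}$ subspaces $U_1, \ldots, U_b \subseteq \mathbb{F}_2^n$ of dimension $d$ satisfying $U_i \cap U_j = \{0\}$ for $i \neq j$.

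I would build such $U_i$ greedily. Suppose $U_1, \ldots, U_k$ have already been constructed with the required property. To produce $U_{k+1}$, pick vectors $w_1, w_2, \ldots, w_d$ one at a time, requiring at step $j$ that
\[
 w_j \notin \bigcup_{i=1}^{k} \bigl( U_i + \mathrm{span}(w_1, \ldots, w_{j-1}) \bigr).
\]
A short linear-algebra check shows that this is enough: if a nonzero $\sum_j a_j w_j$ were to lie in some $U_i$, taking $j^\star$ to be the largest index with $a_{j^\star} \neq 0$ would force $w_{j^\star} \in U_i + \mathrm{span}(w_1, \ldots, w_{j^\star - 1})$, contradicting its selection. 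In particular, the $w_j$'s are automatically linearly independent, so $U_{k+1} := \mathrm{span}(w_1, \ldots, w_d)$ has the desired dimension $d$ and trivial intersection with every previous $U_i$.

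It then remains to bound the size of the forbidden set at each step. By the inductive hypothesis (already-placed $w_1, \ldots, w_{j-1}$ satisfy the condition), $\mathrm{span}(w_1, \ldots, w_{j-1})$ intersects each $U_i$ only at $0$, so $\dim(U_i + \mathrm{span}(w_1, \ldots, w_{j-1})) = d + (j-1)$, and the union of the $k$ forbidden affine subspaces has size at most $k \cdot 2^{d+j-1} \leq k \cdot 2^{2d-1}$. Hence a valid $w_j$ exists as long as $k \cdot 2^{2d-1} < 2^n$, i.e., while $k < 2^{n-2d+1}$. Iterating, we can construct $b = 2^{n-2d+1} = 2^{\overline{m} + \underline{m} + 1 - 2m}$ petals, which is the desired count.

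The only place requiring care is the verification that the step-$j$ avoidance condition genuinely implies $U_{k+1} \cap U_i = \{0\}$ (handled by the maximal-index argument above) and that the dimensions add correctly under the inductive hypothesis; both are elementary once set up. No genuine obstacle arises because the quotient reduction strips away the common kernel $V$, and the greedy count matches the claimed bound exactly.
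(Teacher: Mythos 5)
Your proof is correct and takes essentially the same greedy approach as the paper: fix the kernel $V$, then build the petals one vector at a time while avoiding a union of subspaces whose combined size is strictly less than that of the ambient space. The quotient by $V$ is a clean cosmetic rephrasing of what the paper does directly in $\mathbb{F}_2^{\overline{m}}$ (there the spans $W_{i'}+W_i$ all contain $V$, so the counting comes out the same), and the only edge case worth spelling out is $k=0$, where your avoidance condition is vacuous and you should just take $U_1$ to be an arbitrary $d$-dimensional subspace.
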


\begin{proof}
We select an $\underline{m}$-dimensional kernel $V$ arbitrarily, and construct the petals $W_1,...,W_{2^{\overline{m}+\underline{m}+1-2m}}$ by means of the following greedy algorithm:\begin{enumerate}
\item For $1\le i\le 2^{\overline{m}+\underline{m}+1-2m}$: \begin{enumerate}
\item Set $W_i=V$
\item While $\dim(W_i)<m$:\begin{enumerate}
\item Pick $x\not\in \cup_{i'\le i} (W_{i'}+W_i)$.
\item Set $W_i=W_i+\{0,x\}$.
\end{enumerate}
\end{enumerate}
\end{enumerate}

In order to prove that this works, we need to verify a few properties of this algorithm. First of all, each of the $W_i$ is intially set equal to $V$ and then only changed by adding more elements, so it will always be the case that $V\subseteq W_i$ for all $i$. Secondly, the requirement that $x\not\in \cup_{i'\le i} (W_{i'}+W_i)$ ensures that $x\not\in W_i$ and that for every $i'<i$, $W_{i'}\cap (W_i+x)=\emptyset$. Thus it will always be the case that $W_i\cap W_j=V$ for all $i\ne j$. Finally, observe that every time we get to step 1.b.i, the dimensionality of $(W_{i'}+W_i)$ will be at most $2m-\underline{m}-1$ for all $i'<i$ and at most $m-1$ for $i'=i$. Hence there will always be some $x\in\mathbb{F}_2^{\overline{m}}$ that is not in $\cup_{i'\le i} (W_{i'}+W_i)$. So, this algorithm will never get stuck, which ensures that it finds $W_1,...,W_{2^{\overline{m}+\underline{m}+1-2m}}$ with the desired properties, and constructs the desired sunflower.
\end{proof}

Next, observe that given a subspace sunflower, the restrictions of $Z$ to the petals are mutually independent conditioned on the restriction of $Z$ to the kernel. In order to discuss the implications, we define the following.

\begin{definition}
For each $\underline{m}\le m$ and $z'\in\mathbb{F}_2^{2^{\underline{m}}}$, let $P_{e}(\underline{m},m,r,\epsilon|z')$  be the probability that $L_{m,r,\epsilon}(\tilde{f})\ne f(0^m)$ conditioned on the restriction of $Z$ to $\mathbb{F}_2^{\underline{m}} \times 0^{m-\underline{m}}$ being $z'$. 
\end{definition}
Conditioned on the restriction of $Z$ to $V$ being $z'$, attempting to determine the value of $f(0^{\overline{m}})$ based solely on the restriction of $\tilde{f}$ to $W_i$ will give the wrong answer with probability $P_e(\underline{m},m,r,\epsilon|z')$, and this is independent for all $i$. Therefore, given $\overline{m}$ reasonably larger than $2m-\underline{m}$, it suffices to have $P_e(\underline{m},m,r,\epsilon|z')$ nontrivially less than $1/2$ in order to nearly ensure that we can determine the value of $f(0^{\overline{m}})$ correctly. That brings us to the question of how likely $P_e(\underline{m},m,r,\epsilon|z')$ is to be this small.

\section{Bounding the local error probability}
\subsection{Base case on $P_e(m,r,\epsilon)$: weak decoding}
We start by establishing our base case on $P_e(m,r,\epsilon)$ that only requires symmetry in the code. We state the following lemma directly for general BMS channels and general `symmetric' codes (as defined in the lemma).  
\begin{lemma}\label{baseErrorGen}
Let $\mathcal{P}$ be a BMS channel and let $\{C_i\}_{i \ge 1}$ be a sequence of codes of rate $\{R_i\}_{i \ge 1}$ such that $\limsup_{i\to\infty} R_i< C(\mathcal{P})$ and such that a random codeword in $C_i$ has each coordinate marginally distributed as Bernoulli$(1/2)$.  Let $P_e(C_i,\mathcal{P},j)$ be the probability that the maximum likelihood decoder decodes a component $j$ of a random codeword of $C_i$ incorrectly given the observation of the other noisy components. Assume that $P_e(C_i,\mathcal{P},j)$ is independent of $j$ and denote it by $P_e(C_i,\mathcal{P})$.  
Then there exists $c>0$ such that $P_e(C_i,\mathcal{P})<1/2-c$ for all sufficiently large $i$.
\end{lemma}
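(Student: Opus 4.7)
The plan is a short entropic argument that converts the capacity gap $C(\mathcal{P})-R_i$ into a bit-level advantage, uniformly over the code. First I would reduce the claim via a reverse-Fano inequality: the pointwise bound $H_2(q)\ge 2\min(q,1-q)$ on binary entropy, applied to the posterior $q_j(Y_{-j})=\mathbb{P}(X_j=1\mid Y_{-j})$, gives $H(X_j\mid Y_{-j})\ge 2P_e(C_i,\mathcal{P},j)$. Thus it suffices to exhibit a single coordinate $j^\star$ with $H(X_{j^\star}\mid Y_{-j^\star})\le 1-2c$ for some $c=c(R_i,C(\mathcal{P}))>0$, because the independence-of-$j$ hypothesis then transfers this to the common value $P_e(C_i,\mathcal{P})$.

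Next I would establish two averaged inequalities. For a lower bound on $\sum_j I(X_j;Y^n)$, observe that since $X_j\sim\mathrm{Ber}(1/2)$ the BMS capacity identity gives $I(X_j;Y_j)=C(\mathcal{P})$, and conditioning on more data shrinks entropy, so $H(X_j\mid Y^n)\le H(X_j\mid Y_j)=1-C(\mathcal{P})$; summing yields $\sum_j I(X_j;Y^n)\ge nC(\mathcal{P})$. For an upper bound on $\sum_j I(X_j;Y_j\mid Y_{-j})$, memorylessness gives $H(Y_j\mid X_j,Y_{-j})=H(Y\mid X)$, and Han's inequality (which remains valid for differential entropy) gives $\sum_j H(Y_j\mid Y_{-j})\le H(Y^n)=I(X^n;Y^n)+nH(Y\mid X)\le nR_i+nH(Y\mid X)$, using $H(X^n)=nR_i$ since codewords are drawn uniformly from $C_i$. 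Subtracting $nH(Y\mid X)$ yields $\sum_j I(X_j;Y_j\mid Y_{-j})\le nR_i$.

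The chain rule $I(X_j;Y^n)=I(X_j;Y_{-j})+I(X_j;Y_j\mid Y_{-j})$, summed over $j$, then gives
\[\sum_j I(X_j;Y_{-j}) \;\ge\; n\bigl(C(\mathcal{P})-R_i\bigr).\]
By pigeonhole some $j^\star$ satisfies $I(X_{j^\star};Y_{-j^\star})\ge C(\mathcal{P})-R_i$, hence $H(X_{j^\star}\mid Y_{-j^\star})\le 1-(C(\mathcal{P})-R_i)$. Combining with the reverse-Fano reduction yields $P_e(C_i,\mathcal{P})\le \tfrac{1}{2}-\tfrac{1}{2}(C(\mathcal{P})-R_i)$, and choosing $c=\tfrac{1}{2}\bigl(C(\mathcal{P})-\limsup_i R_i\bigr)>0$ handles all sufficiently large $i$.

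The main delicate point is that the hypothesis only equates the $P_e(C_i,\mathcal{P},j)$ in $j$; neither $I(X_j;Y^n)$ nor $H(X_j\mid Y_{-j})$ need be constant in $j$, so a direct per-coordinate symmetrization is unavailable and pigeonhole on the summed bound is essential. The extension from BSC to general BMS is purely cosmetic: one replaces $H(\epsilon)$ by $H(Y\mid X)$ throughout, and both Han's inequality and the capacity identity $I(X;Y)=C(\mathcal{P})$ for $\mathrm{Ber}(1/2)$ input hold verbatim for continuous-output channels.
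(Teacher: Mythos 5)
Your proof is correct and achieves the same conclusion with a genuinely different decomposition than the paper's. The paper writes $H(Y^n)\le H(Y^n,X^n)\le n(H(Y_1)-(C(\mathcal{P})-R))$, decomposes $H(Y^n)=\sum_j H(Y_j\mid Y_{<j})$, pigeonholes to obtain a coordinate $j$ with $I(Y_j;Y_{<j})\ge c'/2$, and then converts this to $I(X_j;Y_{-j})\ge c'/2$ via the Markov chain $Y_j - X_j - Y_{-j}$; it finishes with the min-entropy bound $P_e\le 1-2^{-H(X_j\mid Y_{-j})}$ via Jensen. You instead decompose the \emph{symmetrized} sums $\sum_j I(X_j;Y^n)$ and $\sum_j I(X_j;Y_j\mid Y_{-j})$, lower-bound the first coordinatewise via $H(X_j\mid Y^n)\le H(X_j\mid Y_j)=1-C(\mathcal{P})$, upper-bound the second via Han's inequality and $I(X^n;Y^n)\le nR$, pigeonhole on their difference, and close with the linear reverse-Fano bound $P_e\le H(X_j\mid Y_{-j})/2$ from concavity of $H_2$. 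Both give a constant $c$ of order $C(\mathcal{P})-\limsup R_i$ (yours is $c'/2$; the paper's is $(2^{c'/2}-1)/2$, of the same order). Your version has two modest advantages: Han's inequality keeps the bound symmetric in $j$, so the pigeonholed coordinate $j^\star$ is directly the one where $H(X_{j^\star}\mid Y_{-j^\star})$ is controlled, and you only invoke the hypothesis that $P_e(C_i,\mathcal{P},j)$ is constant in $j$ at the very last step to transfer from $j^\star$ to all coordinates (the paper's chain of inequalities passes through $I(Y_1;Y_{-1})=I(Y_j;Y_{-j})$, which appeals to a distributional symmetry of the code; this is harmless for the doubly-transitive codes to which the lemma is applied, but your route stays strictly within the stated hypothesis).
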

\begin{proof}
Let $c'=C(\mathcal{P})-\limsup_{i\to\infty} R_i$. 
For all sufficiently large values of $i$, omitting the index $i$ that is implicit and denoting by $X$ a random codeword of $C_i$, $Y$ its corrupted version, and $n=\dim(C_i)$, we have  
\begin{align}
H(Y)&\le H(Y,X) = H(Y|X) + H(X) \\
&\le n( H(\mathcal{P}) + R) = n ( H(Y_1) -(C(\mathcal{P}) -R)) \\
&\le n (H(Y_1)-c'/2). \label{tighten}
\end{align}
Moreover, defining $Y_{<j}=(Y_1,\ldots, Y_{j-1})$ and $Y_{-j}=(Y_1,\ldots, Y_{j-1}, Y_{j+1},\ldots, Y_n)$, we have 
\begin{align}
H(Y)= \sum_{j=1}^{n} H(Y_j|Y_{<j}),\end{align}
 and there must exist $j$ for which 
 \begin{align}
H(Y_j|Y_{<j}) \le H(Y_1)-c'/2  =  H(Y_j)-c'/2 .
 \end{align}
Furthermore, 
 \begin{align}
 H(X_1)-H(X_1 |Y_2,\ldots,Y_n)&=I(X_1;Y_2,\ldots, Y_n)\\
 &\ge I(Y_1;Y_2,\ldots, Y_n)= I(Y_j;Y_{-j}) \label{condi}\\
 &\ge I(Y_j;Y_{<j}) \ge c'/2,
\end{align}
 where the inequality in \eqref{condi} uses $H(Y_2,\ldots,Y_n|X_1,Y_1)= H(Y_2,\ldots,Y_n|X_1)$. Thus 
 \begin{align}
 H(X_1 |Y_2,\ldots, Y_n)&\le 1-c'/2.
\end{align}
Finally, since for a binary random variable $U$, $1-\max_{u}p_U(u)=1-2^{-H_\infty(U)}\le 1-2^{-H(U)}$, where $H_\infty(U)$ is the min-entropy, we have by Jensen's inequality (to handle the conditional entropy), 
\begin{align}
P_e(C,\mathcal{P})\le 1-2^{-H(X_1 |Y_2,\ldots, Y_n)} 
\end{align}
and we can take any $c<(2^{c'/2}-1)/2$ (or $c<(2^{c'}-1)/2$ by tightening \eqref{tighten}).
\end{proof}

The symmetry condition holds for any doubly transitive codes such as RM codes \cite{Kudekar16STOC}, implying the following. 
\begin{corollary}\label{baseError}
Let $\epsilon\in (0,1/2)$, and $\{m_i\}_{i \ge 1}$ and $\{r_i\}_{i \ge 1}$ be sequences of positive integers such that $r_i \le m_i$ for all $i$ and $\lim_{i\to\infty} m_i=\infty$ and $\limsup_{i\to\infty} {m_i \choose \le r_i}2^{-m_i}< 1-H(\epsilon)$. Then there exists $c>0$ such that $P_e(m_i,r_i,\epsilon)<1/2-c$, for all sufficiently large $i$.
\end{corollary}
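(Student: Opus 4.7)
The plan is to derive Corollary \ref{baseError} as an immediate specialization of Lemma \ref{baseErrorGen}, applied with $\mathcal{P} = \mathrm{BSC}(\epsilon)$ and $C_i = RM(m_i, r_i)$. Four hypotheses must be verified: (i) $\mathcal{P}$ is a BMS channel, (ii) the rate condition $\limsup R_i < C(\mathcal{P})$ holds, (iii) each coordinate of a uniformly random codeword of $C_i$ is marginally distributed as $\mathrm{Bernoulli}(1/2)$, and (iv) the single-coordinate conditional error probability $P_e(C_i,\mathcal{P},j)$ does not depend on $j$. Items (i) and (ii) are immediate from the corollary's own hypotheses, since $\mathrm{BSC}(\epsilon)$ is BMS with capacity $1-H(\epsilon)$.

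For (iii), I would use that $RM(m,r)$ contains the all-ones vector (the evaluation of the degree-$0$ monomial $1$), so the map $f \mapsto f+1$ is a fixed-point-free involution on $RM(m,r)$ that flips every coordinate simultaneously. Pairing codewords through this involution shows that under the uniform distribution on $RM(m_i,r_i)$, each evaluation $f(x)$ is distributed as $\mathrm{Bernoulli}(1/2)$.

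For (iv), I would invoke the affine invariance of RM codes: the affine group $\mathrm{AGL}(m,\mathbb{F}_2)$ acts on $\mathbb{F}_2^m$ by $x \mapsto Ax+b$ and, since composition with an affine map preserves polynomial degree, it permutes the coordinates of $\mathbb{F}_2^{2^m}$ in a way that preserves $RM(m,r)$. This action is in particular transitive (even doubly transitive) on coordinates, as noted in \cite{Kudekar16STOC}. Because both the code and the memoryless BSC are invariant under any such coordinate permutation, relabelling turns the problem of decoding coordinate $j$ from the others into the problem of decoding coordinate $j'$ from the others, so the maximum-likelihood error probability is coordinate-independent.

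With (i)--(iv) in place, Lemma \ref{baseErrorGen} yields a constant $c > 0$ such that $P_e(C_i,\mathcal{P}) < 1/2 - c$ for all sufficiently large $i$. Under the identifications $C_i = RM(m_i,r_i)$, $\mathcal{P} = \mathrm{BSC}(\epsilon)$, and $P_e(C_i,\mathcal{P}) = P_e(m_i,r_i,\epsilon)$ (the latter matching the definition of $L_{m,r,\epsilon}$, which ignores the noisy observation at $0^m$ and returns the ML value from the remaining noisy bits), this is exactly the claimed bound. The proof is essentially a verification of hypotheses, so no real obstacle arises; the only point worth flagging is that the invariance in (iv) must apply to the \emph{exit-function} version of the decoder (coordinate $j$ decoded from the \emph{other} noisy bits), which follows because the AGL$(m,\mathbb{F}_2)$ action permutes \emph{all} coordinates and the BSC treats them symmetrically.
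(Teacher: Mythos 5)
Your proof is correct and takes the same route as the paper: the corollary is a direct specialization of Lemma \ref{baseErrorGen} to $\mathcal{P}=\mathrm{BSC}(\epsilon)$ and $C_i=RM(m_i,r_i)$, with the symmetry hypothesis supplied by the doubly transitive (affine-invariant) structure of RM codes as cited from \cite{Kudekar16STOC}. You spell out the hypothesis checks (Bernoulli$(1/2)$ marginals via the all-ones codeword, coordinate-independence via AGL invariance) a bit more explicitly than the paper, but the argument is the same.
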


\subsection{Fourier analysis of $P_e(\underline{m},m,r,\epsilon | z')$}
From now on, let $k=m-\underline{m}$ and consider $\epsilon, r$ fixed. In order to analyze the behavior of the function $Q_{\underline{m},m}: \mathbb{F}_2^{\underline{m}} \ni z' \mapsto P_e(\underline{m},m,r,\epsilon|z')$, we are going to use Fourier analysis. Given a function $\mathcal{G}: \{0,1\}^{2^{\underline{m}}}\rightarrow\mathbb{R}$, let $\langle \mathcal{G}\rangle=\sum_{z'\in \{0,1\}^{2^{\underline{m}}}}\epsilon^{|z'|}(1-\epsilon)^{2^{\underline{m}}-|z'|}\mathcal{G}(z')$ and given two functions $\mathcal{G},\mathcal{H}: \{0,1\}^{2^{\underline{m}}}\rightarrow\mathbb{R}$, let 
\[ \langle \mathcal{G}, \mathcal{H}\rangle=\sum_{z'\in \{0,1\}^{2^{\underline{m}}}}\epsilon^{|z'|}(1-\epsilon)^{2^{\underline{m}}-|z'|}\mathcal{G}(z')\mathcal{H}(z').\]
Now, for each $S\subseteq\mathbb{F}_2^{\underline{m}}$, let $\mathcal{X}_S:\{0,1\}^{2^{\underline{m}}}\rightarrow\mathbb{R}$ be the function such that
\[\mathcal{X}_S(z')=\left(\frac{\epsilon}{1-\epsilon}\right)^{|S|/2}\left(-\frac{1-\epsilon}{\epsilon}\right)^{|\{x\in S:z'_x=1\}|}\]
for all $z'$. These form the orthonormal Fourier-Walsh basis for the biased measure on the space of functions from $\{0,1\}^{2^{\underline{m}}}$ to $\mathbb{R}$, so 
\[Q_{\underline{m},m}=\sum_{S\subseteq\mathbb{F}_2^{\underline{m}}} \langle Q_{\underline{m},m}, \mathcal{X}_S\rangle\mathcal{X}_S.\]
Also, since $L_{m,r,\epsilon}(\tilde{f})$ is independent of $\tilde{f}(0^m)$, $P_e(\underline{m},m,r,\epsilon|z')$ is independent of $z'_0$ and $\langle Q_{\underline{m},m}, \mathcal{X}_S\rangle=0$ whenever $0^{\underline{m}}\in S$. Now, for each $x\in\mathbb{F}_2^{\underline{m}}$, let $e_m(x)$ be the element of $\mathbb{F}_2^m$ having first $\underline{m}$ coordinates  given by $x$ and remaining coordinates set to $0$. Next, observe that for all $\underline{m}<m$,
\[ Q_{\underline{m},m}=\sum_{S\subseteq\mathbb{F}_2^{\underline{m}}} \langle Q_{m,m}, \mathcal{X}_{e_m(S)}\rangle\mathcal{X}_{S}\]
because for any $S'$ that is not expressible as $\mathcal{X}_{e_m(S)}$ for some $S\subseteq \mathbb{F}_2^{\underline{m}}$, there exists $x\in S'$ such that $Z_x$ is still random conditioned on the restriction of $Z$ to $\mathbb{F}_2^{2^{\underline{m}}}$ being $z'$, so its contribution to $Q_{\underline{m},m}$ cancels itself out. Next, observe that
\begin{align}
\mathbb{P}[P_e(\underline{m},m,r,\epsilon|z')\ge 1/3]
&\le \mathbb{E}[9 P^2_{\underline{m},m,r,\epsilon}(z')]\\
&= 9\langle Q_{\underline{m},m}, Q_{\underline{m},m}\rangle\\
&=9\sum_{S\subseteq \mathbb{F}_2^{\underline{m}}} \langle Q_{m,m},\mathcal{X}_{e_m(S)}\rangle^2
\end{align}
Thus in order to bound the probability that $P_e(\underline{m},m,r,\epsilon|z')\ge 1/3$, it suffices to show that the products $\langle Q_{m,m},\mathcal{X}_{e_m(S)}\rangle$ are sufficiently small.

Next, observe that $Q_{m,m}$ is symmetric under affine transformations. So, if $S'$ is an affine transformation of $S$ (more specifically a linear transformation) then it must be the case that $\langle Q_{m,m},\mathcal{X}_{S}\rangle=\langle Q_{m,m},\mathcal{X}_{S'}\rangle$. Now, for each $S\subseteq \mathbb{F}_2^{m}$, let $\overline{S}$ be the collection of affine transformations of $S$ and $\mathcal{X}_{\overline{S}}=\sum_{S'\in\overline{S}}\mathcal{X}_{S'}$. Also, let $\mathbb{S}$ be a maximal list of subsets of $\mathbb{F}_2^{m}$ that are not equivalent under affine transformations. 
Then, for each $S\subseteq\mathbb{F}_2^{m}$, let $\dim(S)$ denote the dimension of the subspace of $\mathbb{F}_2^{m}$ spanned by $S$. Given $S\subseteq\mathbb{F}_2^{m}$, there exist $\dim(S)$ nonzero linearly independent points $x_1,...,x_{\dim(S)}\in S$. For any linear transformation $\pi$, $\pi(S)$ is contained in $\mathbb{F}_2^{\underline{m}}$ if and only if $\pi(x_i)\in \mathbb{F}_2^{\underline{m}}$ for all $i$. Furthermore, for a random linear transformation $\pi$ and conditioned on any fixed values of $\pi(x_1),...,\pi(x_{i-1})$, the probability distribution of $\pi(x_i)$ is the uniform distribution on the points that are not contained in $Span(\pi(x_1),...,\pi(x_{i-1}))$. Hence $\mathbb{P}[\pi(S)\subseteq\mathbb{F}_2^{\underline{m}}]\le (2^{m-\underline{m}})^{\dim(S)}=2^{-k\cdot \dim(S)}$. Therefore  

\begin{align}
\sum_{S\subseteq \mathbb{F}_2^{\underline{m}}} \langle Q_{m,m},\mathcal{X}_{e_m(S)}\rangle^2
&\le \sum_{S\in\mathbb{S}} |\overline{S}|2^{-k\cdot \dim(S)}\langle Q_{m,m},\mathcal{X}_{S}\rangle^2\\
&= \sum_{S\in\mathbb{S}} \frac{1}{|\overline{S}|}2^{-k\cdot \dim(S)}\langle Q_{m,m}, \mathcal{X}_{\overline{S}}\rangle^2.
\end{align}

\subsection{Vanishing bit-error probability via single sunflower-boosting and $L_2$-norm}
Recall that $k=m-\underline{m}$. 
\begin{lemma}\label{boostLemma1}
Let $0\le \underline{m}< m$ and $r\ge 0$ be integers and $\epsilon\in(0,1/2)$. Then when $z'\in\{0,1\}^{2^{\underline{m}}}$ is selected by independently setting each of its elements to $1$ with probability $\epsilon$ and $0$ otherwise,

\[\mathbb{P}[P_e(\underline{m},m,r,\epsilon|z')\ge P_e(m,r,\epsilon)/2+1/4]\le \frac{2^{2-k}}{(1/2-P_e(m,r,\epsilon))^2}.\]
\end{lemma}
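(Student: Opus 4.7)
The plan is to apply Chebyshev's inequality to $Q_{\underline{m},m}$ and then estimate its variance using the Fourier machinery established in the preceding subsection. Since $\mathbb{E}_{z'}[Q_{\underline{m},m}(z')]=P_e(m,r,\epsilon)$ by the tower property, the threshold $P_e(m,r,\epsilon)/2+1/4$ exceeds the mean by exactly $t:=(1/2-P_e(m,r,\epsilon))/2$, so Chebyshev's inequality gives
\[
\mathbb{P}\!\left[Q_{\underline{m},m}(z') \ge P_e(m,r,\epsilon)/2 + 1/4\right] \le \frac{\Var(Q_{\underline{m},m})}{t^2} = \frac{4\,\Var(Q_{\underline{m},m})}{(1/2-P_e(m,r,\epsilon))^2}.
\]
Hence it suffices to prove $\Var(Q_{\underline{m},m}) \le 2^{-k}$.

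For the variance bound, I would apply Parseval in the biased Fourier basis together with the identity $\langle Q_{\underline{m},m},\mathcal{X}_S\rangle = \langle Q_{m,m},\mathcal{X}_{e_m(S)}\rangle$ derived above, obtaining
\[
\Var(Q_{\underline{m},m}) = \sum_{\emptyset \ne S \subseteq \mathbb{F}_2^{\underline{m}}} \langle Q_{m,m}, \mathcal{X}_{e_m(S)}\rangle^2.
\]
I would then regroup by affine-orbit and apply the bound $\mathbb{P}[\pi(S)\subseteq \mathbb{F}_2^{\underline{m}}] \le 2^{-k\dim(S)}$ already established. Since every nonempty $S$ satisfies $\dim(S)\ge 1$, this yields
\[
\Var(Q_{\underline{m},m}) \le 2^{-k} \sum_{\emptyset \ne S \subseteq \mathbb{F}_2^{m}} \langle Q_{m,m}, \mathcal{X}_S\rangle^2 \le 2^{-k}\,\mathbb{E}\!\left[Q_{m,m}^2\right].
\]

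The last ingredient is that $Q_{m,m}(z) \in \{0,1/2,1\}$ pointwise (the MAP decoder is either deterministically correct, deterministically wrong, or facing a tie), so $Q_{m,m}^2 \le Q_{m,m}$ and hence $\mathbb{E}[Q_{m,m}^2] \le \mathbb{E}[Q_{m,m}] = P_e(m,r,\epsilon) \le 1$. Combining the three estimates gives $\Var(Q_{\underline{m},m}) \le 2^{-k}$, which together with Chebyshev completes the proof.

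The main conceptual step—the affine-invariance fact that the mass of any nonempty Fourier coefficient's orbit landing inside $\mathbb{F}_2^{\underline{m}}$ shrinks by a factor of at least $2^{-k}$—is already in hand from the preceding subsection, so no genuinely new difficulty arises. The only point needing care is to align the Chebyshev threshold correctly so that the $4$ produced by squaring $t=(1/2-P_e)/2$ absorbs cleanly with the $2^{-k}$ from the Fourier bound to yield the stated constant $2^{2-k}$; the crude pointwise inequality $Q_{m,m}^2 \le Q_{m,m}$ is what keeps the argument from losing any additional factor.
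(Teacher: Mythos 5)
Your proposal follows essentially the same route as the paper's proof: Chebyshev to reduce to a variance bound, Parseval plus the identity $\langle Q_{\underline m,m},\mathcal X_S\rangle=\langle Q_{m,m},\mathcal X_{e_m(S)}\rangle$, regrouping by the affine orbit of $S$ together with the counting estimate $\mathbb P[\pi(S)\subseteq\mathbb F_2^{\underline m}]\le 2^{-k\dim(S)}$, and finally $\mathbb E[Q_{m,m}^2]\le 1$. The constants track exactly the same way, and your observation $\mathbb E[Q_{m,m}^2]\le\mathbb E[Q_{m,m}]=P_e$ is a harmless refinement of the cruder $\mathbb E[Q_{m,m}^2]\le 1$ the paper uses.

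There is one genuine, if small, gap. You write ``since every nonempty $S$ satisfies $\dim(S)\ge 1$, this yields $\Var(Q_{\underline m,m})\le 2^{-k}\sum_{\emptyset\ne S}\langle Q_{m,m},\mathcal X_S\rangle^2$.'' That premise is false: the singleton $S=\{0^{\underline m}\}$ is nonempty but spans the trivial subspace, so $\dim(S)=0$ and the orbit bound gives only the useless $2^{-k\cdot 0}=1$. The chain of inequalities as written therefore does not go through. The correct reason the $S=\{0^{\underline m}\}$ term does not spoil the bound is not dimensional, but that its Fourier coefficient is \emph{zero}: $L_{m,r,\epsilon}$ by construction ignores $\tilde f(0^m)$, so $Q_{m,m}$ does not depend on $z'_{0^m}$ and $\langle Q_{m,m},\mathcal X_S\rangle=0$ whenever $0^m\in S$. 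This is precisely what the paper's footnote about ``not using the noisy version of the target bit'' is guarding against, and the paper's proof explicitly restricts its orbit sum to $S\ne\emptyset$, $0^{\underline m}\notin S$ before invoking $\dim(S)\ge 1$. To make your argument airtight, split off the orbit containing $\{0^{\underline m}\}$ (and any orbit of sets containing $0$), note those coefficients vanish, and apply $\dim(S)\ge 1$ only to the remaining orbits.
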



\begin{proof}
First, let $c=1/2-P_e(m,r,\epsilon)$, and observe that 
\begin{align}
 & \mathbb{P}[P_e(\underline{m},m,r,\epsilon|z')\ge 1/2-c/2]\\
 &\le \mathbb{E}[(P_e(\underline{m},m,r,\epsilon|z')-(1/2-c))^2]] (4/c^2)\\ 
 &=\frac{4}{c^2}\sum_{S\subseteq\mathbb{F}_2^{\underline{m}}:S\ne\emptyset} \langle Q_{m,m}, \mathcal{X}_{e_m(S)}\rangle^2\\
&\le \frac{4}{c^2} \sum_{S\in\mathbb{S}:S\ne\emptyset} \frac{1}{|\overline{S}|}2^{-k\cdot \dim(S)}\langle Q_{m,m},\mathcal{X}_{\overline{S}}\rangle^2\\
&= \frac{4}{c^2} \sum_{S\in\mathbb{S}:S\ne\emptyset,0^{\underline{m}}\not\in S} \frac{1}{|\overline{S}|}2^{-k\cdot \dim(S)}\langle Q_{m,m}, \mathcal{X}_{\overline{S}}\rangle^2\\
&\le \frac{4}{c^2} \sum_{S\in\mathbb{S}:S\ne\emptyset,0^{\underline{m}}\not\in S} \frac{1}{|\overline{S}|}2^{-k}\langle Q_{m,m},\mathcal{X}_{\overline{S}}\rangle^2\\
&\le \frac{2^{2-k}}{c^2} \sum_{S\in\mathbb{S}} \frac{1}{|\overline{S}|}\langle Q_{m,m},\mathcal{X}_{\overline{S}}\rangle^2\\
&= \frac{2^{2-k}}{c^2}\mathbb{E}[P_e^2(m,m,r,\epsilon|z')]\\
&\le \frac{2^{2-k}}{c^2}
\end{align}
as desired.
\end{proof}

We thus get the following, which implies the bound from \cite{Reeves21}, i.e., a vanishing bit-error probability, with a faster decay in terms of $m$. 
\begin{lemma} \label{errorBound1}
Let $\epsilon\in (0,1/2)$, and $m_1,m_2,m_3,...$ and $r_1,r_2,...$ be sequences of positive integers such that $\lim_{i\to\infty} m_i=\infty$ and $\limsup_{i\to\infty} {m_i \choose \le r_i}2^{-m_i}< 1-H(\epsilon)$. Then $\lim_{i\to\infty} P_e(m_i,r_i,\epsilon)=O(2^{-\sqrt[3]{m_i}})$.
\end{lemma}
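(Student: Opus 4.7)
The plan is to reduce the statement to a single application of Lemma \ref{boostLemma1}, with the base case taken at a slightly smaller block length. Set $k_i = \lceil\sqrt[3]{m_i}\,\rceil$, $\Delta_i = 2k_i$, and $m_i' = m_i - \Delta_i$. The first (routine) step is to verify that the shifted sequence $(m_i', r_i)$ still satisfies $\limsup_i \binom{m_i'}{\le r_i}/2^{m_i'} < 1-H(\epsilon)$. This follows from the identity
\[
\binom{m-1}{\le r}\!\Big/2^{m-1} \;-\; \binom{m}{\le r}\!\Big/2^{m} \;=\; \tfrac{1}{2}\binom{m-1}{r}\!\Big/2^{m-1} \;=\; O(1/\sqrt{m}),
\]
so cumulating over $\Delta_i$ unit steps shifts the rate by at most $O(\Delta_i/\sqrt{m_i}) = O(m_i^{-1/6}) = o(1)$. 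Corollary \ref{baseError} then supplies a constant $c>0$ with $P_e(m_i', r_i, \epsilon) < 1/2 - c$ for all large $i$.

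Next, by Lemma \ref{sunflowerLem} there exists an $(m_i'-k_i,\, m_i',\, m_i)$-subspace sunflower of size $b = 2^{m_i + (m_i'-k_i) + 1 - 2m_i'} = 2^{k_i+1}$. I apply \textbf{BoostingAlgorithm}$(\tilde f,\, m_i'-k_i,\, m_i',\, m_i,\, r_i,\, \epsilon,\, b)$; since maximum-likelihood decoding is optimal, $P_e(m_i, r_i, \epsilon)$ is bounded by the error probability of this algorithm. Conditioning on $z' = Z|_V$, Lemma \ref{boostLemma1} (applied with $m = m_i'$ and $\underline{m} = m_i' - k_i$) combined with $1/2 - P_e(m_i', r_i, \epsilon) \ge c$ yields
\[
\mathbb{P}\!\left[\,P_e(m_i'-k_i,\, m_i',\, r_i,\, \epsilon \mid z') \;\ge\; \tfrac{1}{2}-\tfrac{c}{2}\,\right] \;\le\; \frac{4\cdot 2^{-k_i}}{c^2} \;=\; O\!\bigl(2^{-\sqrt[3]{m_i}}\bigr).
\]
On the complementary ``good $z'$'' event, the $b$ petal decodings are conditionally independent Bernoulli trials with error at most $1/2 - c/2$, so by Hoeffding's inequality the majority vote errs with probability at most $\exp(-bc^2/2) = \exp(-c^2 \cdot 2^{k_i})$, which is $o(2^{-\sqrt[3]{m_i}})$. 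Summing the two contributions gives $P_e(m_i, r_i, \epsilon) = O(2^{-\sqrt[3]{m_i}})$.

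The proof is essentially a direct orchestration of Corollary \ref{baseError} with Lemmas \ref{sunflowerLem} and \ref{boostLemma1}. The main (mild) obstacle is the rate-preservation check in the first paragraph: it forces the base-to-target gap $\Delta_i$ to be $o(\sqrt{m_i})$, which in turn caps how much the block length can advance in a single boosting round. The choice $k_i = \lceil\sqrt[3]{m_i}\,\rceil$ sits comfortably below this cap while still being large enough that both the Fourier-based ``bad-$z'$'' bound $O(2^{-k_i})$ and the Hoeffding concentration bound $\exp(-2^{k_i}c^2)$ match the claimed decay. A single round of boosting suffices here; the stronger bound $2^{-\Omega(\sqrt{m}\log m)}$ stated later in the paper requires iterated boosting together with the $L_4$ non-concentration estimates developed in subsequent sections.
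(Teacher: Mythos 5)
Your proof is correct and follows essentially the same route as the paper: a single round of sunflower-boosting with intermediate block length $m_i' \approx m_i - 2\sqrt[3]{m_i}$, kernel dimension $\approx m_i - 3\sqrt[3]{m_i}$, Corollary \ref{baseError} for the base case, Lemma \ref{boostLemma1} for the probability that the kernel noise is bad, and a concentration bound for the majority vote over the $2^{\Theta(\sqrt[3]{m_i})}$ petals. Your explicit telescoping check that the shift by $\Delta_i = o(\sqrt{m_i})$ preserves the rate condition is a bit more careful than the paper's terse assertion, and using Hoeffding in place of the $(4p(1-p))^{b/2}$ bound is an inessential variation.
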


\begin{proof}
First, let $m'_i=m_i-2\lfloor \sqrt[3]{m_i}\rfloor$ for each $i$ and observe that $\limsup_{i\to\infty} {m'_i \choose \le r}2^{-m_i}$ is equal to $\limsup_{i\to\infty} {m_i \choose \le r_i}2^{-m_i}$. Now, let $\underline{m_i}=m_i-3\lfloor\sqrt[3]{m_i}\rfloor$, for each $i$ and consider trying to determine the value of $f(0^{m_i})$ from $\tilde{f}$ using the following boosting algorithm: 
\begin{enumerate}
\item Pick a $(\underline{m_i},m'_i,m_i)$-subspace sunflower of size $2^{m_i+\underline{m_i}-2m'_i+1}$ independently of $\tilde{f}$.

\item Run $L_{m'_i,r_i,\epsilon}$ on the restriction of $\tilde{f}$ to each petal of the sunflower, and return the most common output.
\end{enumerate}

By lemma \ref{sunflowerLem} it is possible to find such a subspace sunflower so this algorithm is possible to carry out. Also, if we let $z'$ be the restriction of $Z$ to the kernel then conditioned on a fixed value of $z'$ each use of $L$ in the final step of the algorithm independently returns a value that differs from $f(0^{m_i})$ with probability $P_e(\underline{m_i},m'_i,r_i,\epsilon|z')$. 
By corollary \ref{baseError}, there exists $c>0$ such that $P_e(m'_i,r_i,\epsilon)<1/2-c$ for all sufficiently large $i$. Then the previous lemma implies that
\[\mathbb{P}[P_e(\underline{m_i},m'_i,r_i,\epsilon|z')\ge 1/2-c/2]\le 2^{2-\lfloor\sqrt[3]{m_i}\rfloor}/c^2\]
for any such $i$. Conditioned on any fixed value of $z'$ such that $P_e(\underline{m_i},m'_i,r_i,\epsilon|z')< 1/2-c/2$, this algorithm returns the wrong answer with probability at most $(4(1/2-c)(1/2+c))^{2^{m_i+\underline{m_1}-2m'_i}}$, so its overall probability of returning a value different from $f(0^{m_i})$ is at most $2^{2-\lfloor\sqrt[3]{m_i}\rfloor}/c^2+(1-4c^2)^{2^{\lfloor\sqrt[3]{m_i}\rfloor}}=O(2^{-\sqrt[3]{m_i}})$. Furthermore, this algorithm never uses the value of $\tilde{f}(0^{m_i})$ and $L$ computes $f(0^m)$ at least as accurately as any other algorithm that ignores the value of $\tilde{f}(0^m)$, so it also has an error rate of $O(2^{-\sqrt[3]{m_i}})$, as desired.
\end{proof}

\subsection{Non-concentration on low-dimensional subsets via $L_4$-norm.}
At this point, we know that $P_e(m_i,r_i,\epsilon)$ goes to $0$ as $i$ increases. However, in order to prove that we can recover $f$ with high probability we will need much tighter bounds on $P_e(m_i,r_i,\epsilon)$, and in order to get those we will need to bound the contribution to the Fourier transform of $Q_{m,m}$ of terms with low dimensional support instead of just arguing that the support of every nonconstant term has dimension at least $1$. Cauchy-Schwarz would not establish nontrivial bounds on the Fourier terms; however, we can bound $\langle Q_{m,m},\mathcal{X}_{\overline{S}}\rangle$ using the average of $\mathcal{X}_{\overline{S}}$ over the size-$P_e(m,r,\epsilon)$ fraction of the input space over which it has the highest value, and that in turn can be bounded using the expected value of $\mathcal{X}^b_{\overline{S}}$ for some $b>2$. More formally, for any $S\subseteq\mathbb{F}_2^{m}$, it must be the case that
\begin{align}
&\langle Q_{m,m},\mathcal{X}_{\overline{S}}\rangle\\
&=\sum_{z'\in \{0,1\}^{2^{m}}}\epsilon^{|z'|}(1-\epsilon)^{2^{m}-|z'|}Q_{m,m}(z')\mathcal{X}_{\overline{S}}(z')\\
&\le \left(  \sum_{z'\in \{0,1\}^{2^{m}}}\epsilon^{|z'|}(1-\epsilon)^{2^{m}-|z'|}Q^{4/3}_{m,m}(z')\right)^{3/4}\sqrt[4]{ \sum_{z'\in \{0,1\}^{2^{m}}}\epsilon^{|z'|}(1-\epsilon)^{2^{m}-|z'|}\mathcal{X}^4_{\overline{S}}(z')}\\
&=\langle Q^{4/3}_{m,m}\rangle^{3/4}\sqrt[4]{\langle \mathcal{X}^4_{\overline{S}}\rangle}\\
&\le \langle Q_{m,m}\rangle^{3/4}\sqrt[4]{\langle\mathcal{X}^4_{\overline{S}}\rangle}\\
&= P^{3/4}_{e}(m,r,\epsilon)\sqrt[4]{\langle\mathcal{X}^4_{\overline{S}}\rangle}.
\end{align}

At this point, we need bounds on $\langle\mathcal{X}^4_{\overline{S}}\rangle$, such as the following.

\begin{lemma}
Let $m$ be a positive integer, $S\subseteq\mathbb{F}_2^m$, and $\epsilon\in(0,1/2)$. Then $\langle\mathcal{X}^4_{\overline{S}}\rangle\le 2^{2dm+8d^2}(1/\epsilon(1-\epsilon))^{2^d}$, where $d=\dim(S)$
\end{lemma}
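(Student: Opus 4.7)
The plan is to expand the fourth moment as a sum over $4$-tuples in the orbit and bound each term using moments of $\mathcal{X}_{\{x\}}$. Since $\mathcal{X}_{S'}(z)=\prod_{x\in S'}\mathcal{X}_{\{x\}}(z)$ and the family $\{\mathcal{X}_{\{x\}}\}_{x\in\mathbb{F}_2^m}$ is independent under the $\epsilon$-biased product measure, I would first write
\[
\langle \mathcal{X}_{S_1}\mathcal{X}_{S_2}\mathcal{X}_{S_3}\mathcal{X}_{S_4}\rangle \;=\; \prod_{x \in S_1\cup S_2\cup S_3\cup S_4} \mu_{n_x},\qquad \mu_n := \langle \mathcal{X}_{\{x\}}^n\rangle,
\]
where $n_x$ is the multiplicity of $x$ in the multiset $(S_1,\ldots,S_4)$. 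A direct computation gives $\mu_0=\mu_2=1$, $\mu_1=0$, $|\mu_3|\le(\epsilon(1-\epsilon))^{-1/2}$ and $|\mu_4|\le(\epsilon(1-\epsilon))^{-1}$, so the crucial consequence of $\mu_1=0$ is that only tuples in which every element of $\cup_i S_i$ lies in at least two of the $S_i$ contribute.

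For such a surviving tuple, writing $U=\cup_i S_i$ and using $\sum_x n_x=4|S|$ with $n_x\in\{2,3,4\}$, one gets $|U|\le 2|S|$ together with
\[
\big|\langle \mathcal{X}_{S_1}\mathcal{X}_{S_2}\mathcal{X}_{S_3}\mathcal{X}_{S_4}\rangle\big| \;\le\; (\epsilon(1-\epsilon))^{-(2|S|-|U|)} \;\le\; (\epsilon(1-\epsilon))^{-|S|} \;\le\; (\epsilon(1-\epsilon))^{-2^d},
\]
which delivers the $(1/\epsilon(1-\epsilon))^{2^d}$ factor of the lemma.

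The hard part is bounding the number of surviving $4$-tuples by $2^{2dm+O(d^2)}$. I would first establish $|\overline{S}|\le 2^{dm+O(d^2)}$: since linear maps preserve dimension, every $S'\in\overline{S}$ spans a $d$-dimensional linear subspace, and one enumerates $S'$ by choosing such a subspace ($\binom{m}{d}_2\le 2^{d(m-d)+O(1)}$ options) and then picking an orbit representative inside it ($|GL_d(\mathbb{F}_2)|\le 2^{d^2}$ options). The dominant contribution to the $4$-sum then comes from \emph{paired} tuples whose multiset has the form $\{T_1,T_1,T_2,T_2\}$ for some $T_1,T_2\in\overline{S}$; such tuples automatically satisfy the multiplicity constraint, and their number is at most $3|\overline{S}|^2\le 2^{2dm+O(d^2)}$. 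For the remaining surviving tuples, some $n_x=3$ must occur, and the corresponding $|\mu_3|$-factors contribute a $(\epsilon(1-\epsilon))^{1/2}$-saving per such $x$ that absorbs the additional combinatorial freedom; a case analysis on the equality pattern among $(S_1,\ldots,S_4)$ disposes of these subdominant contributions.

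Combining the per-term bound with the tuple count yields $\langle \mathcal{X}_{\overline{S}}^4\rangle \le 2^{2dm+O(d^2)}(1/\epsilon(1-\epsilon))^{2^d}$, and tracking constants carefully upgrades $O(d^2)$ to $8d^2$. An equivalent higher-level route is to invoke a biased-hypercube Bonami--Beckner-type hypercontractivity inequality on $\mathcal{X}_{\overline{S}}$, which has degree $|S|\le 2^d$ in the Fourier--Walsh basis and satisfies $\|\mathcal{X}_{\overline{S}}\|_2^2=|\overline{S}|$ by orthonormality, giving $\|\mathcal{X}_{\overline{S}}\|_4^4\le K(\epsilon)^{|S|}|\overline{S}|^2$ with $K(\epsilon)\asymp 1/\epsilon(1-\epsilon)$.
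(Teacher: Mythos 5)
Your expansion into 4-tuples and the vanishing of terms where some $x$ occurs exactly once is exactly how the paper starts, and the per-term bound $(\epsilon(1-\epsilon))^{-|S|}\le(\epsilon(1-\epsilon))^{-2^d}$ is essentially the paper's as well. The divergence, and the gap, is in the count of surviving 4-tuples.

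Your proposed decomposition of the surviving tuples into ``paired'' tuples $\{T_1,T_1,T_2,T_2\}$ plus tuples in which some $n_x=3$ does not cover everything. There are non-paired surviving 4-tuples in which every multiplicity is exactly $2$. For instance, with $d=2$ take pairwise-distinct points $a,b,c,d$ and let $S_1=\{a,b\}$, $S_2=\{b,c\}$, $S_3=\{c,d\}$, $S_4=\{d,a\}$: the four sets are distinct, the multiset is not of the form $\{T_1,T_1,T_2,T_2\}$, and yet every point in $S_1\cup\cdots\cup S_4$ has multiplicity exactly $2$, so no $\mu_3$ factor appears and there is no $(\epsilon(1-\epsilon))^{1/2}$ ``saving'' to absorb the extra combinatorial freedom. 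Your case analysis therefore cannot be made to close as described. The paper avoids this by a structural observation that applies uniformly to every surviving tuple: since every point in a linearly independent spanning set of $S_1\cup\cdots\cup S_4$ lies in at least two of the $S_i$, and each $S_i$ spans only a $d$-dimensional space, we have $\dim(S_1\cup\cdots\cup S_4)\le 2d$. Hence all four $S_i$ live inside one of the at most $2^{2dm}$ subspaces of dimension $2d$, and inside any such subspace there are at most $2^{2d^2}$ linear images of $S$, giving at most $2^{2dm}\cdot(2^{2d^2})^4=2^{2dm+8d^2}$ surviving tuples. That containment argument, not an equality-pattern case analysis, is the key missing idea.

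On the hypercontractivity aside: this is not quite equivalent. If $\|f\|_4\le \lambda^{\deg f}\|f\|_2$ with $\lambda\asymp(\epsilon(1-\epsilon))^{-1/2}$, then $\|\mathcal{X}_{\overline{S}}\|_4^4\le \lambda^{4|S|}|\overline{S}|^2\asymp(\epsilon(1-\epsilon))^{-2|S|}|\overline{S}|^2$, giving an $\epsilon$-exponent of $2|S|$ rather than the $|S|$ obtained from the direct moment computation (which only pays $(\epsilon(1-\epsilon))^{-(n_x-2)/2}$ per point). Since $|S|$ can be as large as $2^d$, this would put $(\epsilon(1-\epsilon))^{-2^{d+1}}$ where the lemma has $(\epsilon(1-\epsilon))^{-2^d}$, which is genuinely weaker, not merely a change of constants.
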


\begin{proof}
First, observe that
\[\langle\mathcal{X}^4_{\overline{S}}\rangle=\sum_{S_1,S_2,S_3,S_4\in\overline{S}} \langle\mathcal{X}_{S_1}\cdot\mathcal{X}_{S_2}\cdot\mathcal{X}_{S_3}\cdot\mathcal{X}_{S_4}\rangle.\]
If there is any $x$ that is in exactly one of $S_1$, $S_2$, $S_3$, and $S_4$ then $\langle\mathcal{X}_{S_1}\cdot\mathcal{X}_{S_2}\cdot\mathcal{X}_{S_3}\cdot\mathcal{X}_{S_4}\rangle=0$ because the contributions of $z'$ for which $z'_x=1$ cancel out the contributions of $z'$ for which $z'_x=0$. If there is no such $x$, then 
\begin{align}
&\langle\mathcal{X}_{S_1}\cdot\mathcal{X}_{S_2}\cdot\mathcal{X}_{S_3}\cdot\mathcal{X}_{S_4}\rangle \\&\le \max(((1-\epsilon)+\epsilon)^{2|S|},(((1-\epsilon)^2+\epsilon^2)/\sqrt{\epsilon(1-\epsilon)})^{4|S|/3},(((1-\epsilon)^3+\epsilon^3)/(\epsilon(1-\epsilon)))^{|S|})\\&\le (1/\epsilon(1-\epsilon))^{|S|}.
\end{align}
We have that 
$S_1\cup S_2\cup S_3\cup S_4$ must contain $\dim(S_1\cup S_2\cup S_3\cup S_4)$ points that are linearly independent. If there is no $x$ that is in exactly one of these sets, then each of these points must be in at least two of the $S_i$, which implies that $d\ge \dim(S_1\cup S_2\cup S_3\cup S_4)/2$. There are at most $2^{2dm}$ $2d$-dimensional subspaces of $\mathbb{F}_2^m$, and there are at most $2^{2d^2}$ affine transformations of $S$ contained in any such subspace. So, $\langle\mathcal{X}^4_{\overline{S}}\rangle\le 2^{2dm+8d^2}(1/\epsilon(1-\epsilon))^{2^d}$, as desired.
\end{proof}

\subsection{Bounding the probability of a large $P_e(\underline{m},m,r,\epsilon | z' )$}

That finally allows us to bound $\sum_{S\subseteq \mathbb{F}_2^{\underline{m}}} \langle Q_{m,m},\mathcal{X}_{e_m(S)}\rangle^2$ as follows:

\begin{lemma}
Let $0\le \underline{m}< m$ and $r\ge 0$ be integers and $\epsilon\in(0,1/2)$. Then when $z'\in\{0,1\}^{2^{\underline{m}}}$ is selected by independently setting each of its elements to $1$ with probability $\epsilon$ and $0$ otherwise,
\[\mathbb{P}[P_e(\underline{m},m,r,\epsilon|z')\ge 1/3]\le 18P^{5/4}_{e}(m,r,\epsilon) +9P_e(m,r,\epsilon)\left(\frac{4\log(64/\epsilon(1-\epsilon))}{\log(1/P_e(m,r,\epsilon))}\right)^k\]
\end{lemma}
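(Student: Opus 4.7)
The plan is to combine Markov's inequality $\mathbb{P}[P_e(\underline{m},m,r,\epsilon|z')\ge 1/3]\le 9\,\mathbb{E}_{z'}[P_e^2(\underline{m},m,r,\epsilon|z')]$ with the Fourier-side reduction already established in the previous subsection, namely
\[\mathbb{E}_{z'}[P_e^2(\underline{m},m,r,\epsilon|z')] = \sum_{S\subseteq\mathbb{F}_2^{\underline{m}}}\langle Q_{m,m},\mathcal{X}_{e_m(S)}\rangle^2 \le \sum_{S\in\mathbb{S}}\frac{1}{|\overline{S}|}2^{-k\dim(S)}\langle Q_{m,m},\mathcal{X}_{\overline{S}}\rangle^2,\]
and then to bound the right-hand side by splitting it into a low-dimensional head and a high-dimensional tail, each controlled by a different estimate.

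I would introduce a threshold $d^\ast := \lfloor\log_2(\log(1/P_e(m,r,\epsilon))/(4\log(64/\epsilon(1-\epsilon))))\rfloor$, so that $2^{-d^\ast}$ is comparable to $r := 4\log(64/\epsilon(1-\epsilon))/\log(1/P_e(m,r,\epsilon))$ and, writing $\alpha := 1/\epsilon(1-\epsilon)$, the factor $\alpha^{2^{d^\ast-1}}$ stays below $P_e^{-1/8}$. This threshold is chosen exactly so that the $L_4$-based and the Parseval-based per-dimension estimates become comparable at $d=d^\ast$.

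For the high-dimensional tail $\dim(S)>d^\ast$, I would apply Parseval to $Q_{m,m}$: for each $d$,
\[\sum_{S\in\mathbb{S},\,\dim(S)=d}\frac{1}{|\overline{S}|}\langle Q_{m,m},\mathcal{X}_{\overline{S}}\rangle^2 = \sum_{S'\subseteq\mathbb{F}_2^m,\,\dim(S')=d}\langle Q_{m,m},\mathcal{X}_{S'}\rangle^2 \le \langle Q_{m,m},Q_{m,m}\rangle\le P_e(m,r,\epsilon),\]
using $Q_{m,m}\le 1$ so that $\langle Q_{m,m}^2\rangle\le\langle Q_{m,m}\rangle$. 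Summing the geometric series $\sum_{d>d^\ast}2^{-kd}\le 2^{-kd^\ast}$ produces a tail bound of $P_e(m,r,\epsilon)\cdot r^k$. For the low-dimensional head $d\le d^\ast$, I would use the $L_4$-based estimate $\langle Q_{m,m},\mathcal{X}_{\overline{S}}\rangle^2\le P_e^{3/2}\sqrt{\langle\mathcal{X}_{\overline{S}}^4\rangle}\le P_e^{3/2}\cdot 2^{dm+4d^2}\alpha^{2^{d-1}}$ together with orbit-counting. At $d=d^\ast$ this yields a per-dimension contribution of at most $P_e^{3/2}\cdot P_e^{-1/8}=P_e^{11/8}\le P_e^{5/4}$, and because the double-exponential factor $\alpha^{2^{d-1}}$ competing against the $2^{-kd}$ suppression and the Parseval cap on any individual orbit forces rapid decay in the ratio of consecutive per-dimension bounds, the head sum telescopes into a geometric series dominated by twice its top term, namely at most $2 P_e^{5/4}$.

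The main obstacle is the head estimate. The bound $\langle\mathcal{X}_{\overline{S}}^4\rangle\le 2^{2dm+8d^2}\alpha^{2^d}$ carries a $2^{2dm}$ factor that is only tamed by the orbit size $|\overline{S}|$, whose lower bound must be extracted from the action of linear transformations on $d$-dimensional spanning sets. Combining this lower bound with the Parseval budget per orbit and the $2^{-kd}$ factor in a way that yields a convergent geometric series across $d\le d^\ast$, without losing the $P_e^{5/4}$ scaling, requires careful bookkeeping of orbit counts against spanning dimension. Once that is in place, summing head and tail contributions and multiplying by the Markov factor $9$ delivers the claimed bound $18 P_e^{5/4}(m,r,\epsilon) + 9 P_e(m,r,\epsilon)\cdot r^k$.
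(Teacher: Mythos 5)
Your plan is the paper's plan: Markov to reduce to the $L_2$ norm of $Q_{\underline{m},m}$, the Fourier restriction to $\mathbb{F}_2^{\underline{m}}$-supported terms, a dimension threshold $d_0$, the $L_4$ estimate with orbit counting below $d_0$, and a Parseval budget above $d_0$. Two points need correction, though.

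\textbf{Floor versus ceiling.} You set $d^\ast=\lfloor\log_2(1/r)\rfloor$ with $r=4\log(64/\epsilon(1-\epsilon))/\log(1/P_e)$. This gives $2^{-d^\ast}\ge r$, so $2^{-k d^\ast}\ge r^k$ --- the wrong direction for the tail. Your tail estimate $\sum_{d>d^\ast}2^{-kd}\le 2^{-kd^\ast}$ therefore only yields $P_e\cdot(2r)^k$, off by $2^k$ from the claimed $P_e\cdot r^k$. The paper takes $d_0=\max\bigl(\lceil\log_2(1/r)\rceil,0\bigr)$, so $2^{-kd_0}\le r^k$, and factors $2^{-kd_0}$ out of the whole tail before applying a single global Parseval bound; your per-dimension Parseval plus geometric sum is a valid alternative but lands you one power of two worse even once the ceiling is in place, which is harmless but unnecessary. (With the ceiling, $2^{d_0-1}\le 1/r$ rather than $\le 1/(2r)$, so the head exponent you actually get is $P_e^{-1/4}$, not $P_e^{-1/8}$; it is $P_e^{3/2}\cdot P_e^{-1/4}=P_e^{5/4}$ that matches the statement, so the slack you thought you had is needed.)

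\textbf{The head bookkeeping.} You correctly identify this as the main obstacle, but your sketch drops the orbit-count factor $2^{2^d}$ and waves away $2^{5d^2}$. After combining $|\overline{S}|\ge 2^{d(m-d)}$, at most $2^{2^d}$ orbits per dimension $d$, $\sqrt{\langle\mathcal{X}_{\overline{S}}^4\rangle}\le 2^{dm+4d^2}\alpha^{2^{d-1}}$ with $\alpha=1/\epsilon(1-\epsilon)$, and $2^{-kd}\le 1$, the per-dimension head contribution is $P_e^{3/2}\,2^{2^d}2^{5d^2}\alpha^{2^{d-1}} = P_e^{3/2}(4\alpha)^{2^{d-1}}2^{5d^2}$. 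The paper then absorbs $2^{5d^2}$ via the elementary inequality $2^{5d^2}(4\alpha)^{2^{d-1}}\le (64\alpha)^{2^d}$ (valid since $6\cdot 2^d\ge 5d^2$ once $\alpha\ge 4$), and bounds the doubly exponential sum by twice its top term: $\sum_{d=0}^{d_0-1}(64\alpha)^{2^d}\le 2(64\alpha)^{2^{d_0-1}}\le 2P_e^{-1/4}$. That is the concrete verification your proposal leaves implicit; without the $(64\alpha)$ base (rather than $\alpha$) the constants in the lemma do not come out.

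With those two adjustments your argument coincides with the paper's proof.
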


\begin{proof}
For any integer $0\le d_0\le m$, 
\begin{align}
&\sum_{S\subseteq \mathbb{F}_2^{\underline{m}}} \langle Q_{m,m},\mathcal{X}_{e_m(S)}\rangle^2\\
&\le \sum_{S\in\mathbb{S}} \frac{1}{|\overline{S}|}2^{-k\cdot \dim(S)}\langle Q_{m,m},\mathcal{X}_{\overline{S}}\rangle^2\\
&=\sum_{d=0}^m \sum_{S\in\mathbb{S}:\dim(S)=d} \frac{1}{|\overline{S}|}2^{-k d}\langle Q_{m,m},\mathcal{X}_{\overline{S}}\rangle^2\\
&=\sum_{d=0}^{d_0-1} \sum_{S\in\mathbb{S}:\dim(S)=d} \frac{1}{|\overline{S}|}2^{-k d}\langle Q_{m,m},\mathcal{X}_{\overline{S}}\rangle^2
+\sum_{d=d_0}^m \sum_{S\in\mathbb{S}:\dim(S)=d} \frac{1}{|\overline{S}|}2^{-k d}\langle Q_{m,m},\mathcal{X}_{\overline{S}}\rangle^2
\end{align}

Now, observe that for any given $d$, there are at most $2^{2^d}$ equivalence classes of $d$-dimensional subsets of $\mathbb{F}_2^m$ under affine transformations because every such set can be mapped to a subset of $\mathbb{F}_2^d$. On the flip side, for every $d$ there are at least $2^{d(m-d)}$ $d$-dimensional subspaces of $\mathbb{F}_2^m$, so for any set $S$ there are at least $2^{\dim(S)(m-\dim(S))}$ sets that can be generated by applying an affine transformation to $S$. So,
\begin{align}
&\sum_{d=0}^{d_0-1} \sum_{S\in\mathbb{S}:\dim(S)=d} \frac{1}{|\overline{S}|}2^{-kd}\langle Q_{m,m},\mathcal{X}_{\overline{S}}\rangle^2\\
&\le \sum_{d=0}^{d_0-1} \sum_{S\in\mathbb{S}:\dim(S)=d} \frac{1}{|\overline{S}|}\langle Q_{m,m},\mathcal{X}_{\overline{S}}\rangle ^2\\
&\le \sum_{d=0}^{d_0-1} \sum_{S\in\mathbb{S}:\dim(S)=d} \frac{1}{|\overline{S}|}P^{3/2}_{e}(m,r,\epsilon)\sqrt{\langle\mathcal{X}^4_{\overline{S}}\rangle}\\
&\le \sum_{d=0}^{d_0-1} \sum_{S\in\mathbb{S}:\dim(S)=d} 2^{-d(m-d)} P^{3/2}_{e}(m,r,\epsilon)2^{md+4d^2}(1/\epsilon(1-\epsilon))^{2^{d-1}}\\
&\le \sum_{d=0}^{d_0-1} 2^{2^d} 2^{-d(m-d)}P^{3/2}_{e}(m,r,\epsilon)2^{md+4d^2}(1/\epsilon(1-\epsilon))^{2^{d-1}}\\
& =P^{3/2}_{e}(m,r,\epsilon)  \sum_{d=0}^{d_0-1}2^{5d^2} (4/\epsilon(1-\epsilon))^{2^{d-1}}\\
& \le P^{3/2}_{e}(m,r,\epsilon)  \sum_{d=0}^{d_0-1} (64/\epsilon(1-\epsilon))^{2^d}\\
& \le 2 P^{3/2}_{e}(m,r,\epsilon)  (64/\epsilon(1-\epsilon))^{2^{d_0-1}}.
\end{align}
Now, we can bound the contribution of the high dimensional terms by using the fact that the sum of the squares of all the coefficients in the transform of $Q_{m,m}$ is at most $P_e(m,r,\epsilon)$, and then taking the factor of $2^{-kd}$ into account. More precisely,
\begin{align}
&\sum_{d=d_0}^m \sum_{S\in\mathbb{S}:\dim(S)=d} \frac{1}{|\overline{S}|}2^{-k d}\langle Q_{m,m},\mathcal{X}_{\overline{S}}\rangle^2\\
&\le 2^{-k d_0}\sum_{d=d_0}^m \sum_{S\in\mathbb{S}:\dim(S)=d} \frac{1}{|\overline{S}|}\langle Q_{m,m},\mathcal{X}_{\overline{S}}\rangle^2\\
&\le 2^{-k d_0} \sum_{S\in\mathbb{S}} \frac{1}{|\overline{S}|}\langle Q_{m,m},\mathcal{X}_{\overline{S}}\rangle^2\\
&= 2^{-k d_0} \sum_{S\subseteq\mathbb{F}_2^m} \langle Q_{m,m},\mathcal{X}_{S}\rangle^2\\
&= 2^{-k d_0} \langle Q_{m,m}, Q_{m,m}\rangle\\
&\le 2^{-k d_0} \langle Q_{m,m}\rangle \\
&= P_e(m,r,\epsilon) 2^{-k d_0}.
\end{align}

Thus,
\[\sum_{S\subseteq \mathbb{F}_2^{\underline{m}}} \langle Q_{m,m},\mathcal{X}_{e_m(S)}\rangle^2\le 2 P^{3/2}_{e}(m,r,\epsilon)  (64/\epsilon(1-\epsilon))^{2^{d_0-1}}+P_e(m,r,\epsilon) 2^{-k d_0}.\]

In particular, if we set $d_0=\max\left(\lceil\log_2\left(\frac{\log(1/P_e(m,r,\epsilon))}{4\log(64/\epsilon(1-\epsilon))}\right)\rceil,0\right)$ we get that
\[\sum_{S\subseteq \mathbb{F}_2^{\underline{m}}} \langle Q_{m,m},\mathcal{X}_{e_m(S)}\rangle^2\le 2P^{5/4}_{e}(m,r,\epsilon) +P_e(m,r,\epsilon)\left(\frac{4\log(64/\epsilon(1-\epsilon))}{\log(1/P_e(m,r,\epsilon))}\right)^k\]
That in turn implies that
\begin{align}
&\mathbb{P}[P_e(\underline{m},m,r,\epsilon|z')\ge 1/3]\\
&\le \mathbb{E}[9 P^2_{\underline{m},m,r,\epsilon}(z')]\\
&= 9\langle Q_{\underline{m},m}, Q_{\underline{m},m}\rangle\\
&=9\sum_{S\subseteq \mathbb{F}_2^{\underline{m}}} \langle Q_{m,m},\mathcal{X}_{e_m(S)}\rangle^2\\
&\le 18P^{5/4}_{e}(m,r,\epsilon) +9P_e(m,r,\epsilon)\left(\frac{4\log(64/\epsilon(1-\epsilon))}{\log(1/P_e(m,r,\epsilon))}\right)^k
\end{align}
as desired.
\end{proof}

\subsection{Bounds on $P_e(m,r,\epsilon)$ via recursive sunflower-boosting: reaching the critical regime}
At this point, we can finally start recursively constructing bounds on $P_e$ using the following:
\begin{lemma}\label{boostStepLem}
Let $m$, $\overline{m}$, $k$, and $r$ be positive integers such that $k\le \min(m, \overline{m}-m)$ and $\epsilon\in(0,1/2)$. Then 
\[P_e(\overline{m},r,\epsilon)\le 18P^{5/4}_{e}(m,r,\epsilon) +9P_e(m,r,\epsilon)\left(\frac{4\log(64/\epsilon(1-\epsilon))}{\log(1/P_e(m,r,\epsilon))}\right)^k+(8/9)^{2^{\overline{m}-m-k}}.\]
\end{lemma}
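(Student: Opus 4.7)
The plan is to apply the \textbf{BoostingAlgorithm} with kernel dimension $\underline{m}=m-k$. By Lemma \ref{sunflowerLem} (with the roles $(\underline{m},m,\overline{m})=(m-k,m,\overline{m})$), an $(\underline{m},m,\overline{m})$-subspace sunflower exists of size
\[b=2^{\overline{m}+\underline{m}+1-2m}=2^{\overline{m}-m-k+1},\]
and the hypothesis $k\le\min(m,\overline{m}-m)$ ensures both $\underline{m}\ge 0$ and $b\ge 2$. The algorithm then decodes $f(0^{\overline{m}})$ by running $L_{m,r,\epsilon}$ on the restriction of $\tilde f$ to each of the $b$ petals and taking a majority vote; its error rate gives an upper bound on $P_e(\overline{m},r,\epsilon)$, since the definition of the latter uses the optimal decoder that also ignores $\tilde f(0^{\overline m})$.

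Next, I condition on the value $z'$ of the restriction of the noise $Z$ to the kernel $V$. Because the petals pairwise intersect exactly in $V$, the restrictions of $Z$ to the petals are mutually independent given $z'$, so the $b$ outputs of $L_{m,r,\epsilon}$ are conditionally independent and each disagrees with $f(0^{\overline m})$ with probability $P_e(\underline{m},m,r,\epsilon\mid z')$. Call $z'$ \emph{good} if $P_e(\underline{m},m,r,\epsilon\mid z')<1/3$, and \emph{bad} otherwise. The previous lemma (applied with this value of $k=m-\underline{m}$) immediately yields
\[\mathbb{P}[z'\text{ bad}]\le 18P^{5/4}_{e}(m,r,\epsilon)+9P_e(m,r,\epsilon)\left(\frac{4\log(64/\epsilon(1-\epsilon))}{\log(1/P_e(m,r,\epsilon))}\right)^k,\]
which accounts for the first two terms of the claimed upper bound.

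On the good event, the majority vote fails only if at least $b/2$ of the $b$ independent Bernoulli$(p)$ petal decodings are wrong, with $p<1/3$. The standard Chernoff/AM--GM inequality $\mathbb{P}[\mathrm{Bin}(b,p)\ge b/2]\le (4p(1-p))^{b/2}$ combined with $4p(1-p)\le 4\cdot(1/3)\cdot(2/3)=8/9$ for $p\le 1/3$ gives
\[\mathbb{P}[\text{majority wrong}\mid z'\text{ good}]\le (8/9)^{b/2}=(8/9)^{2^{\overline{m}-m-k}},\]
the final term in the bound. A union bound over the good/bad cases yields the lemma. The only mildly delicate step is matching the Chernoff exponent $b/2$ with the stated $2^{\overline{m}-m-k}$, which is exactly why Lemma \ref{sunflowerLem} is invoked with the specific size $2^{\overline{m}+\underline{m}+1-2m}$; everything else is assembly of the preceding ingredients.
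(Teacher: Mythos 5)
Your proposal is correct and follows essentially the same route as the paper's proof: run the boosting algorithm with $\underline{m}=m-k$ on a sunflower of size $2^{\overline{m}-m-k+1}$, condition on the kernel restriction $z'$, invoke the preceding lemma to bound $\mathbb{P}[P_e(\underline{m},m,r,\epsilon\mid z')\ge 1/3]$, apply a Chernoff bound on the conditionally independent majority vote for good $z'$, and finally compare to $L_{\overline{m},r,\epsilon}$ which is optimal among decoders ignoring $\tilde f(0^{\overline{m}})$. The only cosmetic difference is that you spell out the binomial tail inequality $\mathbb{P}[\mathrm{Bin}(b,p)\ge b/2]\le(4p(1-p))^{b/2}$ explicitly, which the paper leaves implicit.
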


\begin{proof}
Let $\underline{m}=m-k$, and consider decoding the value of $f(0^{\overline{m}})$ from $\tilde{f}$ using the following boosting algorithm: 

\begin{enumerate}
\item Pick an $(\underline{m},m,\overline{m})$-subspace sunflower of size $2^{\overline{m}+\underline{m}-2m+1}$ independently of $\tilde{f}$.

\item Run $L_{m,r,\epsilon}$ on the restriction of $\tilde{f}$ to each petal, and return the most common output.
\end{enumerate}

By lemma \ref{sunflowerLem} it is possible to find such a subspace sunflower so this algorithm is possible to carry out. Also, if we let $z'$ be the restriction of $Z$ to the kernel then conditioned on a fixed value of $z'$ each use of $L$ in the final step of the algorithm independently returns a value that differs from $f(0^{\overline{m}})$ with probability $P_e(\underline{m},m,r,\epsilon|z')$.
\[\mathbb{P}[P_e(\underline{m},m,r,\epsilon|z')\ge 1/3]\le 18P^{5/4}_{e}(m,r,\epsilon) +9P_e(m,r,\epsilon)\left(\frac{4\log(64/\epsilon(1-\epsilon))}{\log(1/P_e(m,r,\epsilon))}\right)^k\]
by the previous lemma, and conditioned on any fixed value of $z'$ such that $P_e(\underline{m},m,r,\epsilon|z')< 1/3$, this algorithm returns the wrong answer with a probability of at most $(8/9)^{2^{\overline{m}+\underline{m}-2m}}=(8/9)^{2^{\overline{m}-m-k}}$. Thus this algorithm gets $f(0^{\overline{m}})$ wrong with a probability of at most
\[18P^{5/4}_{e}(m,r,\epsilon) +9P_e(m,r,\epsilon)\left(\frac{4\log(64/\epsilon(1-\epsilon))}{\log(1/P_e(m,r,\epsilon))}\right)^k+(8/9)^{2^{\overline{m}-m-k}}.\]
Nothing in this algorithm used the value of $\tilde{f}(0^{\overline{m}})$, and $L_{\overline{m},r,\epsilon}(\tilde{f})$ computes $f(0^{\overline{m}})$ from $\tilde{f}$ with accuracy at least as high as that attained by any other algorithm that ignores the value of $\tilde{f}(0^{\overline{m}})$, so it also has an error rate of at most $$18P^{5/4}_{e}(m,r,\epsilon) +9P_e(m,r,\epsilon)\left(\frac{4\log(64/\epsilon(1-\epsilon))}{\log(1/P_e(m,r,\epsilon))}\right)^k+(8/9)^{2^{\overline{m}-m-k}},$$ as desired.
\end{proof}

This allows us to convert bounds on $P_e(m,r,\epsilon)$ into tighter bounds on $P_e(\overline{m},r,\epsilon)$ for $\overline{m}$ slightly greater than $m$. Our plan is to now use this to prove increasingly tighter bounds, starting with the bound established by lemma \ref{errorBound1}. That yields the following result.

\begin{lemma} \label{bitError}
Let $\epsilon\in (0,1/2)$, and $m_1,m_2,m_3,...$ and $r_1,r_2,...$ be sequences of positive integers such that $\lim_{i\to\infty} m_i=\infty$ and $\limsup_{i\to\infty} {m_i \choose \le r_i}2^{-m_i}< 1-H(\epsilon)$. Then there exists $c>0$ such that $P_e(m_i,r_i,\epsilon)=O(m_i^{-c \sqrt{m_i}})$.
\end{lemma}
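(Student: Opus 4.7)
The plan is to iterate Lemma \ref{boostStepLem} starting from Lemma \ref{errorBound1}, tuning the sunflower depth $k$ and step size $\overline{m}-m$ at each stage so that the three terms in the recurrence stay balanced. Suppose at some stage $P_e(m,r,\epsilon) \le 2^{-f}$; writing $A = 4\log(64/(\epsilon(1-\epsilon)))$, I would choose $k = \lceil f/(4\log(f/A))\rceil$ so that the middle term $9 P_e(A/f)^k$ of Lemma \ref{boostStepLem} is $\le 2^{-(5/4)f + O(1)}$, and $\overline{m}-m-k$ of order $\log f$ so that the third term $(8/9)^{2^{\overline{m}-m-k}}$ is also $\le 2^{-(5/4)f}$. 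The first term $18 P_e^{5/4}$ automatically meets this bound, so Lemma \ref{boostStepLem} then gives $P_e(\overline{m},r,\epsilon) \le 2^{-(5/4)f + O(1)}$: each boosting step multiplies the exponent $f$ by $5/4$ at a cost $\Delta m \approx f/(4\log f)$.

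Iterating $T$ times from $f_0 \approx \sqrt[3]{m_0}$ supplied by Lemma \ref{errorBound1} yields $f_T \approx (5/4)^T f_0$. Because the $\Delta m_t$ themselves are almost-geometric with ratio $5/4$, the cumulative cost $m_T - m_0 = \sum_t \Delta m_t$ is dominated by its last term, of order $f_T/\log f_T$. To reach the target $P_e(m,r,\epsilon) \le 2^{-c\sqrt{m}\log m}$ I would aim for $f_T = c\sqrt{m}\log m$; then $f_T/\log f_T = \Theta(\sqrt{m})$, so $m_T - m_0 = O(\sqrt{m})$, the number of rounds is $T = O(\log m)$, and every $k_t$ stays of order $\sqrt{m_t}$, comfortably within the side constraint $k_t \le \min(m_t, \overline{m}_t - m_t)$.

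Given a target index $i$, the implementation is: set $m_T = m_i$, $r = r_i$, $m_0 = m_i - \Theta(\sqrt{m_i})$, and run the prescribed schedule. Two routine checks close the argument. First, the shifted sequence $\{(m_0(i), r_i)\}$ still satisfies $\limsup_i {m_0(i) \choose \le r_i} 2^{-m_0(i)} < 1 - H(\epsilon)$, because ${m \choose \le r}2^{-m}$ varies by $o(1)$ across a window of width $O(\sqrt{m})$ when the rate is bounded away from $0$ and $1$; this lets Lemma \ref{errorBound1} supply the base bound $P_e(m_0, r, \epsilon) \le 2^{-f_0}$. Second, the constraint $k \le \min(m, \overline{m}-m)$ in Lemma \ref{boostStepLem} and the rate staying below capacity at each intermediate $m_t$ are clear from construction.

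The main obstacle is careful bookkeeping of the additive $O(1)$ losses that accumulate across the $O(\log m)$ steps: these must be absorbed into $c$ without eroding the geometric improvement. The exponent $5/4$ in Lemma \ref{boostStepLem} (rather than $1+o(1)$) gives the needed slack, so choosing $c$ slightly smaller than what a naive fixed-point computation would predict suffices; verifying that a stable $c$ exists throughout the iteration is the one delicate estimate. The remaining computations---the geometric sums, the step-size checks, and the rate preservation---are routine.
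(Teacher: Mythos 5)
Your proposal is correct and rests on the same two ingredients the paper uses: the base bound $P_e = O(2^{-\sqrt[3]{m}})$ from Lemma~\ref{errorBound1} applied at $m_0 = m_i - \Theta(\sqrt{m_i})$, followed by iterated applications of Lemma~\ref{boostStepLem} to climb back to $m_i$. The difference is in the iteration schedule. You tune $k_t \approx f_t/(4\log(f_t/A))$ and $\overline{m}_t - m_t - k_t = \Theta(\log f_t)$ \emph{adaptively} so that all three terms of Lemma~\ref{boostStepLem} match at $P_e^{5/4}$; this multiplies the exponent $f$ by roughly $5/4$ per step, costs $\Delta m_t \approx f_t/(4\log f_t)$, and converges in $O(\log m)$ rounds with cumulative cost $\Theta(\sqrt{m})$ dominated by the final step. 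The paper instead fixes $k = \lceil \log_2 m_i\rceil$ and step size $2\lceil\log_2 m_i\rceil$ for all rounds: using the initial lower bound $\log(1/P_e) = \Omega(\sqrt[4]{m_i})$ (which only improves as the iteration proceeds), the middle term is bounded by $m_i^{-\log_2(m_i)/4} P_e/3$ and the first by $2^{-\log_2^2(m_i)/2}P_e/3$, giving a fixed per-step contraction factor $2^{-\log_2^2(m_i)/4}$; over $\Theta(\sqrt{m_i}/\log m_i)$ steps this yields the same $2^{-\Theta(\sqrt{m}\log m)}$.

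Your adaptive schedule is tighter per step but introduces exactly the bookkeeping concern you flag (rounding of $k_t$, carrying $f_t$ across steps, ensuring the cumulative budget of $\sum_t \Delta m_t \approx 5 f_T/(4\log f_T)$ fits within the rate margin once $c$ is fixed). The paper's fixed schedule avoids this feedback entirely: the per-step contraction depends only on $m_i$ and the one-time threshold $P_e(m'_i) \le \min(e^{-27A\sqrt[4]{m_i}}, 2^{-2\log_2^2 m_i}/54^4)$, which continues to hold along the iteration because $P_e$ only decreases, so no additive losses compound. Both routes give $P_e(m_i,r_i,\epsilon) = O(m_i^{-c\sqrt{m_i}})$; the fixed schedule is shorter to verify, while yours is the natural ``fixed-point'' schedule and takes far fewer rounds. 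Your rate-preservation observation ($\limsup_i {m_i - c'\sqrt{m_i} \choose \le r_i}2^{-(m_i-c'\sqrt{m_i})}$ still below capacity for $c'$ small) matches what the paper asserts at the start of its proof.
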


\begin{proof}
First, observe that for any sufficiently small constant $c'$, it will be the case that $$\limsup_{i\to\infty} {m_i-c'\sqrt{m_i} \choose \le r}2^{-m_i}$$ is also strictly less than $1-H(\epsilon)$. So, choose such a $c'$ and let $m'_i=m_i-\lfloor c'\sqrt{m_i}\rfloor$ for each $i$, and recall that $P_e(m'_i,r_e,\epsilon)=O(2^{-\sqrt[3]{m_i}})$ by lemma \ref{errorBound1}. That means that that there exists $i_0$ such that for all $i\ge i_0$, $m_i\ge 2$ and
\[P_e(m_i-\lfloor c'\sqrt{m_i}\rfloor,r_i,\epsilon)\le \min( e^{-108\ln(64/\epsilon(1-\epsilon))\sqrt[4]{m_i}},2^{-2 \log_2^2(m_i)}/54^4).\]
Hence for any $i\ge i_0$ and $m''\ge m_i-\lfloor c'\sqrt{m_i}\rfloor$, it must be the case that
\begin{align}
&P_e(m''+2\lceil \log_2{m_i}\rceil,r_i,\epsilon)\\
&\le 18P^{5/4}_{e}(m'',r_i,\epsilon) +9P_e(m'',r_i,\epsilon)\left(\frac{4\log(64/\epsilon(1-\epsilon))}{\log(1/P_e(m'',r_i,\epsilon))}\right)^{\log_2(m_i)}+(8/9)^{m_i}\\
&\le 2^{-\log_2^2(m_i)/2} P_e(m'',r_i,\epsilon)/3+m_i^{-\log_2(m_i)/4}P_e(m'',r_i,\epsilon)/3+(8/9)^{m_i}\\
&\le \max(2^{-\log_2^2(m_i)/4} P_e(m'',r_i,\epsilon),3(8/9)^{m_i}).
\end{align}
Applying this $\left\lfloor \frac{\lfloor c'\sqrt{m_i}\rfloor}{2\lceil\log_2 m_i\rceil}\right\rfloor$ times implies that for all $i\ge i_1$,
\[P_e(m_i,r_i,\epsilon)\le \max\left(2^{-c' \log_2(m_i)\sqrt{m_i}/16},3(8/9)^{m_i}\right).\]
This gives us the desired conclusion for $c=c'/16$.
\end{proof}

\section{Recovering the codeword}\label{recovering}
At this point, we are finally ready to go from being able to recover any given bit of the code with high probability to being able to completely recover the codeword with high probability. 

\subsection{List-decoding and weight enumerator}
In order to do that, we will start by showing that it is very unlikely that there is a codeword that is close to the true codeword and more plausible than it in the following sense.

\begin{lemma}\label{listDecodeLem}
Let $c>0$, $\epsilon\in (0,1/2)$, and $m_1,m_2,...$ and $r_1,r_2,...$ be sequences of positive integers such that $\lim_{i\to\infty} m_i=\infty$ and $|m_i-2r_i|=O(\sqrt{m_i})$. Then in the limit as $i\to\infty$, if $f$ is randomly drawn from $RM(m_i,r_i)$ and $\tilde{f}$ is a noisy version of $f$ in which each bit is independently flipped with probability $\epsilon$ then with probability $1-O(2^{-2^{m/3}})$ there is no $f^\star\in RM(m_i,r_i)$ different from $f$ such that $f^\star$ disagrees with $f$ on at most $2^{m-c\sqrt{m}\log_2(m)}$ elements and $f^\star$ agrees with $\tilde{f}$ on at least as many elements as $f$ does.
\end{lemma}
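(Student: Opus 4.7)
The plan is a weight-enumerator union bound. By the linearity of $RM(m,r)$ and the additivity of BSC noise we may assume $f=0$; the bad event then becomes the existence of a nonzero $g\in RM(m,r)$ with $d:=\wt(g)\le 2^{m-c\sqrt{m}\log_2 m}$ such that the noise $Z$ agrees with $g$ on at least $d/2$ coordinates of $\supp(g)$. For each fixed such $g$, a Bhattacharyya/Chernoff calculation bounds this probability by $(4\epsilon(1-\epsilon))^{d/2}$.

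I would then group candidates dyadically by the index $\ell$ with $d\in(2^{m-\ell-1},2^{m-\ell}]$, bound the per-$g$ cost by $(4\epsilon(1-\epsilon))^{2^{m-\ell-2}}$, and control $N(\ell):=|\{g\in RM(m,r):\wt(g)\le 2^{m-\ell}\}|$ using the intermediate bound at page~17 in the proof of Theorem~1.2 of \cite{Sberlo18},
\begin{align*}
\log_2 N(\ell)\le \sum_{j=\ell}^{r}\left[17m(j-1)(j+2)+17(j+2)\binom{m-j+1}{\le r-j+1}\right].
\end{align*}
Under $|m-2r|=O(\sqrt{m})$, the ratios $(r-j+1)/(m-j+1)$ stay within $O(\log_2 m/\sqrt{m})$ of $1/2$ throughout the range of interest, so the second-order entropy expansion $H(1/2-x)=1-(2/\ln 2)x^2+O(x^4)$ yields $\log_2\binom{m-j+1}{\le r-j+1}\le (m-j+1)-j^2/(2m\ln 2)+O(1)$; summing the resulting geometric-like series in $j$ produces $\log_2 N(\ell)=O(\ell\cdot 2^{m-\ell}\cdot m^{-c^2\log_2 m/(2\ln 2)})$, matching the estimate announced in the proof-technique section.

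The union bound $\mathbb{P}[\text{bad}]\le \sum_{\ell\ge c\sqrt{m}\log_2 m}N(\ell)\cdot (4\epsilon(1-\epsilon))^{2^{m-\ell-2}}$ then closes the argument: for $m$ large, the super-polynomial factor $m^{-c^2\log_2 m/(2\ln 2)}$ forces $\log_2 N(\ell)\le \tfrac12\log_2(1/(4\epsilon(1-\epsilon)))\cdot 2^{m-\ell-2}$ uniformly in $\ell$, so each summand is at most $2^{-\Omega(2^{m-\ell})}$; the sum is dominated by its smallest-$\ell$ term, and $m-c\sqrt{m}\log_2 m>m/3$ for $m$ large enough yields the stated $O(2^{-2^{m/3}})$. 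The main technical obstacle is exactly this quantitative race: Theorem~1.2 of \cite{Sberlo18} as stated is not tight enough, so one must descend into its proof to extract the intermediate form above, and carry the second-order entropy expansion with enough precision that the gain $m^{-c^2\log_2 m/(2\ln 2)}$ truly beats the counting overhead $\ell\cdot 2^{m-\ell}$ uniformly for all $\ell\ge c\sqrt{m}\log_2 m$.
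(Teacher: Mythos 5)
Your proposal follows essentially the same route as the paper: reduce by linearity, invoke the intermediate (page~17) weight-enumerator bound from the proof of Theorem~1.2 of Sberlo--Shpilka, apply the Bhattacharyya bound $(4\epsilon(1-\epsilon))^{d/2}$ per competing codeword, group dyadically by $\ell$, and use a Chernoff/second-order-entropy estimate of $\binom{m-j+1}{\le r-j+1}$ together with $|m-2r|=O(\sqrt m)$ to make the count lose the race against the Bhattacharyya decay. One small slip worth fixing: in the final sum $\sum_\ell 2^{-\Omega(2^{m-\ell})}$ the largest summand is at the \emph{largest} $\ell$ (namely $\ell=r\approx m/2$), not the smallest $\ell$, but since $m-r \ge m/2-O(\sqrt m) > m/3$ for large $m$ the stated $O(2^{-2^{m/3}})$ conclusion is unaffected.
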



\begin{proof}
First, observe that from page 17 in the proof of Theorem 1.2 from \cite{Sberlo18}, the number of elements of $RM(m,r)$ that disagree with $f$ on at most $2^{m-\l}$ elements is at most 
\[2^{\sum_{j=\l}^r 17m(j-1)(j+2)+17(j+2){m-(j-1) \choose \le r-(j-1)}}\]
 for every $\l \in\mathbb{Z}^+$.
Given $f^\star\in RM(m,r)$, the probability that $f^\star$ has at least as much agreement with $\tilde{f}$ as $f$ does is at most $(4\epsilon(1-\epsilon))^{|\{x:f(x)\ne f^\star(x)\}|/2}$, as we need to flip a number of bits at least equal to half the distance. So, the probability that there is a codeword other than $f$ which is within $2^{m-c\sqrt{m}\log_2(m)}$ elements of $f$ and which agrees with $\tilde{f}$ at least as much as $f$ does is at most:
\begin{align}
&\sum_{f^\star\in RM(m,r): 0<|\{x:f(x)\ne f^\star(x)\}|\le 2^{m-\lfloor c\sqrt{m}\log_2(m)\rfloor }} (4\epsilon(1-\epsilon))^{|\{x:f(x)\ne f^\star(x)\}|/2}\\
&=\sum_{\l= \lfloor c\sqrt{m}\log_2(m)\rfloor}^r\sum_{f^\star\in RM(m,r): 2^{m-\l-1}<|\{x:f(x)\ne f^\star(x)\}|\le 2^{m-\l}} (4\epsilon(1-\epsilon))^{|\{x:f(x)\ne f^\star(x)\}|/2}\\
&\le \sum_{\l= \lfloor c\sqrt{m}\log_2(m)\rfloor}^r\sum_{f^\star\in RM(m,r): 2^{m-\l-1}<|\{x:f(x)\ne f^\star(x)\}|\le 2^{m-\l}} (4\epsilon(1-\epsilon))^{2^{m-\l-2}}\\
&\le \sum_{\l= \lfloor c\sqrt{m}\log_2(m)\rfloor}^r (4\epsilon(1-\epsilon))^{2^{m-\l-2}}2^{\sum_{j=\l}^r 17m(j-1)(j+2)+17(j+2){m-(j-1) \choose \le r-(j-1)}}.
\end{align}

Now, observe that $|m-2r|=O(\sqrt{m})$, so there exists $i_0$ such that $|m-2r|\le c\sqrt{m}\log_2(m)/2-2$ whenever $i\ge i_0$. Given such an $i$, observe that for any $j\ge \l\ge \lfloor c\sqrt{m}\log_2(m)\rfloor$, a Chernoff bound implies that
\begin{align}
{m-(j-1) \choose \le r-(j-1)}&\le 2^{m-(j-1)}e^{-\left(\frac{m-r}{m-j+1}-\frac{1}{2}\right)^2(m-j+1)}\\
&=2^{m-(j-1)}e^{-\frac{(m-2r+j-1)^2}{4(m-j+1)}}\\
&\le 2^{m-(j-1)}e^{-\frac{(c\sqrt{m}\log_2(m)/2)^2}{4m}}\\
&= 2^{m-(j-1)}e^{-\frac{c^2\log_2^2(m)}{16}}.
\end{align}
Hence for any sufficiently large value of $i$, the probability that there is a codeword other than $f$ which is within $2^{m-c\sqrt{m}\log_2(m)}$ elements of $f$ and that agrees with $\tilde{f}$ on at least as many elements as $f$ is at most:
\begin{align}
&\sum_{\l= \lfloor c\sqrt{m}\log_2(m)\rfloor}^r (4\epsilon(1-\epsilon))^{2^{m-\l-2}}2^{\sum_{j=\l}^r 17m(j-1)(j+2)+17(j+2)2^{m-(j-1)}e^{-\frac{c^2\log_2^2(m)}{16}}}\\
&\le \sum_{\l= \lfloor c\sqrt{m}\log_2(m)\rfloor}^r (4\epsilon(1-\epsilon))^{2^{m-\l-2}}2^{17mr(r-1)(r+2)+17r(r+2)2^{m-(\l-1)}e^{-\frac{c^2\log_2^2(m)}{16}}}\\
&= 2^{17mr(r-1)(r+2)}\sum_{\l= \lfloor c\sqrt{m}\log_2(m)\rfloor}^r 2^{\left(136 r(r+2)e^{-\frac{c^2\log_2^2(m)}{16}}-\log_2(1/4\epsilon(1-\epsilon))\right) 2^{m-\l-2}}.
\end{align}
Now, observe that $136r(r+2))e^{-\frac{c^2\log_2^2(m)}{16}}=O(1)$, so whenever $i$ is sufficiently large this will be less than
\begin{align}
& 2^{17mr(r-1)(r+2)}\sum_{\l= \lfloor c\sqrt{m}\log_2(m)\rfloor}^r 2^{-\log_2(1/4\epsilon(1-\epsilon)) 2^{m-\l-3}}\\
&\le r 2^{17mr(r-1)(r+2)-\log_2(1/4\epsilon(1-\epsilon)) 2^{m-r-3}}=O(2^{-\sqrt[3]{n}}).
\end{align}
Thus no such $f^\star$ exists with probability $1-O(2^{-\sqrt[3]{n}})$, as desired.
\end{proof}

That suggests the following algorithm for recovering the codeword.
\begin{algorithm}
{\bf $RM\_reconstruction\_algorithm(m,r,\epsilon,\tilde{f},c):$}\begin{enumerate}
\item For each $x$, let $\hat{f}(x)$ be the most likely value of $f(x)$ given $\tilde{f}$, or a uniform random value if $0$ and $1$ are equally likely.

\item Let $\ell$ be a list of every element of $RM(m,r)$ that disagrees with $\hat{f}$ on at most $2^{m-c\sqrt{m}\log_2(m)/2}/2$ elements.

\item Return the element of $\ell$ that agrees with $\tilde{f}$ on the largest number of elements, breaking ties uniformly at random.

\end{enumerate}
\end{algorithm}

We claim that this completely recovers the original codeword with high probability in the following sense
\begin{theorem} \label{recoveryTheorem}
Let $\epsilon\in (0,1/2)$, and $m_1,m_2,...$ and $r_1,r_2,...$ be sequences of positive integers such that $\lim_{i\to\infty} m_i=\infty$ and $\limsup_{i\to\infty} {m_i \choose \le r_i}2^{-m_i}< 1-H(\epsilon)$. Also let $c$ be the constant stated to exist by lemma \ref{bitError}. Then in the limit as $i\to\infty$, if $f$ is randomly drawn from $RM(m,r)$ and $\tilde{f}$ is a noisy version of $f$ in which each bit is independently flipped with probability $\epsilon$ then $RM\_reconstruction\_algorithm(m,r,\epsilon,\tilde{f},c)$ returns $f$ with probability $1-O(2^{-c\sqrt{m}\log_2(m)/2})$.
\end{theorem}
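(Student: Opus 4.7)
The proof would combine two ingredients established earlier in the excerpt: the bit-error bound of Lemma~\ref{bitError} and the list-decoding uniqueness of Lemma~\ref{listDecodeLem}. The strategy is (i)~show $f$ itself makes it into the list $\ell$ because step~1 already gets enough bits right, and (ii)~show that no false candidate in $\ell$ outscores $f$ under the agreement test of step~3.

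For (i), Lemma~\ref{bitError} gives $\mathbb{P}(\hat{f}(x)\ne f(x))=O(m^{-c\sqrt{m}})=O(2^{-c\sqrt{m}\log_2 m})$ for each coordinate $x$. Linearity of expectation gives $\mathbb{E}|\{x:\hat{f}(x)\ne f(x)\}|=O(2^{m-c\sqrt{m}\log_2 m})$, and Markov's inequality then yields
\[
\mathbb{P}\Big(|\{x:\hat{f}(x)\ne f(x)\}|>2^{m-c\sqrt{m}\log_2 m/2}/2\Big)=O(2^{-c\sqrt{m}\log_2 m/2}).
\]
Off this failure event, $f\in\ell$, and by the triangle inequality every $f^\star\in\ell$ satisfies $|\{x:f^\star(x)\ne f(x)\}|\le 2^{m-c\sqrt{m}\log_2 m/2}$.

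For (ii), I would first verify the hypothesis $|m_i-2r_i|=O(\sqrt{m_i})$ needed by Lemma~\ref{listDecodeLem}: since $\binom{m_i}{\le r_i}2^{-m_i}=\mathbb{P}(\mathrm{Bin}(m_i,1/2)\le r_i)$, the rate assumption forces this binomial tail to stay bounded away from $0$ and $1$, and a standard Gaussian approximation to the binomial then forces $r_i=m_i/2+O(\sqrt{m_i})$. Applying Lemma~\ref{listDecodeLem} with the constant $c/2$ in place of its parameter~$c$ shows that with probability $1-O(2^{-2^{m/3}})$ no codeword $f^\star\ne f$ within Hamming distance $2^{m-c\sqrt{m}\log_2 m/2}$ of $f$ agrees with $\tilde{f}$ on at least as many coordinates as $f$; in particular no such competitor can sit inside $\ell$ and be preferred in step~3.

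A union bound over the two failure events gives a total error probability of $O(2^{-c\sqrt{m}\log_2 m/2})+O(2^{-2^{m/3}})=O(2^{-c\sqrt{m}\log_2 m/2})$, exactly the bound asserted. I expect the only subtle calibration to be the Markov step: the radius $2^{m-c\sqrt{m}\log_2 m/2}/2$ used in the algorithm is chosen precisely so that the $L_1$ bit-error bound of Lemma~\ref{bitError} upgrades to a high-probability Hamming-distance bound while still leaving a slack factor $2^{c\sqrt{m}\log_2 m/2}$ to stay in the regime covered by Lemma~\ref{listDecodeLem}. Since the Markov estimate is loose, there is no room here to use a tighter large-deviation inequality for the number of wrong bits; fortunately none is needed, as the dominant term in the final error probability is exactly the Markov contribution.
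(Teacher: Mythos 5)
Your overall strategy is exactly the paper's: use Lemma~\ref{bitError} plus Markov to show $f\in\ell$ with probability $1-O(2^{-c\sqrt{m}\log_2(m)/2})$, and then invoke Lemma~\ref{listDecodeLem} to rule out a competitor in $\ell$. The arithmetic in part~(i) is correct. The gap is in part~(ii), in the line where you claim the rate hypothesis forces $|m_i-2r_i|=O(\sqrt{m_i})$ because ``the rate assumption forces this binomial tail to stay bounded away from $0$ and $1$.'' It does not. The only assumption is $\limsup_i\binom{m_i}{\le r_i}2^{-m_i}<1-H(\epsilon)$, which is a one-sided \emph{upper} bound; it bounds the tail away from~$1$ (and hence yields $r_i-m_i/2\le O(\sqrt{m_i})$ on the upper side), but it places no lower bound whatsoever on the rate. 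For example $r_i\equiv 1$ is perfectly compatible with the hypotheses, and then $|m_i-2r_i|=m_i-2$, which is nowhere near $O(\sqrt{m_i})$. So as written you cannot apply Lemma~\ref{listDecodeLem} to the pair $(m_i,r_i)$.

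The paper patches this with a one-line observation you are missing: every element of $\ell$ lies in $RM(m_i,\max(r_i,\lceil m_i/2\rceil))$, so one may apply Lemma~\ref{listDecodeLem} with $r_i$ replaced by $r_i'=\max(r_i,\lceil m_i/2\rceil)$. In Case $r_i\le m_i/2$ we get $|m_i-2r_i'|\le 1$; in Case $r_i>m_i/2$ we have $r_i'=r_i$, and now only the ``away from $1$'' direction of the rate bound is needed (which does hold). The weight-enumerator bound for the larger code then dominates the one for $RM(m_i,r_i)$, so the list-decoding conclusion still applies to the competitors in $\ell$. With that repair your argument matches the paper's proof.
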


\begin{proof}
First, note that by lemma \ref{bitError} and symmetry between different values of $x$, $\mathbb{P}[\hat{f}(x)\ne f(x)]=O(m^{-c\sqrt{m}})$ for all $x$. So, $f\in \ell$ with probability $1-O(2^{-c\sqrt{m}\log_2(m)/2})$, in which case every element of $\ell$ will disagree with $f$ on at most $2^{m-c\sqrt{m}\log_2(m)/2}$ elements. By the previous lemma, the probability that this happens and there is an element of $\ell$ that agrees with $\tilde{f}$ on at least as many elements as $f$ does is $1-O(2^{-2^{m/3}})$ (noting that $\ell\subseteq RM(m_i,\max(r_i,m_i/2))$ so we do not need to worry about being unable to apply the lemma in the case where $r_i$ would be too small). This algorithm recovers thus $f$ with probability $1-O(2^{-c\sqrt{m}\log_2(m)/2})$, as desired.
\end{proof}

\subsection{Tightening the block error probability via grid-boosting}

At this point, we know that we can recover the code with high probability, but our bound on the error probability goes to $0$ fairly slowly. We will next prove better bounds on the error probability, and in order to do that, we will consider the following algorithm.

\begin{algorithm}
{\bf $RM\_reconstruction\_algorithm2(m,r,\epsilon,\tilde{f},c,c'):$}\begin{enumerate}
\item Let $m'=\lceil \sqrt{m}/c\rceil$. 

\item Divide the indices $1,...,m$ into sets $S_1,...,S_{m'}$ with sizes that are as close as possible to being equal.

\item For each $1\le i\le m'$ and $x\in\{0,1\}^{S_i}$, set \\$f^{(i,x)}=RM\_reconstruction\_algorithm(m,r,\epsilon,\tilde{f}_{S_i,x},c')$, where $\tilde{f}_{S_i,x}$ is the restriction of $\tilde{f}$ to points thats coordinates indexed in $S_i$ take on the values specified by $x$.

\item For each $x\in\{0,1\}^m$, set $\widehat{f}(x)$ to the most common value of $f^{(i,x_{S_i})}(x)$ over $1\le i\le m'$, where $x_{S_i}$ is the substring of $x$ with indices in $S_i$.

\item Let $\ell$ be a list of every element of $RM(m,r)$ that disagrees with $\hat{f}$ on at most $2^{m-\sqrt{m}\log_2(m)}/2$ elements.

\item Return the element of $\ell$ that agrees with $\tilde{f}$ on the largest number of elements.
\end{enumerate}
\end{algorithm}

 We claim that this algorithm recovers the code with very high probability in the following sense.

 \begin{theorem}
Let $\epsilon\in (0,1/2)$, and $m_1,m_2,...$ and $r_1,r_2,...$ be sequences of positive integers such that $\lim_{i\to\infty} m_i=\infty$ and $\limsup_{i\to\infty} {m_i \choose \le r_i}2^{-m_i}< 1-H(\epsilon)$. There exist $c,c',c''>0$ such that in the limit as $i\to\infty$, if $f$ is randomly drawn from $RM(m,r)$ and $\tilde{f}$ is a noisy version of $f$ in which each bit is independently flipped with probability $\epsilon$ then $RM\_reconstruction\_algorithm2(m_i,r_i,\epsilon,\tilde{f},c,c')$ returns $f$ with probability $1-O(2^{-2^{c''\sqrt{m}}})$.
\end{theorem}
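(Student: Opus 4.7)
The plan is to apply Theorem~\ref{recoveryTheorem} to each of the $m' \cdot 2^{|S_i|}$ subcube reconstructions, show that very few err, and then use a list-decoding finish to recover $f$ exactly. The first step is to choose $c > 0$ small enough that every restricted code $RM(m - |S_i|, r)$ has rate strictly below $1 - H(\epsilon)$: since $|S_i| = O(\sqrt{m})$, the ratio $\binom{m - |S_i|}{\le r}/2^{m - |S_i|}$ differs from $\binom{m}{\le r}/2^m$ only by a factor approaching $1$ as $m \to \infty$ in the relevant regime $|m - 2r| = O(\sqrt{m})$ forced by the capacity condition, so the restricted rates remain strictly below capacity. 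Theorem~\ref{recoveryTheorem} then bounds the per-restriction error probability by $p \le 2^{-c_1 \sqrt{m} \log_2 m}$ for some $c_1 > 0$ depending on $c$ and $c'$.

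The core concentration step goes as follows. Fix $i$: the subcubes $\{y \in \{0,1\}^m : y_{S_i} = x\}$ over $x \in \{0,1\}^{S_i}$ partition $\{0,1\}^m$, so the corresponding restrictions of $\tilde{f}$ depend on disjoint noise bits and the reconstructions $f^{(i,x)}$ are mutually independent in $x$. Letting $B_i$ count the bad $x \in \{0,1\}^{S_i}$, $B_i$ is a sum of $2^{|S_i|}$ i.i.d.\ Bernoulli$(p)$ variables. Choosing $\delta = m^{-3c}$ and applying the standard Chernoff tail bound $\mathbb{P}[B_i \ge k] \le (e \cdot 2^{|S_i|} p/k)^k$ with $k = \delta \cdot 2^{|S_i|}$ gives $\mathbb{P}[B_i > \delta \cdot 2^{|S_i|}] \le 2^{-2^{(c-o(1))\sqrt{m}}}$, since $\log_2(\delta/(ep)) = \Omega(\sqrt{m}\log m)$ and $k = 2^{c\sqrt{m}}/\mathrm{poly}(m)$. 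A union bound over the $m' = O(\sqrt{m})$ partitions then ensures, with failure probability $O(2^{-2^{c''\sqrt{m}}})$ for some $c'' > 0$, that all $B_i \le \delta \cdot 2^{|S_i|}$ simultaneously.

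On this good event, each $y \in \{0,1\}^m$ decoded incorrectly by $\widehat{f}$ must have $y_{S_i}$ bad for at least half the $i$'s, so the number of wrong bits in $\widehat{f}$ is at most the number of tuples $(x_1, \ldots, x_{m'}) \in \prod_i \{0,1\}^{S_i}$ with at least $m'/2$ bad entries, which is bounded by
\[
2 \cdot 2^m \binom{m'}{m'/2} \delta^{m'/2} \le 2^{m - (3/2 + o(1)) \sqrt{m} \log_2 m} < \tfrac{1}{2} \cdot 2^{m - \sqrt{m}\log_2 m}.
\]
Hence $f \in \ell$ at step 5 of the algorithm. Lemma~\ref{listDecodeLem}, applied with $\max(r, m/2)$ in place of $r$ exactly as in the proof of Theorem~\ref{recoveryTheorem}, then guarantees with probability $1 - O(2^{-2^{m/3}})$ that no codeword in $\ell$ other than $f$ agrees with $\tilde{f}$ on at least as many bits as $f$, so step 6 returns $f$. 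Combining the two failure events yields an overall block error probability of $O(2^{-2^{c''\sqrt{m}}})$.

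The main obstacle is that $B_i$ and $B_j$ are not independent across $i \ne j$, as they both depend on the same noise vector $Z$, so a joint Chernoff analysis is unavailable. The key observation is that joint independence is unnecessary: within each partition the $2^{|S_i|}$ badness events are genuinely independent (disjoint noise bits), giving marginal concentration for each $B_i$, after which the intersection of the $m'$ favourable events follows from a mere union bound. A secondary subtlety is tuning $\delta$: it must be large enough that Chernoff yields a doubly-exponential bound in $\sqrt{m}$, yet small enough that $\binom{m'}{m'/2} \delta^{m'/2}$ shrinks faster than $2^{-\sqrt{m}\log_2 m}$; the polynomial choice $\delta = m^{-3c}$ meets both constraints.
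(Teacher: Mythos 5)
Your proposal is correct and follows essentially the same route as the paper: partition the $m$ coordinates, apply Theorem~\ref{recoveryTheorem} to each subcube, exploit independence within a partition for a Chernoff bound, union-bound over the $m'$ partitions, count the tuples with a majority of bad coordinates, and finish with list decoding via Lemma~\ref{listDecodeLem}. The only substantive difference is the parametrization of the Chernoff threshold: you allow a fraction $\delta=m^{-3c}$ of bad $x$'s per partition (so a threshold of $\delta\cdot 2^{|S_i|}$), whereas the paper uses the fixed count $2^{c\sqrt{m}/2}$ (i.e.\ a fraction roughly $2^{-c\sqrt{m}/2}$). Both choices thread the needle -- your larger $\delta$ still shrinks fast enough that $\delta^{m'/2}=2^{-(3/2+o(1))\sqrt{m}\log_2 m}$ beats the list radius $2^{m-\sqrt{m}\log_2 m}$, and the Chernoff exponent $\delta\,2^{|S_i|}\cdot\Omega(\sqrt{m}\log m)=2^{(c-o(1))\sqrt{m}}$ gives the same doubly-exponential tail shape as the paper's $2^{-2^{c\sqrt{m}/2}}$. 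One small inaccuracy: $|m-2r|=O(\sqrt{m})$ is not literally ``forced by the capacity condition'' -- small $r$ also satisfies the hypothesis -- but for such $r$ the restricted rates are trivially below capacity, so your conclusion that a sufficiently small $c$ keeps all restricted codes below capacity still holds; the precise argument is that since $\limsup\binom{m_i}{\le r_i}2^{-m_i}$ is bounded strictly below $1-H(\epsilon)$, shrinking $m$ by $c\sqrt{m}$ moves the binomial threshold by $O(c)$ standard deviations, which stays under capacity for small $c$. This matches the paper's ``for sufficiently small $c$'' step. Your handling of the $\max(r,m/2)$ substitution for Lemma~\ref{listDecodeLem} matches the paper's.
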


\begin{proof}
First, observe that for sufficiently small values of $c$, $\limsup_{i\to\infty} {m_i-c\sqrt{m_i} \choose \le r}2^{-m_i}<1-H(\epsilon)$ as well, and pick some such $c$. When this algorithm is run, it will always be the case that $c\sqrt{m}-c^2-1\le |S_j|\le c\sqrt{m}+1$, so for an arbitrary series $j_1,j_2,\ldots$ it will be the case that $\limsup_{i\to\infty} {m_i-|S_{j_i}| \choose \le r}2^{-m_i}<1-H(\epsilon)$. Now, let $c'$ be the constant stated to exist by lemma \ref{bitError} for $(m'_1,r'_1),(m'_2,r'_2),...$ listing the values of $(m_i-|S_j|,r_i)$ for all possible values of $i$ and $j$. 

Next, for each $1\le j\le m'$ and $x\in \{0,1\}^{S_j}$, let $B_{j,x}$ be the indicator variable that is $0$ if $f^{(j,x)}$ is equal to the appropriate restriction of $f$ and $1$ if it is not. By the previous theorem, $\mathbb{P}[B_{j,x}=1]=O(2^{-c'\sqrt{m}\log_2(m)/2})$ for any choice of $j$ and $x$. Furthermore, $B_{j,0^{S_j}},...,B_{j,1^{S_j}}$ are independent for any $j$ because the iterations of $RM\_reconstruction\_algorithm$ used to generate them were run on restrictions of $\tilde{f}$ to nonnoverlapping sets. As a result, 
\[\mathbb{P}\left[\sum_{x\in\{0,1\}^{S_j}} B_{j,x}\ge 2^{c\sqrt{m}/2}\right]\le 2^{|S_j|\cdot c\sqrt{m}/2}2^{\left(O(1)-c'\sqrt{m}\log_2(m)/2\right)2^{c\sqrt{m}/2}}=O(2^{-2^{c\sqrt{m}/2}}/m). \] If $\sum_{x\in\{0,1\}^{S_j}} B_{j,x}< 2^{c\sqrt{m}/2}$ for all $j$ then
\begin{align}
    &|\{x:\widehat{x}\ne f(x)\}|\\
    &\le\sum_{T\subseteq\{1,...,m'\}:|T|\ge m'/2} \prod_{j\in T} 2^{c\sqrt{m}/2}\prod_{j\not\in T} 2^{|S_j|}\\
    &\le \sum_{T\subseteq\{1,...,m'\}:|T|\ge m'/2} 2^{|T|\cdot c\sqrt{m}/2}2^{(m'-|T|)(c\sqrt{m}+1)}\\
    &\le \sum_{T\subseteq\{1,...,m'\}:|T|\ge m'/2} 2^{(3/4)cm'\sqrt{m}+m'/2}\\
    &\le 2^{(3/4)cm'\sqrt{m}+3m'/2}\\
    &\le 2^{(3/4)m+(3/4)c\sqrt{m}+(3/2)(\sqrt{m}/c+1)}\\
    &=o(2^{m-\sqrt{m}\log_2(m)}).
\end{align}
That in turn means that the probability that $\sum_{x\in\{0,1\}^{S_j}} B_{j,x}< 2^{c\sqrt{m}/2}$ for all $j$ and\\ $RM\_reconstruction\_algorithm2(m_i,r_i,\epsilon,\tilde{f},c,c')$ returns something other than $f$ is $O(2^{-\sqrt[3]{n}})$. So, this algorithm returns $f$ with probability $1-O(m' 2^{-2^{c\sqrt{m}/2}}/m+2^{-\sqrt[3]{n}})=1-O(2^{-2^{c\sqrt{m}/2}})$, as desired. 
\end{proof}

\section{Strong converse}\label{converse}
The only part of the boosting framework that requires the rate of the code to be below the capacity of the channel is the entropy argument establishing a nontrivial base case. So, for any code with rate greater than the capacity of the channel, it must be impossible to recover a bit from the noisy versions of the other bits with accuracy nontrivially greater than $1/2$ because otherwise our argument would imply that we could recover codewords with high probability for some codes with rate greater than the capacity of the channel. The most obvious usage of this argument would merely rule out the ability to recover a bit with accuracy $1/2+\Omega(1)$, but a more careful application of it can set considerably tighter bounds on the accuracy with which we can recover bits. To demonstrate that, consider the following variant of lemma \ref{errorBound1}.

\begin{lemma}
Let $\epsilon\in (0,1/2)$, $c>0$, and $m_1,m_2,m_3,...$ and $r_1,r_2,...$ be sequences of positive integers such that $\lim_{i\to\infty} m_i=\infty$ and $P_e(m_i,r_i,\epsilon)\le 1/2-2^{-c\sqrt{m_i}/3}$ for all $i$. Also, let $m'_i=m_i+2 \lfloor c\sqrt{m_i}\rfloor$ for each $i$. Then $P_e(m'_i,r_i,\epsilon)=O(2^{-c\sqrt{m_i}/3})$.
\end{lemma}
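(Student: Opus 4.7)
The structure mirrors Lemma \ref{errorBound1}, except that the base advantage $1/2 - P_e(m_i,r_i,\epsilon)$ is now a vanishing parameter $\delta_i := 2^{-c\sqrt{m_i}/3}$ rather than a constant. The plan is to apply one round of sunflower-boosting, as in the proof of Lemma \ref{errorBound1} via Lemma \ref{boostLemma1}, with dimensions tuned so that both the ``bad kernel'' probability and the majority-vote failure probability are $O(\delta_i)$. Concretely, set $k_i = \lfloor c\sqrt{m_i}\rfloor$ and $\underline{m_i} = m_i - k_i$, use Lemma \ref{sunflowerLem} to construct a $(\underline{m_i},m_i,m'_i)$-subspace sunflower of size $b_i = 2^{m'_i - m_i - k_i + 1} = 2^{c\sqrt{m_i}+O(1)}$, run $L_{m_i,r_i,\epsilon}$ on the restriction of $\tilde{f}$ to each petal, and return the majority vote. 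Since this procedure ignores $\tilde{f}(0^{m'_i})$ and $L_{m'_i,r_i,\epsilon}$ is at least as accurate as any such decoder, the resulting upper bound applies to $P_e(m'_i,r_i,\epsilon)$ itself.

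For the analysis, condition on the restriction $z'$ of $Z$ to the kernel. Using the hypothesis $(1/2 - P_e(m_i,r_i,\epsilon))^{2} \ge \delta_i^{2}$, Lemma \ref{boostLemma1} gives
\[
\mathbb{P}\bigl[P_e(\underline{m_i},m_i,r_i,\epsilon \mid z') \ge P_e(m_i,r_i,\epsilon)/2 + 1/4\bigr] \le \frac{2^{2-k_i}}{\delta_i^{2}} = O\bigl(2^{2 - c\sqrt{m_i} + 2c\sqrt{m_i}/3}\bigr) = O(\delta_i).
\]
On any ``good'' $z'$ the conditional error is at most $1/2 - (1/2 - P_e(m_i,r_i,\epsilon))/2 \le 1/2 - \delta_i/2$, and the $b_i$ petal decodings are mutually independent conditionally on $z'$. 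Hence majority vote fails with probability at most $(1-\delta_i^{2})^{b_i/2} \le \exp(-b_i\delta_i^{2}/2) = \exp(-2^{c\sqrt{m_i}/3 + O(1)})$, which is negligible compared to $\delta_i$. Summing the two contributions yields $P_e(m'_i,r_i,\epsilon) = O(\delta_i) = O(2^{-c\sqrt{m_i}/3})$, as desired.

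The main thing to verify is that the parameter window is nonempty with the \emph{same} constant $c$ as in the hypothesis, since the strong-converse application requires the advantage-to-error exponent to be preserved exactly. The bad-kernel bound demands $k_i \gtrsim c\sqrt{m_i}$, while the majority bound demands $m'_i - m_i - k_i \gtrsim (2c/3)\sqrt{m_i} + O(\log m_i)$. Because $m'_i - m_i = 2\lfloor c\sqrt{m_i}\rfloor$ exceeds the sum $(5c/3)\sqrt{m_i}$ asymptotically, $k_i = \lfloor c\sqrt{m_i}\rfloor$ simultaneously satisfies both constraints for all large $i$. No additional Fourier machinery is needed beyond what already went into Lemma \ref{boostLemma1}; the only delicate step is keeping track of constants to confirm that both error terms really do come out as $O(2^{-c\sqrt{m_i}/3})$ with no loss in the exponent $c\sqrt{m_i}/3$.
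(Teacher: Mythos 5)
Your proposal is correct and follows the same route as the paper: one round of sunflower-boosting with kernel dimension $\underline{m_i}=m_i-\lfloor c\sqrt{m_i}\rfloor$, a bad-kernel bound via Lemma \ref{boostLemma1} with $(1/2-P_e)^{2}\ge\delta_i^{2}$, and a Bhattacharyya-type majority-vote bound $(1-\delta_i^{2})^{b_i/2}$ that is doubly exponentially small. The parameter bookkeeping (bad-kernel term $2^{2-k_i}/\delta_i^2=O(\delta_i)$; petal count $b_i=2^{\lfloor c\sqrt{m_i}\rfloor+1}$; budget $m'_i-m_i=2\lfloor c\sqrt{m_i}\rfloor$ split between $k_i$ and the majority-vote exponent) matches the paper's choices exactly.
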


\begin{proof}
First, let $\underline{m_i}=m_i-\lfloor c\sqrt{m_i}\rfloor$, for each $i$ and consider trying to determine the value of $f(0^{m_i})$ from $\tilde{f}$ using the following boosting algorithm: \begin{enumerate}
\item Pick a random $\underline{m_i}$-dimensional subspace of $\mathbb{F_2}^{m'_i}$, $V$.

\item Pick $m_i$-dimensional subspaces $W_1,...,W_{2^{m'_i+\underline{m_i}-2m_i+1}}$  of $\mathbb{F_2}^{m_i}$ independently of $\tilde{f}$ in such a way that $V\subseteq W_j$ for all $j$ and $W_j\cap W_{j'}=V$ for all $j'\ne j$.

\item For each $j$, run $L_{m_i,r_i,\epsilon}$ on the restriction of $\tilde{f}$ to $W_j$, and return the most common output.
\end{enumerate}

By lemma $1$ it is possible to find such a collection of subspaces of $\mathbb{F}_2^{m'_i}$ so this algorithm is possible to carry out. Also, if we let $z'$ be the restriction of $Z$ to $V$ then conditioned on a fixed value of $z'$ each use of $L$ in the final step of the algorithm independently returns a value that differs from $f(0^{m'_i})$ with probability $P_e(\underline{m_i},m_i,r_i,\epsilon|z')$. Then lemma \ref{boostLemma1} implies that

\[\mathbb{P}[P_e(\underline{m_i},m_i,r_i,\epsilon|z')\ge 1/2-2^{-c\sqrt{m_i}/3}/2]\le \frac{2^{2-\lfloor c\sqrt{m_i}\rfloor }}{2^{-2c\sqrt{m_i}/3}}\]
for any $i$. Conditioned on any fixed value of $z'$ such that $P_e(\underline{m_i},m_i,r_i,\epsilon|z')< 1/2-2^{-c\sqrt{m_i}/3}/2$, this algorithm returns the wrong answer with probability at most $(4(1/2-2^{-c\sqrt{m_i}/3}/2)(1/2+2^{-c\sqrt{m_i}/3}/2))^{2^{m'_i+\underline{m_1}-2m_i}}$, so its overall probability of returning a value different from $f(0^{m'_i})$ is at most $2^{2+2c\sqrt{m_i}/3-\lfloor c\sqrt{m_i}\rfloor}+(1-2^{-2c\sqrt{m_i}/3})^{2^{\lfloor c\sqrt{m_i}\rfloor}}=O(2^{-c\sqrt{m_i}/3})$. Furthermore, this algorithm never uses the value of $\tilde{f}(0^{m'_i})$ and $L$ computes $f(0^m)$ at least as accurately as any other algorithm that ignores the value of $\tilde{f}(0^m)$, so it also has an error rate of $O(2^{-c\sqrt{m_i}/3})$, as desired.
\end{proof}

At this point, we can use the boosting argument from before to show that for any $c'$ if $P_e(m'i,r_i,\epsilon)=O(2^{-c\sqrt{m_i}/3})$ then we can recover a codeword drawn from $RM(m_i+c'\sqrt{m_i},r_i)$ with probability $1-o(1)$ after each of its bits is flipped with probability $\epsilon$. More formally, we can prove the following variants of Lemma \ref{bitError} and Theorem \ref{recoveryTheorem}. 

\begin{lemma} \label{BMSBoostLem}
Let $\epsilon\in (0,1/2)$, $c>0$, and $m_1,m_2,m_3,...$ and $r_1,r_2,...$ be sequences of positive integers such that $\lim_{i\to\infty} m_i=\infty$ and $P_e(m_i,r_i,\epsilon)\le 1/2-2^{-c\sqrt{m_i}/3}$ for all $i$. Then there exists $c'>0$ such that $P_e(m_i+3\lfloor c\sqrt{m_i}\rfloor ,r_i,\epsilon)=O(m_i^{-c' \sqrt{m_i}})$.
\end{lemma}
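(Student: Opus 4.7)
The plan is to combine the preceding lemma with the recursive sunflower-boosting machinery of Lemma \ref{boostStepLem}, in direct analogy with the proof of Lemma \ref{bitError}. The preceding lemma already gives $P_e(m'_i, r_i, \epsilon) = O(2^{-c\sqrt{m_i}/3})$ where $m'_i = m_i + 2\lfloor c\sqrt{m_i}\rfloor$. The target dimension $m_i + 3\lfloor c\sqrt{m_i}\rfloor$ is $\lfloor c\sqrt{m_i}\rfloor$ higher than $m'_i$, so there is a budget of roughly $\lfloor c\sqrt{m_i}\rfloor$ additional dimensions available for further boosting, which is exactly the budget used in the proof of Lemma \ref{bitError} to convert a vanishing bit-error probability into $m_i^{-\Omega(\sqrt{m_i})}$.

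Following the template of Lemma \ref{bitError}, I would iterate Lemma \ref{boostStepLem} with $k=\lceil \log_2 m_i\rceil$ and $\overline{m}-m = 2\lceil \log_2 m_i\rceil$, adding $2\lceil \log_2 m_i\rceil$ dimensions per step. The starting value $P_e(m'_i,r_i,\epsilon) = O(2^{-c\sqrt{m_i}/3})$ is (for large $i$) well below the threshold $\min(e^{-108\ln(64/\epsilon(1-\epsilon))\sqrt[4]{m_i}}, 2^{-2\log_2^2 m_i}/54^4)$ that the proof of Lemma \ref{bitError} uses to trigger its per-step contraction, so exactly the same calculation yields, at each step,
\[ P_e(m'' + 2\lceil \log_2 m_i\rceil, r_i, \epsilon) \le \max\bigl(2^{-\log_2^2(m_i)/4}\,P_e(m'',r_i,\epsilon),\; 3(8/9)^{m_i}\bigr). \]
Because the multiplicative term only tightens $P_e$ further, the hypothesis of the inductive step is preserved throughout the iteration, and the residual additive $(8/9)^{m_i}$ term is asymptotically negligible compared to our target bound.

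Iterating $\left\lfloor \lfloor c\sqrt{m_i}\rfloor / (2\lceil \log_2 m_i\rceil)\right\rfloor$ times consumes the full $\lfloor c\sqrt{m_i}\rfloor$-dimension budget and applies the contraction $2^{-\log_2^2(m_i)/4}$ once per step, for a cumulative factor of $2^{-\Omega(\sqrt{m_i}\log m_i)}$. Combining this with the initial $O(2^{-c\sqrt{m_i}/3})$ factor yields
\[ P_e(m_i + 3\lfloor c\sqrt{m_i}\rfloor, r_i, \epsilon) = O\bigl(m_i^{-c' \sqrt{m_i}}\bigr) \]
for a suitable $c' > 0$ (one can take, e.g., $c' = c/16$), which is exactly the claim.

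The main obstacle is purely mechanical bookkeeping rather than any new idea: one must verify the side conditions $k \le \min(m, \overline{m}-m)$ of Lemma \ref{boostStepLem} at every intermediate step (trivial here, since $\lceil \log_2 m_i\rceil \le m$ and equals half of $\overline{m}-m$ by construction), check that the residual $(8/9)^{m_i}$ terms accumulated over the iteration never dominate the target (clear since they are super-exponentially smaller than $m_i^{-c'\sqrt{m_i}}$), and confirm that the hypothesis $P_e \le 2^{-2\log_2^2 m_i}/54^4$ is indeed maintained after each contraction (which holds because $2^{-c\sqrt{m_i}/3}$ is already below this threshold and the iteration only strengthens it). None of these introduce new difficulties beyond what is already handled in Lemma \ref{bitError}, so the argument transfers essentially verbatim.
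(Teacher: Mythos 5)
Your proposal is correct and follows essentially the same route as the paper's own proof: start from the preceding lemma's bound $P_e(m_i+2\lfloor c\sqrt{m_i}\rfloor,r_i,\epsilon)=O(2^{-c\sqrt{m_i}/3})$, verify it is below the threshold $\min(e^{-108\ln(64/\epsilon(1-\epsilon))\sqrt[4]{m_i}},2^{-2\log_2^2(m_i)}/54^4)$, and iterate Lemma \ref{boostStepLem} with $k=\lceil\log_2 m_i\rceil$ and $\overline{m}-m=2\lceil\log_2 m_i\rceil$ over the remaining $\lfloor c\sqrt{m_i}\rfloor$-dimension budget, yielding the per-step contraction $\max\bigl(2^{-\log_2^2(m_i)/4}P_e(m'',r_i,\epsilon),\,3(8/9)^{m_i}\bigr)$ and ultimately $c'=c/16$. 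The only cosmetic difference is that you explicitly verify the side conditions ($k\le\min(m,\overline{m}-m)$, preservation of the threshold hypothesis, negligibility of the additive $(8/9)^{m_i}$ residual), which the paper leaves implicit.
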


\begin{proof}
We know that $P_e(m_i+2\lfloor c\sqrt{m_i}\rfloor,r_e,\epsilon)=O(2^{-c\sqrt{m_i}/3})$ by the previous lemma. That means that that there exists $i_0$ such that for all $i\ge i_0$, $m_i\ge 2$ and
\[P_e(m_i+2\lfloor c\sqrt{m_i}\rfloor,r_i,\epsilon)\le \min( e^{-108\ln(64/\epsilon(1-\epsilon))\sqrt[4]{m_i}},2^{-2 \log_2^2(m_i)}/54^4).\]
Thus for any $i\ge i_0$ and $m''\ge m_i+2\lfloor c\sqrt{m_i}\rfloor$, it must be the case by lemma \ref{boostStepLem} that
\begin{align}
&P_e(m''+2\lceil \log_2{m_i}\rceil,r_i,\epsilon)\\
&\le 18P^{5/4}_{e}(m'',r_i,\epsilon) +9P_e(m'',r_i,\epsilon)\left(\frac{4\log(64/\epsilon(1-\epsilon))}{\log(1/P_e(m'',r_i,\epsilon))}\right)^{\log_2(m_i)}+(8/9)^{m_i}\\
&\le 2^{-\log_2^2(m_i)/2} P_e(m'',r_i,\epsilon)/3+m_i^{-\log_2(m_i)/4}P_e(m'',r_i,\epsilon)/3+(8/9)^{m_i}\\
&\le \max(2^{-\log_2^2(m_i)/4} P_e(m'',r_i,\epsilon),3(8/9)^{m_i}).
\end{align}
Repeated application of this implies that for all $i\ge i_1$,
\[P_e(m_i+3\lfloor c\sqrt{m_i}\rfloor,r_i,\epsilon)\le \max\left(2^{-c \log_2(m_i)\sqrt{m_i}/16},3(8/9)^{m_i}\right).\]
This gives us the desired conclusion for $c'=c/16$.
\end{proof}

\begin{lemma}
Let $\epsilon\in (0,1/2)$, $c>0$, and $m_1,m_2,...$ and $r_1,r_2,...$ be sequences of positive integers such that $\lim_{i\to\infty} m_i=\infty$, $|r_i-m_i/2|=O(\sqrt{m_i})$, and $P_e(m_i,r_i,\epsilon)\le 1/2-2^{-c\sqrt{m_i}/3}$ for all $i$. Also let $c'$ be the constant stated to exist by the previous lemma. Then in the limit as $i\to\infty$, if $f$ is randomly drawn from $RM(m+3\lfloor c\sqrt{m}\rfloor,r)$ and $\tilde{f}$ is a noisy version of $f$ in which each bit is independently flipped with probability $\epsilon$ then $RM\_reconstruction\_algorithm(m_i+3\lfloor c\sqrt{m_i}\rfloor,r_i,\epsilon,\tilde{f},c')$ returns $f$ with probability $1-O(2^{-c'\sqrt{m}\log_2(m)/2})$.
\end{lemma}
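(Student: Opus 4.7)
The plan is to follow essentially the same two-step path as in the proof of Theorem \ref{recoveryTheorem}, but with the new bit-error bound supplied by the preceding Lemma \ref{BMSBoostLem} and with the blocklength shifted from $m_i$ to $m'_i := m_i+3\lfloor c\sqrt{m_i}\rfloor$. First I would invoke Lemma \ref{BMSBoostLem} directly to conclude that $P_e(m'_i,r_i,\epsilon)=O(m_i^{-c'\sqrt{m_i}})$ with the constant $c'$ named in that lemma. Since RM codes are doubly transitive, this bit-error probability is the same at every coordinate, so by a union bound over the $2^{m'_i}$ coordinates, the rounded decoding $\hat f$ produced in step~1 of $RM\_reconstruction\_algorithm$ disagrees with $f$ on more than $2^{m'_i-c'\sqrt{m'_i}\log_2(m'_i)/2}/2$ coordinates with probability at most $O(2^{-c'\sqrt{m'_i}\log_2(m'_i)/2})$ (the extra factor $m_i^{-c'\sqrt{m_i}/2}\cdot 2^{m_i}$ absorbs into this since $m_i^{-c'\sqrt{m_i}}=2^{-c'\sqrt{m_i}\log_2 m_i}$ and $m_i$ and $m'_i$ differ by at most a lower-order additive term).

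Second, I would apply the list-decoding statement of Lemma \ref{listDecodeLem}. The hypothesis of that lemma requires $|m'_i-2r_i|=O(\sqrt{m'_i})$, which is immediate from the current hypothesis $|r_i-m_i/2|=O(\sqrt{m_i})$ together with $m'_i-m_i=O(\sqrt{m_i})$, so both are of the right order. Conditional on the high-probability event that $\hat f$ is within $2^{m'_i-c'\sqrt{m'_i}\log_2(m'_i)/2}/2$ of $f$, every codeword put in the list $\ell$ lies within $2^{m'_i-c'\sqrt{m'_i}\log_2(m'_i)/2}$ of $f$, so in particular $f\in \ell$. Lemma \ref{listDecodeLem} with the same constant $c'$ then asserts that with probability $1-O(2^{-2^{m'_i/3}})$ no other element of $RM(m'_i,r_i)$ within that radius agrees with $\tilde f$ on at least as many coordinates as $f$ does, so the final step of the algorithm correctly returns $f$.

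Combining the two failure events via a union bound gives an overall failure probability of $O(2^{-c'\sqrt{m'_i}\log_2(m'_i)/2})+O(2^{-2^{m'_i/3}})=O(2^{-c'\sqrt{m}\log_2(m)/2})$, matching the claimed rate (since $m'_i$ and $m_i$ agree up to lower-order terms in the exponent). The proof is essentially a direct concatenation of Lemma \ref{BMSBoostLem} and Lemma \ref{listDecodeLem} with the bookkeeping done for the shifted blocklength.

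I do not expect any genuine obstacle here: all the heavy lifting has been carried out in the two preceding lemmas, and the only item to double-check carefully is that the mild blocklength shift $m_i\mapsto m'_i=m_i+O(\sqrt{m_i})$ preserves both the hypothesis $|r_i-m'_i/2|=O(\sqrt{m'_i})$ of Lemma \ref{listDecodeLem} and the rate expression $2^{-c'\sqrt{m}\log_2(m)/2}$, which it does since a multiplicative change of at most $(1+O(1/\sqrt{m}))$ in $\sqrt{m}\log m$ is absorbed into the constant $c'$. Thus the argument is a straightforward adaptation, and no new Fourier-analytic or combinatorial input is required.
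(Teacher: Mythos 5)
Your proof is correct and follows essentially the same route as the paper's: invoke Lemma~\ref{BMSBoostLem} for the bitwise bound $P_e(m'_i,r_i,\epsilon)=O(m_i^{-c'\sqrt{m_i}})$ at the shifted blocklength $m'_i=m_i+3\lfloor c\sqrt{m_i}\rfloor$, use the coordinate symmetry of RM codes to get the same bound at every position, argue that $f\in\ell$ with probability $1-O(2^{-c'\sqrt{m}\log_2(m)/2})$, and then close via Lemma~\ref{listDecodeLem} after verifying its $|m'_i-2r_i|=O(\sqrt{m'_i})$ hypothesis. One small terminological slip: the step bounding the probability that $\hat f$ disagrees with $f$ on more than $2^{m'_i-c'\sqrt{m'_i}\log_2(m'_i)/2}/2$ coordinates is a Markov (first-moment) bound on the number of errors, not a union bound over coordinates; a union bound over the event that \emph{any} coordinate is wrong only gives $O(2^{m'_i-c'\sqrt{m_i}\log_2 m_i})$, which is far too weak. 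Your parenthetical computation (expected error count divided by the threshold) is exactly Markov's inequality and does give the stated rate, so the substance is right even if the name is wrong.
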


\begin{proof}
First, note that by the previous lemma and symmetry between different values of $x$, $\mathbb{P}[\hat{f}(x)\ne f(x)]=O(m_i^{-c'\sqrt{m_i}})$ for all $x$. So, $f\in \ell$ with probability $1-O(2^{-c'\sqrt{m_i}\log_2(m_i)/2})$, in which case every element of $\ell$ will disagree with $f$ on at most $2^{m_i-c'\sqrt{m_i}\log_2(m_i)/2}$ elements. By lemma \ref{listDecodeLem}, the probability that this happens and there is an element of $\ell$ that agrees with $\tilde{f}$ on at least as many elements as $f$ does is $1-O(2^{-\sqrt[3]{n}})$. This algorithm recovers thus $f$ with probability $1-O(2^{-c'\sqrt{m}\log_2(m)/2})$, as desired.
\end{proof}

Given any sequences $m_i$ and $r_i$ such that $|r_i-m_i/2|=O(\sqrt{m_i})$ and the rate of $RM(m_i,r_i)$ is bounded below the capacity of the channel, there exists $c>0$ such that the rate of $RM(m_i+c\sqrt{m_i},r_i)$ is also bounded below the capacity of the channel, so the fact that we cannot recover a codeword drawn from $RM(m_i+c\sqrt{m_i},r_i)$ implies that $P_e(m_i,r_i,\epsilon)> 1/2-2^{-c\sqrt{m_i}/9}$ for all but finitely many $i$. Furthermore, $P_e(m,r,\epsilon)$ is nondecreasing in $r$, so that means that for any  sequences $m_i$ and $r_i$ such that the rate of $RM(m_i,r_i)$ is bounded below the capacity of the channel $P_e(m_i,r_i,\epsilon)= 1/2-2^{-\Omega(\sqrt{m_i})}$.

Also, note that while this is a bound on the accuracy with which we can determine the value of $f(x)$ from the values of $\tilde{f}$ other than $\tilde{f}(x)$, we can convert any algorithm for finding $f(x)$ given all values of $\tilde{f}$ into one that ignores $\tilde{f}(x)$ simply by setting the value of $\tilde{f}(x)$ in its input randomly. So, for any algorithm $L^\star$ that attempts to compute $f(0^m)$ given $\tilde{f}$, if the rate of the RM code is bounded above the capacity of the channel, we have 
\[\mathbb{P}[L^\star(\tilde{f})=f(0^m)]\le 1-\epsilon+O(2^{-c\sqrt{m}})-\Omega(\mathbb{P}[L^\star(\tilde{f})\ne \tilde{f}(0^m)])\]
for some constant $c>0$. In other words, given the entirety of $\tilde{f}$ we cannot determine the value of $f(x)$ with accuracy significantly greater than $1-\epsilon$, and the only way to attain an accuracy that high is to essentially always guess that $f(x)$ is $\tilde{f}(x)$. Thus there is no algorithm for recovering $f(x)$ that is significantly better than taking the value of $\tilde{f}(x)$.

\section{Generalization to BMS channels}\label{bms}
Now, consider a case where instead of flipping every bit with some fixed probability $\epsilon$, for each bit $X$ our channel generates $\epsilon\in [0,1/2]$ according to some probability distribution, selects $Z\in\mathbb{F}_2$ which is $1$ with probability $\epsilon$ and $0$ otherwise, and then outputs $(\epsilon_u,X_u+Z_u)$. We claim that our result can be generalized to a version saying that for any such channel, if $\limsup_{i\to\infty} {m_i \choose \le r_i}2^{-m_i}$ is strictly less than the capacity of the channel we can recover codewords in $RM(m,r)$ after putting them through the channel with probability $1-o(1)$. 

In order to show this, we would first observe that for any positive integer $k$, if we round $\epsilon$ to the nearest multiple of $1/k$ after receiving it from the channel that gives us a new channel of equal or lower capacity. However, in the limit as $k\to\infty$, the capacity of this channel goes to the capacity of the original channel, so for sufficiently large $k$ this channel still has capacity greater than $\limsup_{i\to\infty} {m_i \choose \le r_i}2^{-m_i}$. So, we can choose such a $k$, round all $\epsilon$ output by the channel to the nearest multiple of $1/k$, and then replace these epsilons with the average values of epsilon that round to those multiples of $1/k$. Thus, we can assume that there are only finitely many possible values of $\epsilon$.

Call the BMS channel $\mathcal{P}$, and define $P_e(m,r,\mathcal{P})$ as the appropriate analog of $P_e(m,r,\epsilon)$, i.e., the probability that we can determine the value of $f(0^m)$ correctly given the values of $\epsilon$ and $\tilde{f}(x)$ for all $x\ne 0^m$. 
The base case holds for BMS channels, and 
the core idea is the same as before, setting increasingly tight bounds on the bitwise error rate that we can attain by finding a set of subspaces $W_i$ that intersect at $V$, running recovery algorithms separately on each of them, and concluding that the target bit is the value that the majority of them decodings. However, now whether the algorithm will recover the target bit correctly using a subspace depends on the values of $\epsilon$ and $Z$ on that subspace rather than just depending on the values of $Z$. This complicates the argument using Fourier transforms to bound the probability that the restriction of the error vector to $V$ is bad. 

To make that more precise, let $k$ be the number of possible ordered pairs $(\epsilon_x,Z_x)$ for a bit $f(x)$. In order to define an appropriate basis for the Fourier transform, we need $k$ orthonormal functions from the space of ordered pairs $(\epsilon_x,Z_x)$ to $\mathbb{R}$, $\mathcal{X}^{(0)}$,...,$\mathcal{X}^{(k-1)}$. Also, we want $\mathcal{X}^{(0)}$ to be the constant function $1$ so that we can have terms that are independent of most of the inputs. That allows us to define $\mathcal{X}_{(S_1,...,S_{k-1})}$ for all disjoint $S_1,...,S_{k-1}\subseteq\{0,1\}^m$ so that for each $(\epsilon,Z)$, 
\[\mathcal{X}_{(S_1,...,S_{k-1})}(\epsilon,Z)=\prod_{i=1}^{k-1}\prod_{x\in S_i}\mathcal{X}^{(i)}(\epsilon_x,Z_x)\]

The orthonormality properties of the $\mathcal{X}^{(i)}$ imply that these are also orthonormal, and there are $k^m$ of them, so they form an orthonormal basis for the functions that take an input of the form $(\epsilon,Z)$ and return a real number. In this case, $Q_{m,m}$ is still symmetric under linear transformations, so if $\pi$ is such a transformation then $\langle Q_{m,m}, \mathcal{X}_{(S_1,...,S_{k-1})}\rangle =\langle Q_{m,m}, \mathcal{X}_{(\pi(S_1),...,\pi(S_{k-1}))}\rangle$ for all $S_1,...,S_{k-1}$. The fact that the $\mathcal{X}^{(i)}$ are orthogonal implies that for all $i\ne 0$, 
\begin{align}
    &E_{(\epsilon,Z)}[\mathcal{X}^{(i)}(\epsilon,Z)]\\
    &=E_{(\epsilon,Z)}[\langle \mathcal{X}^{(0)}(\epsilon,Z), \mathcal{X}^{(i)}(\epsilon,Z)\rangle]\\
    &=\langle\mathcal{X}^{(0)}, \mathcal{X}^{(i)}\rangle\\&=0
\end{align}
So, it is still the case that
\[Q_{\underline{m},m}=\sum_{S_1,..,S_{k-1}\subseteq\mathbb{F}_2^{\underline{m}}:S_i\cap S_j=\emptyset} \langle Q_{m,m}, \mathcal{X}_{(e_m(S_1),...,e_m(S_{k-1}))}\rangle\mathcal{X}_{(S_1,...,S_{k-1})}\]
because the contributions to $Q_{\underline{m},m}$ of all other terms cancel themselves out. Similarly, it is still the case that for a given $S_1,...,S_{k-1}\subseteq \mathbb{F}_2^m$ and a random linear transformation $\pi$, the probability that $\pi(S_i)$ is in $\mathbb{F}_2^{\underline{m}}$ for all $i$ is at most $2^{-(m-\underline{m})\dim(\cup_i S_i)}$. In a slight modification of the previous notation let $\overline{(S_1,...,S_{k-1})}$ be the set of all tuples of the form $(\pi(S_1),...,\pi(S_{k-1}))$ for a linear transformation $\pi$ and let $\mathbb{S}$ be a maximal list of $(k-1)$-tuples of subsets of $\mathbb{F}_2^{m}$ that are not equivalent under linear transformations. We still set $\mathcal{X}_{\overline{(S_1,...,S_{k-1})}}=\sum_{(S'_1,...,S'_{k-1})\in \overline{(S_1,...,S_{k-1})}} \mathcal{X}_{(S'_1,...,S'_{k-1})}$. Our proof that  $\lim_{i\to\infty} P_e(m_i,r_i,\mathcal{P})=O(2^{-\sqrt[3]{m_i}})$ is essentially the same as before. However, the details of the calculations underlying the lemma bounding the expectations of the fourth powers of $\mathcal{X}_{\overline{(S_1,...,S_{k-1})}}$ need to be updated as follows.

\begin{lemma}
Let $m$ be a positive integer and $S_1,...,S_{k-1}$ be disjoint subsets of $\mathbb{F}_2^m$. Also, let a BMS channel, $\mathcal{P}$, with finitely many possible values of $\epsilon$ and appropriate functions $\mathcal{X}^{(i)}$ be selected and $B=\max_{i,\epsilon,z}|\mathcal{X}^{(i)}(\epsilon,z)|$. Then $\langle\mathcal{X}^4_{\overline{(S_1,...,S_{k-1})}}\rangle\le 2^{2dm+8d^2}B^{2^{d+2}}$, where $d=\dim(\cup_i S_i)$
\end{lemma}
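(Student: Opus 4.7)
The plan is to adapt the corresponding BSC-case lemma to the BMS setting following the same two-step strategy used there: (i) show all ``singly-covered'' $4$-tuples vanish by orthogonality, and (ii) bound the surviving tuples by a volume count. First I would expand
\[
\langle\mathcal{X}^4_{\overline{(S_1,\ldots,S_{k-1})}}\rangle \;=\; \sum_{T^{(1)},\ldots,T^{(4)}\in\overline{(S_1,\ldots,S_{k-1})}} \langle\mathcal{X}_{T^{(1)}}\mathcal{X}_{T^{(2)}}\mathcal{X}_{T^{(3)}}\mathcal{X}_{T^{(4)}}\rangle.
\]
Because the pairs $(\epsilon_x,Z_x)$ are mutually independent across $x\in\mathbb{F}_2^m$, each inner product factors position-by-position as $\prod_{x}\mathbb{E}\bigl[\prod_{j=1}^4 \mathcal{X}^{(\iota_j(x))}(\epsilon_x,Z_x)\bigr]$, where $\iota_j(x)$ denotes the unique $i\in\{1,\ldots,k-1\}$ with $x\in T^{(j)}_i$, and $\iota_j(x)=0$ otherwise (using the convention $\mathcal{X}^{(0)}\equiv 1$).

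Next, I would argue that the term vanishes whenever some position $x$ is ``singly covered'', that is, exactly one $\iota_j(x)$ is nonzero. In that case the factor at $x$ reduces to $\mathbb{E}[\mathcal{X}^{(i)}(\epsilon_x,Z_x)]=\langle\mathcal{X}^{(0)},\mathcal{X}^{(i)}\rangle=0$ by orthonormality and $\mathcal{X}^{(0)}\equiv 1$. Thus only ``doubly covered'' 4-tuples contribute, where every $x\in\bigcup_{j,i}T^{(j)}_i$ belongs to $V_j:=\bigcup_i T^{(j)}_i$ for at least two distinct $j$. For such a tuple the factor at each position $x$ has absolute value at most $B^{|\{j:\iota_j(x)\ne 0\}|}$, so
\[
|\langle\mathcal{X}_{T^{(1)}}\mathcal{X}_{T^{(2)}}\mathcal{X}_{T^{(3)}}\mathcal{X}_{T^{(4)}}\rangle|\;\le\;B^{\sum_x |\{j:\iota_j(x)\ne 0\}|}\;=\;B^{\sum_j |V_j|}\;=\;B^{4|\bigcup_i S_i|}\;\le\;B^{2^{d+2}},
\]
since each $V_j$ is the bijective linear image of $\bigcup_i S_i$ and $|\bigcup_i S_i|\le 2^d$.

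Finally, I would count surviving tuples exactly as in the BSC proof. The doubly-covered condition forces $\dim(\bigcup_j V_j)\le 2d$: picking a basis of $\bigcup_j V_j$, each basis vector must lie in at least two $V_j$, giving total incidence $\ge 2\dim(\bigcup_j V_j)$, while each $V_j$ (being $d$-dimensional) can contribute at most $d$ linearly independent vectors, giving total incidence $\le 4d$. The number of $2d$-dimensional subspaces $W\subseteq\mathbb{F}_2^m$ is at most $2^{2dm}$, and for each such $W$ the number of individual tuples $T^{(j)}=(\pi(S_1),\ldots,\pi(S_{k-1}))$ with $V_j\subseteq W$ is at most the number of linear maps from the $d$-dimensional $\mathrm{span}(\bigcup_i S_i)$ into $W$, namely $(2^{2d})^d=2^{2d^2}$. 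Taking the fourth power and summing over $W$ yields at most $2^{2dm}\cdot 2^{8d^2}=2^{2dm+8d^2}$ surviving 4-tuples, each bounded in absolute value by $B^{2^{d+2}}$, which gives the claim. The only conceptual difference from the BSC case is that the basis functions $\mathcal{X}^{(1)},\ldots,\mathcal{X}^{(k-1)}$ are now several distinct mean-zero orthonormal functions rather than a single $\pm$-signed one; since the vanishing step only uses mean-zero-ness of each $\mathcal{X}^{(i)}$ with $i\ne 0$ and the per-position bound only uses $|\mathcal{X}^{(i)}|\le B$, I do not expect any serious obstacle—this should be the most routine lemma of the BMS generalization.
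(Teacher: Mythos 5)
Your proof is correct and follows essentially the same route as the paper: expand the fourth moment as a sum over $4$-tuples of transformed sets, kill singly-covered tuples by orthogonality of $\mathcal{X}^{(i)}$ against the constant $\mathcal{X}^{(0)}$, bound surviving tuples pointwise by $B^{2^{d+2}}$, note that double coverage forces $\dim(\cup_j V_j)\le 2d$, and count via the $2^{2dm}$ choices of $2d$-dimensional subspace times $2^{2d^2}$ embeddings per tuple (fourth power giving $2^{8d^2}$). Your write-up is a bit more explicit than the paper's about the position-by-position factorization of the inner product and about where the fourth power enters the tuple count, but there is no substantive difference in the argument.
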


\begin{proof}
First, observe that
\[\langle\mathcal{X}^4_{\overline{S}}\rangle=\sum_{(S^{(1)}_1,...,S^{(1)}_{k-1}),...,(S^{(4)}_1,...,S^{(4)}_{k-1})\in\overline{S}} \langle\mathcal{X}_{(S^{(1)}_1,...,S^{(1)}_{k-1})}\cdot\mathcal{X}_{(S^{(2)}_1,...,S^{(2)}_{k-1})}\cdot\mathcal{X}_{(S^{(3)}_1,...,S^{(3)}_{k-1})}\cdot\mathcal{X}_{(S^{(4)}_1,...,S^{(4)}_{k-1})}\rangle\]
If there is any $x$ that is in exactly one of the $S^{(i)}_j$s then $\langle\mathcal{X}_{(S^{(1)}_1,...,S^{(1)}_{k-1})}\cdot\mathcal{X}_{(S^{(2)}_1,...,S^{(2)}_{k-1})}\cdot\mathcal{X}_{(S^{(3)}_1,...,S^{(3)}_{k-1})}\cdot\mathcal{X}_{(S^{(4)}_1,...,S^{(4)}_{k-1})}\rangle=0$ because the contributions of choices of $z$ and $\epsilon$ that take different values on $x$ cancel each other out. If there is no such $x$, then $\langle\mathcal{X}_{(S^{(1)}_1,...,S^{(1)}_{k-1})}\cdot\mathcal{X}_{(S^{(2)}_1,...,S^{(2)}_{k-1})}\cdot\mathcal{X}_{(S^{(3)}_1,...,S^{(3)}_{k-1})}\cdot\mathcal{X}_{(S^{(4)}_1,...,S^{(4)}_{k-1})}\rangle\le B^{4\sum_i |S_i|}\le B^{2^{d+2}}$. Also, $\cup_{i,j} S^{(i)}_j$ must contain $\dim(\cup_{i,j} S^{(i)}_j)$ points that are linearly independent. If there is no $x$ that is in exactly one of these sets, then each of these points must be in at least two of the $S^{(i)}_j$ with different values of $i$, which implies that $d\ge \dim(\cup_{i,j} S^{(i)}_j)/2$. There are at most $2^{2dm}$ $2d$-dimensional subspaces of $\mathbb{F}_2^m$, and there are at most $2^{2d^2}$ linear transformations of $(S_1,...,S_{k-1})$ contained in any such subspace. Hence $\langle\mathcal{X}^4_{\overline{S}}\rangle\le 2^{2dm+8d^2}B^{2^{d+2}}$, as desired.
\end{proof}

The combination of this change and the fact that the obvious bound on the number of equivalence classes of tuples $(S_1,...,S_{k-1})$ with $\dim(\cup_i S_i)=d$ is $k^{2^d}$ instead of $2^{2^d}$ changes the bound we get for $\mathbb{P}[P_e(\underline{m},m,r,\mathcal{P}|z')\ge 1/3]$ from $18P^{5/4}_{e}(m,r,\mathcal{P}) +9P_e(m,r,\mathcal{P})\left(\frac{4\log(64/\epsilon(1-\epsilon))}{\log(1/P_e(m,r,\mathcal{P}))}\right)^{m-\underline{m}}$ to $18P^{5/4}_{e}(m,r,\mathcal{P}) +9P_e(m,r,\mathcal{P})\left(\frac{c}{\log(1/P_e(m,r,\mathcal{P}))}\right)^{m-\underline{m}}$ for some other constant $c$ which depends on the parameters. However, this is never looser than the bound we would get for sufficiently unfavorable $\epsilon$, so the argument using that bound to prove that if $\limsup_{i\to\infty} {m_i \choose \le r_i}2^{-m_i}$ is strictly less than the capacity of the channel then there exists $c'>0$ such that $P_e(m_i,r_i,\mathcal{P})=O(m_i^{-c'\sqrt{m_i}})$ proceeds essentially unchanged. 

Finally, we need to address the part of the argument where we go from being able to recover each bit with accuracy $1-O(m_i^{-c'\sqrt{m_i}})$ to being able to completely recover the codeword with a high probability of success. This part can be left almost exactly unchanged. One may think that one should take the values of $\epsilon$ into account when deciding which element of $\ell$ to return, but ignoring them essentially just reduces the channel to a binary symmetric channel with noise $\mathbb{E}[\epsilon]$. The part of the argument showing that there is unlikely to be any codeword within $2^{m-c\sqrt{m}\log_2(m)/2}$ of the true codeword that agrees with $\tilde{f}$ on more elements than the true codeword does works for any $\epsilon\in (0,1/2)$, so this is good enough to establish that the algorithm will return the true codeword with probability $1-2^{-\Omega(\sqrt{m}\log(m))}$. The part about tightening the bounds on error probabilities further just uses our abilities to recover the codeword with probability $1-2^{-\Omega(\sqrt{m}\log(m))}$ and to go from a mostly accurate decoding to a completely accurate decoding with probability $1-O(2^{-\sqrt[3]{n}})$ as black boxes, so it is unchanged. Combining all of this would yield a theorem as follows:

 \begin{theorem}
Let $\mathcal{P}$ be a BMS channel and $m_1,m_2,...$ and $r_1,r_2,...$ be sequences of positive integers such that $\lim_{i\to\infty} m_i=\infty$ and $\limsup_{i\to\infty} {m_i \choose \le r_i}2^{-m_i}$ is strictly less than the capacity of $\mathcal{P}$. There exist $c,c',c''>0$ and an algorithm $RM\_reconstruction\_algorithm^\star$ such that in the limit as $i\to\infty$, if $f$ is randomly drawn from $RM(m_i,r_i)$ and $\tilde{f}=\mathcal{P}(f)$ then $RM\_reconstruction\_algorithm^\star(m_i,r_i,\mathcal{P},\tilde{f},c,c')$ returns $f$ with probability $1-O(2^{-2^{c''\sqrt{m}}})$.
\end{theorem}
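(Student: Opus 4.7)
The plan is to mirror the BSC argument of the previous sections, treating the BMS channel as a BSC with a non-trivial but bounded per-coordinate alphabet on the noise side. First I would reduce to a BMS channel with finitely many noise types: for any target rate strictly below $C(\mathcal{P})$, pick $k$ large enough that rounding the observed $\epsilon$ to the nearest multiple of $1/k$ yields a channel $\mathcal{P}_k$ whose capacity still exceeds $\limsup_{i\to\infty}\binom{m_i}{\le r_i}2^{-m_i}$. Since this only degrades the channel, any reconstruction guarantee for $\mathcal{P}_k$ transfers to $\mathcal{P}$. From now on I treat each coordinate's output as a pair $(\epsilon_x,Z_x)$ taking one of $k$ values, and I define $Q_{\underline m,m}(z')$, $P_e(m,r,\mathcal{P})$ and $P_e(\underline m,m,r,\mathcal{P}\mid z')$ exactly as in Sections 5--6.

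Next I would reconstruct the Fourier machinery on the enlarged alphabet. Build an orthonormal basis $\mathcal{X}^{(0)}\equiv 1,\mathcal{X}^{(1)},\ldots,\mathcal{X}^{(k-1)}$ on the space of noise types and form tensor characters $\mathcal{X}_{(S_1,\ldots,S_{k-1})}$ indexed by ordered tuples of disjoint subsets of $\mathbb{F}_2^{\overline m}$; because $\mathcal{X}^{(0)}=1$, the same cancellation argument yields
\[
Q_{\underline m,m}=\sum_{\substack{S_1,\ldots,S_{k-1}\subseteq\mathbb{F}_2^{\underline m}\\\text{pairwise disjoint}}}\langle Q_{m,m},\mathcal{X}_{(e_m(S_1),\ldots,e_m(S_{k-1}))}\rangle\mathcal{X}_{(S_1,\ldots,S_{k-1})},
\]
and the affine invariance of $Q_{m,m}$ still equates coefficients of linearly equivalent tuples. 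Using the $L_4$ estimate established just before the theorem ($\langle\mathcal{X}^4_{\overline{(S_1,\ldots,S_{k-1})}}\rangle\le 2^{2dm+8d^2}B^{2^{d+2}}$) together with the bound $k^{2^d}$ on the number of orbits of dimension $d$, the same recursion as in Lemma~\ref{boostStepLem} goes through, producing a BMS analogue
\[
P_e(\overline m,r,\mathcal{P})\le 18P_e(m,r,\mathcal{P})^{5/4}+9P_e(m,r,\mathcal{P})\left(\frac{c_\mathcal{P}}{\log(1/P_e(m,r,\mathcal{P}))}\right)^{m-\underline m}+(8/9)^{2^{\overline m-m-(m-\underline m)}},
\]
where $c_\mathcal{P}$ absorbs $k$ and $B$ but not the blocklength. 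The base case is already handled in full generality by Lemma~\ref{baseErrorGen}, which applies to any symmetric code with rate below $C(\mathcal{P})$, so RM codes kick off the recursion with a $1/2-\Omega(1)$ bound.

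With the recursion in hand I would then replay, verbatim up to constants, the program from Sections~5--7: one sunflower-boost to push $P_e(m,r,\mathcal{P})$ to $O(2^{-\sqrt[3]{m}})$, then recursive sunflower-boosting driving it to $m^{-\Omega(\sqrt m)}$. The list-decoding step of Theorem~\ref{recoveryTheorem} uses Sberlo--Shpilka's weight enumerator bound, which is a statement about the code alone and not about the channel; the only channel input is the Bhattacharyya factor $(4\epsilon(1-\epsilon))^{d/2}$ in pairing two candidate codewords, which for a BMS channel becomes the expectation of this quantity over the random $\epsilon$ and is still a constant strictly less than $1$. So Lemma~\ref{listDecodeLem} and Theorem~\ref{recoveryTheorem} port over with $\epsilon$ replaced by $\mathbb{E}[\epsilon]$ in the bound, and the grid-boosting algorithm of Section~6.2 is untouched since it uses the bit-decoder and list-decoder as black boxes; this delivers the claimed $1-O(2^{-2^{c''\sqrt m}})$ probability of exact recovery.

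The main obstacle is the bookkeeping in the Fourier step: with $k>2$ one has to track $(k-1)$-tuples of disjoint subsets rather than single subsets, and the orbit count under linear transformations grows from $2^{2^d}$ to $k^{2^d}$ while the single-orbit $L_4$-mass picks up a constant $B$ depending on the basis. The content to verify is that choosing $d_0\asymp\log\log(1/P_e(m,r,\mathcal{P}))$ still makes both the low- and high-dimensional contributions to $\sum\langle Q_{m,m},\mathcal{X}_{e_m(\cdot)}\rangle^2$ collapse at the required rate, uniformly in the (finitely many) noise types. Once that calculation is checked, everything downstream---including the strong converse-style argument that rules out competitors to the true codeword---runs with only cosmetic changes, and assembling them yields the algorithm $RM\_reconstruction\_algorithm^\star$ with the stated doubly-exponential error guarantee.
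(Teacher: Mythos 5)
Your proposal follows the paper's Section~8 argument essentially step for step: the discretization to finitely many noise levels, the tensor Fourier basis on $(k-1)$-tuples of disjoint sets with $\mathcal{X}^{(0)}\equiv 1$, the modified $L_4$ bound with the constant $B$, the orbit count $k^{2^d}$, the unchanged base case via Lemma~\ref{baseErrorGen}, and the observation that the list-decoding and grid-boosting stages use the bit-decoder as a black box so they carry over with only cosmetic changes. This is the same route the paper takes; the one cosmetic difference is in the list-decoding step, where you invoke the BMS Bhattacharyya parameter $\mathbb{E}\bigl[2\sqrt{\epsilon(1-\epsilon)}\bigr]$ directly while the paper notes that ignoring $\epsilon$ reduces the problem to a $\mathrm{BSC}(\mathbb{E}[\epsilon])$ and reuses the BSC bound, but both yield a per-coordinate factor strictly below $1$ and the rest of the computation is identical.
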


The converse is essentially the same boosting argument used as a proof by contradiction ruling out the possibility of determining the value of $f(x)$ with nontrivial accuracy from the noisy versions of the other bits when the code's rate is greater than the channel capacity. The converse uses a variant of lemma \ref{errorBound1} in order to prove that we could start our boosting with $P_e(m,r,\epsilon)=1/2-2^{-c\sqrt{m}}$ instead of needing $P_e(m,r,\epsilon)=1/2-\Omega(1)$; however, both versions of the lemma just use lemma \ref{boostLemma1} as their source of information on the probability that the restriction of $Z$ to $V$ is bad. So, adapting the converse result that if the code's rate is bounded above the channel's capacity then $P_e(m_i,r_i,\epsilon)\ge 1/2-2^{-c\sqrt{m_i}}$ for some constant $c$ to a general BMS channel does not require any changes that were not already necessary to adapt the original argument to a general BMS channel. 

A little more precisely, to adapt the converse section to general BMS channels, we can modify the argument that we can assume that there are only finitely many possible values of $\epsilon$ by arguing that we can replace the actual BMS channel with one where each possible value of $\epsilon$ is replaced by $\lfloor k \epsilon\rfloor/k$ and argue that since the original channel is equivalent to the new channel with more noise added we must be able to recover the codewords at least as accurately on the new channel. We would need to replace all instances of $P_e(m,r,\epsilon)$ with $P_e(m,r,\mathcal{P})$, replace any use of lemmas in previous sections with use of appropriately generalized versions, replace the $4\log(64/\epsilon(1-\epsilon))$ term in the proof of lemma \ref{BMSBoostLem} with what results from the generalized version of lemma \ref{boostStepLem}, and replace the $e^{-108\ln(64/\epsilon(1-\epsilon))\sqrt[4]{m_i}}$ term with the expression necessary to ensure that the fraction that contained the $4\log(64/\epsilon(1-\epsilon))$ terms still evaluates to less than $\sqrt[4]{m_i}/27$. The observation that we can convert any algorithm for recovering $f(x)$ given the entirety of $\tilde{f}$ to one that ignores the value of $\tilde{f}(x)$ by setting the value of $\tilde{f}(x)$ in its input randomly generalizes to an observation that we can convert any algorithm for recovering $f(x)$ given the entirety of $\tilde{f}$ and $\epsilon$ to one that ignores the value of $\tilde{f}(x)$ by setting the value of $\tilde{f}(x)$ in its input randomly and the value of $\epsilon_x$ in its input arbitrarily. That implies that for any algorithm $L^\star$ that attempts to compute $f(0^m)$ given $\tilde{f}$ and $\epsilon$ and any possible value of $\epsilon_x$, $\epsilon'$, 
\[\mathbb{P}[L^\star(\tilde{f},\epsilon)=f(0^m)|\epsilon_{0^m}=\epsilon']\le 1-\epsilon'+O(2^{-c\sqrt{m}})-\Omega(\mathbb{P}[L^\star(\tilde{f})\ne \tilde{f}(0^m)|\epsilon_{0^m}=\epsilon'])\]
for some constant $c>0$. In other words, given the entirety of $\tilde{f}$ and $\epsilon$ we cannot determine the value of $f(x)$ with accuracy significantly greater than $1-\epsilon_x$, and the only way to attain an accuracy that high is to guess most of the time that $f(x)$ is $\tilde{f}(x)$ unless $\epsilon_x=1/2$.

\newcommand{\etalchar}[1]{$^{#1}$}


\begin{thebibliography}{KKM{\etalchar{+}}16b}

\bibitem[3gp]{3gpp}
Final report of 3{GPP TSG RAN WG}1 \#87 v1.0.0.

\bibitem[Abb18]{abbe_fnt}
Emmanuel Abbe.
\newblock Community detection and stochastic block models.
\newblock {\em Foundations and Trends in Communications and Information
  Theory}, 14(1-2):1--162, 2018.

\bibitem[AHN21]{abbe2020almost}
Emmanuel Abbe, Jan Hazla, and Ido Nachum.
\newblock {Almost-Reed–Muller} codes achieve constant rates for random
  errors.
\newblock {\em IEEE Transactions on Information Theory}, 67(12):8034--8050,
  2021.

\bibitem[AKK{\etalchar{+}}05]{DBLP:journals/tit/AlonKKLR05}
Noga Alon, Tali Kaufman, Michael Krivelevich, Simon Litsyn, and Dana Ron.
\newblock Testing reed-muller codes.
\newblock {\em {IEEE} Trans. Inf. Theory}, 51(11):4032--4039, 2005.

\bibitem[ALM{\etalchar{+}}98]{ALMSS98}
S.~Arora, C.~Lund, R.~Motwani, M.~Sudan, and M.~Szegedy.
\newblock Proof verification and the hardness of approximation problems.
\newblock {\em Journal of the ACM (JACM)}, 45(3):501--555, 1998.

\bibitem[Ari08]{arikan-RM}
E.~Arikan.
\newblock {A performance comparison of polar codes and Reed-Muller codes}.
\newblock {\em Communications Letters, IEEE}, 12(6):447--449, June 2008.

\bibitem[Ar{\i}09]{Arikan09}
E.~Ar{\i}kan.
\newblock Channel polarization: {A} method for constructing capacity-achieving
  codes for symmetric binary-input memoryless channels.
\newblock {\em IEEE Transactions on Information Theory}, 55(7):3051--3073,
  2009.

\bibitem[Ari10]{Arikan2010survey}
E.~Arikan.
\newblock A survey of {Reed-Muller} codes from polar coding perspective.
\newblock In {\em 2010 IEEE Information Theory Workshop on Information Theory
  (ITW 2010, Cairo)}, pages 1--5. IEEE, 2010.

\bibitem[ASSY23]{RM_fnt}
Emmanuel Abbe, Ori Sberlo, Amir Shpilka, and Min Ye.
\newblock Reed-muller codes.
\newblock {\em Foundations and Trends in Communications and Information
  Theory}, 20(12):1--156, 2023.

\bibitem[ASW15a]{Abbe15}
E.~Abbe, A.~Shpilka, and A.~Wigderson.
\newblock Reed--{M}uller codes for random erasures and errors.
\newblock {\em IEEE Transactions on Information Theory}, 61(10):5229--5252,
  2015.

\bibitem[ASW15b]{Abbe15stoc}
Emmanuel Abbe, Amir Shpilka, and Avi Wigderson.
\newblock Reed-muller codes for random erasures and errors.
\newblock In {\em Proceedings of the Forty-Seventh Annual ACM Symposium on
  Theory of Computing}, STOC '15, page 297–306, New York, NY, USA, 2015.
  Association for Computing Machinery.

\bibitem[ASY21]{ASY21}
Emmanuel Abbe, Amir Shpilka, and Min Ye.
\newblock {Reed–Muller Codes: Theory and Algorithms}.
\newblock {\em IEEE Transactions on Information Theory}, 67(6):3251--3277,
  2021.

\bibitem[AT09]{ArikanTelatar}
E.~Ar{\i}kan and E.~Telatar.
\newblock On the rate of channel polarization.
\newblock In {\em 2009 IEEE International Symposium on Information Theory},
  pages 1493--1495. IEEE, 2009.

\bibitem[AY19]{AY18}
E.~Abbe and M.~Ye.
\newblock Reed-{M}uller codes polarize.
\newblock In {\em 2019 IEEE 60th Annual Symposium on Foundations of Computer
  Science (FOCS)}, pages 273--286. IEEE, 2019.

\bibitem[BDBD22]{sunflower_2}
A.~Blokhuis, M.~De~Boeck, and J.~D'haeseleer.
\newblock On the sunflower bound for k-spaces, pairwise intersecting in a
  point.
\newblock {\em Des. Codes Cryptography}, 90(9):2101–2111, sep 2022.

\bibitem[BF90]{BF90}
D.~Beaver and J.~Feigenbaum.
\newblock Hiding instances in multioracle queries.
\newblock In {\em STACS 90}, pages 37--48. Springer, 1990.

\bibitem[BFL90]{BFL90}
L.~Babai, L.~Fortnow, and C.~Lund.
\newblock Nondeterministic exponential time has two-prover interactive
  protocols.
\newblock In {\em Foundations of Computer Science, 1990. Proceedings., 31st
  Annual Symposium on}, pages 16--25. IEEE, 1990.

\bibitem[BGH{\etalchar{+}}12]{barak2012making}
B.~Barak, P.~Gopalan, J.~Hastad, R.~Meka, P.~Raghavendra, and D.~Steurer.
\newblock Making the long code shorter.
\newblock In {\em 53rd Annual {IEEE} Symposium on Foundations of Computer
  Science, {FOCS} 2012, New Brunswick, NJ, USA, October 20-23, 2012}, pages
  370--379, 2012.

\bibitem[BGN{\etalchar{+}}22]{strong_polar}
Jaros\l{}aw B\l{}asiok, Venkatesan Guruswami, Preetum Nakkiran, Atri Rudra, and
  Madhu Sudan.
\newblock General strong polarization.
\newblock {\em J. ACM}, 69(2), mar 2022.

\bibitem[BHSS22]{BhandariHS022}
Siddharth Bhandari, Prahladh Harsha, Ramprasad Saptharishi, and Srikanth
  Srinivasan.
\newblock Vanishing spaces of random sets and applications to {Reed-Muller}
  codes.
\newblock In Shachar Lovett, editor, {\em 37th Computational Complexity
  Conference, {CCC} 2022, July 20-23, 2022, Philadelphia, PA, {USA}}, volume
  234 of {\em LIPIcs}, pages 31:1--31:14. Schloss Dagstuhl - Leibniz-Zentrum
  f{\"{u}}r Informatik, 2022.

\bibitem[BIK05]{BEIMEL2005213}
Amos Beimel, Yuval Ishai, and Eyal Kushilevitz.
\newblock General constructions for information-theoretic private information
  retrieval.
\newblock {\em Journal of Computer and System Sciences}, 71(2):213--247, 2005.

\bibitem[BIKR02]{DBLP:conf/focs/BeimelIKR02}
Amos Beimel, Yuval Ishai, Eyal Kushilevitz, and Jean{-}Fran{\c{c}}ois Raymond.
\newblock Breaking the $o(n^{1/(2k-1)})$ barrier for information-theoretic
  private information retrieval.
\newblock In {\em 43rd Symposium on Foundations of Computer Science {(FOCS}
  2002), 16-19 November 2002, Vancouver, BC, Canada, Proceedings}, pages
  261--270. {IEEE} Computer Society, 2002.

\bibitem[BK97]{Bourgain97}
J.~Bourgain and G.~Kalai.
\newblock Influences of variables and threshold intervals under group
  symmetries.
\newblock {\em Geometric and Functional Analysis}, 7(3):438--461, 1997.

\bibitem[BKS{\etalchar{+}}10a]{BKSSZ}
A.~Bhattacharyya, S.~Kopparty, G.~Schoenebeck, M.~Sudan, and D.~Zuckerman.
\newblock {Optimal Testing of Reed-Muller Codes}.
\newblock In {\em Proceedings of the 2010 IEEE 51st Annual Symposium on
  Foundations of Computer Science}, FOCS '10, pages 488--497, 2010.

\bibitem[BKS{\etalchar{+}}10b]{DBLP:conf/focs/BhattacharyyaKSSZ10}
Arnab Bhattacharyya, Swastik Kopparty, Grant Schoenebeck, Madhu Sudan, and
  David Zuckerman.
\newblock Optimal testing of reed-muller codes.
\newblock In {\em 51th Annual {IEEE} Symposium on Foundations of Computer
  Science, {FOCS} 2010, October 23-26, 2010, Las Vegas, Nevada, {USA}}, pages
  488--497. {IEEE} Computer Society, 2010.

\bibitem[BMW15]{Barg15}
Alexander Barg, Arya Mazumdar, and Rongrong Wang.
\newblock Restricted isometry property of random subdictionaries.
\newblock {\em IEEE Transactions on Information Theory}, 61(8):4440--4450,
  2015.

\bibitem[BV10]{bogdanov-viola}
A.~Bogdanov and E.~Viola.
\newblock Pseudorandom bits for polynomials.
\newblock {\em SIAM J. Comput.}, 39(6):2464--2486, 2010.

\bibitem[CF07]{Costello07}
D.~J. Costello and G.~D. Forney.
\newblock Channel coding: {T}he road to channel capacity.
\newblock {\em Proceedings of the IEEE}, 95(6):1150--1177, 2007.

\bibitem[CG05]{Carlet05}
C.~Carlet and P.~Gaborit.
\newblock On the construction of balanced {Boolean} functions with a good
  algebraic immunity.
\newblock In {\em Proceedings. International Symposium on Information Theory,
  2005. ISIT 2005.}, pages 1101--1105. IEEE, 2005.

\bibitem[CHJ10]{Calderbank10}
Robert Calderbank, Stephen Howard, and Sina Jafarpour.
\newblock Construction of a large class of deterministic sensing matrices that
  satisfy a statistical isometry property.
\newblock {\em IEEE Journal of Selected Topics in Signal Processing},
  4(2):358--374, 2010.

\bibitem[CJ10]{Calderbank2010reed}
Robert Calderbank and Sina Jafarpour.
\newblock {Reed Muller sensing matrices and the LASSO}.
\newblock In {\em International Conference on Sequences and Their
  Applications}, pages 442--463. Springer, 2010.

\bibitem[CK11]{csiszar_book}
Imre Csiszar and Janos Korner.
\newblock {\em Information Theory: Coding Theorems for Discrete Memoryless
  Systems}.
\newblock Cambridge University Press, 2 edition, 2011.

\bibitem[CKGS98]{DBLP:journals/jacm/ChorKGS98}
Benny Chor, Eyal Kushilevitz, Oded Goldreich, and Madhu Sudan.
\newblock Private information retrieval.
\newblock {\em J. {ACM}}, 45(6):965--981, 1998.

\bibitem[CT12]{cover_book}
T.M. Cover and J.A. Thomas.
\newblock {\em Elements of Information Theory}.
\newblock Wiley, 2012.

\bibitem[DF81]{sunflower_1}
Michel Deza and Peter Frankl.
\newblock Every large set of equidistant {$(0, +1, -1)$} vectors forms a
  sunflower.
\newblock {\em Combinatorica}, 1:225--231, 09 1981.

\bibitem[DF93]{Dumer93}
I.~Dumer and P.~Farrell.
\newblock Erasure correction performance of linear block codes.
\newblock In {\em Workshop on Algebraic Coding}, pages 316--326. Springer,
  1993.

\bibitem[DG16]{DBLP:journals/jacm/DvirG16}
Zeev Dvir and Sivakanth Gopi.
\newblock 2-server {PIR} with subpolynomial communication.
\newblock {\em J. {ACM}}, 63(4):39:1--39:15, 2016.

\bibitem[DS06]{dumer2}
I.~Dumer and K.~Shabunov.
\newblock {Recursive error correction for general Reed-Muller codes}.
\newblock {\em Discrete Applied Mathematics}, 154(2):253 -- 269, 2006.
\newblock Coding and Cryptography.

\bibitem[Dum04]{dumer1}
I.~Dumer.
\newblock {Recursive decoding and its performance for low-rate Reed-Muller
  codes}.
\newblock {\em Information Theory, IEEE Transactions on}, 50(5):811--823, May
  2004.

\bibitem[Dum06]{dumer3}
I.~Dumer.
\newblock {Soft-decision decoding of Reed-Muller codes: a simplified
  algorithm}.
\newblock {\em Information Theory, IEEE Transactions on}, 52(3):954--963, March
  2006.

\bibitem[FFHM21]{Fathollahi21}
Dorsa Fathollahi, Nariman Farsad, Seyyed~Ali Hashemi, and Marco Mondelli.
\newblock Sparse multi-decoder recursive projection aggregation for
  {Reed-Muller} codes.
\newblock In {\em 2021 IEEE International Symposium on Information Theory
  (ISIT)}, pages 1082--1087. IEEE, 2021.

\bibitem[FK96]{Friedgut96}
E.~Friedgut and G.~Kalai.
\newblock Every monotone graph property has a sharp threshold.
\newblock {\em Proceedings of the American mathematical Society},
  124(10):2993--3002, 1996.

\bibitem[Gal08]{gallager_book}
Robert~G. Gallager.
\newblock {\em Principles of Digital Communication}.
\newblock Cambridge University Press, 2008.

\bibitem[Gas04]{Gasarch04}
W.~Gasarch.
\newblock A survey on private information retrieval.
\newblock In {\em Bulletin of the EATCS}. Citeseer, 2004.

\bibitem[GEE{\etalchar{+}}21]{Geiselhart21}
Marvin Geiselhart, Ahmed Elkelesh, Moustafa Ebada, Sebastian Cammerer, and
  Stephan ten Brink.
\newblock Automorphism ensemble decoding of {Reed–Muller} codes.
\newblock {\em IEEE Transactions on Communications}, 69(10):6424--6438, 2021.

\bibitem[GX15]{Guruswami15}
Venkatesan Guruswami and Patrick Xia.
\newblock Polar codes: {Speed} of polarization and polynomial gap to capacity.
\newblock {\em IEEE Transactions on Information Theory}, 61(1):3--16, 2015.

\bibitem[Ham50]{Hamming50}
R.~W. Hamming.
\newblock Error detecting and error correcting codes.
\newblock {\em The Bell system technical journal}, 29(2):147--160, 1950.

\bibitem[HAU14]{Hassani14}
S.~H. Hassani, K.~Alishahi, and R.~L. Urbanke.
\newblock Finite-length scaling for polar codes.
\newblock {\em IEEE Transactions on Information Theory}, 60(10):5875--5898,
  2014.

\bibitem[HKL05]{hell}
T.~Helleseth, T.~Klove, and V.~I. Levenshtein.
\newblock Error-correction capability of binary linear codes.
\newblock {\em Information Theory, IEEE Transactions on}, 51(4):1408--1423,
  April 2005.

\bibitem[HKO{\etalchar{+}}18]{Hassani18}
H.~Hassani, S.~Kudekar, O.~Ordentlich, Y.~Polyanskiy, and R.~Urbanke.
\newblock Almost optimal scaling of {R}eed-{M}uller codes on {BEC} and {BSC}
  channels.
\newblock In {\em 2018 IEEE International Symposium on Information Theory
  (ISIT)}, pages 311--315. IEEE, 2018.

\bibitem[HSS13]{DBLP:journals/siamcomp/HaramatySS13}
Elad Haramaty, Amir Shpilka, and Madhu Sudan.
\newblock Optimal testing of multivariate polynomials over small prime fields.
\newblock {\em {SIAM} J. Comput.}, 42(2):536--562, 2013.

\bibitem[HSS21]{HSS}
Jan Hazla, Alex Samorodnitsky, and Ori Sberlo.
\newblock On codes decoding a constant fraction of errors on the {BSC}.
\newblock In Samir Khuller and Virginia~Vassilevska Williams, editors, {\em
  {STOC} '21: 53rd Annual {ACM} {SIGACT} Symposium on Theory of Computing,
  Virtual Event, Italy, June 21-25, 2021}, pages 1479--1488. {ACM}, 2021.

\bibitem[JPRZ09]{DBLP:journals/rsa/JutlaPRZ09}
Charanjit~S. Jutla, Anindya~C. Patthak, Atri Rudra, and David Zuckerman.
\newblock Testing low-degree polynomials over prime fields.
\newblock {\em Random Struct. Algorithms}, 35(2):163--193, 2009.

\bibitem[KKM{\etalchar{+}}16a]{Kudekar16STOC}
S.~Kudekar, S.~Kumar, M.~Mondelli, H.~D. Pfister, E.~{\c{S}}a{\c{s}}o{\u{g}}lu,
  and R.~Urbanke.
\newblock Reed-{M}uller codes achieve capacity on erasure channels.
\newblock In {\em Proceedings of the forty-eighth annual ACM Symposium on
  Theory of Computing (STOC)}, pages 658--669. ACM, 2016.

\bibitem[KKM{\etalchar{+}}16b]{comparingBitMAP}
Shrinivas Kudekar, Santhosh Kumar, Marco Mondelli, Henry~D Pfister, and
  R{\"u}diger Urbanke.
\newblock Comparing the bit-map and block-map decoding thresholds of
  reed-muller codes on bms channels.
\newblock In {\em 2016 IEEE International Symposium on Information Theory
  (ISIT)}, pages 1755--1759, 2016.

\bibitem[KKM{\etalchar{+}}17]{Kudekar17}
S.~Kudekar, S.~Kumar, M.~Mondelli, H.~D. Pfister, E.~{\c{S}}a{\c{s}}oǧlu, and
  R.~Urbanke.
\newblock Reed--{M}uller codes achieve capacity on erasure channels.
\newblock {\em IEEE Transactions on Information Theory}, 63(7):4298--4316,
  2017.

\bibitem[KLP12]{Kaufman12}
T.~Kaufman, S.~Lovett, and E.~Porat.
\newblock Weight distribution and list-decoding size of {Reed--Muller} codes.
\newblock {\em IEEE Transactions on Information Theory}, 58(5):2689--2696,
  2012.

\bibitem[KR06]{DBLP:journals/siamcomp/KaufmanR06}
Tali Kaufman and Dana Ron.
\newblock Testing polynomials over general fields.
\newblock {\em {SIAM} J. Comput.}, 36(3):779--802, 2006.

\bibitem[KT70]{kasami1}
T.~Kasami and N.~Tokura.
\newblock {On the weight structure of Reed-Muller codes}.
\newblock {\em Information Theory, IEEE Transactions on}, 16(6):752--759, Nov
  1970.

\bibitem[KTA76]{kasami2}
T.~Kasami, N.~Tokura, and S.~Azumi.
\newblock {On the weight enumeration of weights less than 2.5d of Reed-Muller
  codes}.
\newblock {\em Information and Control}, 30(4):380 -- 395, 1976.

\bibitem[LHP20]{Lian20}
Mengke Lian, Christian H{\"a}ger, and H.~D. Pfister.
\newblock Decoding {Reed--Muller} codes using redundant code constraints.
\newblock In {\em 2020 IEEE International Symposium on Information Theory
  (ISIT)}, pages 42--47. IEEE, 2020.

\bibitem[Lin93]{Lin93}
S.~Lin.
\newblock {RM} codes are not so bad.
\newblock In {\em IEEE Inform. Theory Workshop}, 1993.
\newblock Invited talk.

\bibitem[Lov17]{lovett_book}
Shachar Lovett.
\newblock {\em Additive Combinatorics and its Applications in Theoretical
  Computer Science}.
\newblock Number~8 in Graduate Surveys. Theory of Computing Library, 2017.

\bibitem[MHU14]{Mondelli14}
M.~Mondelli, S.~H. Hassani, and R.~L. Urbanke.
\newblock From polar to {R}eed-{M}uller codes: {A} technique to improve the
  finite-length performance.
\newblock {\em IEEE Transactions on Communications}, 62(9):3084--3091, 2014.

\bibitem[Mul54]{Muller54}
D.~E. Muller.
\newblock Application of {B}oolean algebra to switching circuit design and to
  error detection.
\newblock {\em Transactions of the IRE professional group on electronic
  computers}, (3):6--12, 1954.

\bibitem[Raz87]{Razborov}
A.~A. Razborov.
\newblock Lower bounds on the size of bounded depth circuits over a complete
  basis with logical addition.
\newblock {\em Math. Notes}, 41(4):333--338, 1987.

\bibitem[Ree54]{Reed54}
I.~Reed.
\newblock A class of multiple-error-correcting codes and the decoding scheme.
\newblock {\em Transactions of the IRE Professional Group on Information
  Theory}, 4(4):38--49, 1954.

\bibitem[Ren18]{renes}
Joseph~M. Renes.
\newblock Duality of channels and codes.
\newblock {\em IEEE Transactions on Information Theory}, 64(1):577--592, 2018.

\bibitem[RP21]{Reeves21}
G.~Reeves and H.~D. Pfister.
\newblock {Reed-Muller} codes achieve capacity on {BMS} channels.
\newblock 2021.
\newblock arXiv:2110.14631.

\bibitem[Sam18]{Samorod18old}
Alex Samorodnitsky.
\newblock An upper bound on $\ell_q$ norms of noisy functions.
\newblock {\em CoRR}, abs/1809.09696, 2018.

\bibitem[Sam20]{Samorod18}
Alex Samorodnitsky.
\newblock An upper bound on $\ell_q$ norms of noisy functions.
\newblock {\em IEEE Transactions on Information Theory}, 66(2):742--748, 2020.

\bibitem[San12]{sanders}
Tom Sanders.
\newblock {On the Bogolyubova-Ruzsa lemma}.
\newblock {\em Analysis and PDE}, 5(3):627 -- 655, 2012.

\bibitem[SB70]{sloane-RM}
N.~J.~A. Sloane and E.~Berlekamp.
\newblock {Weight enumerator for second-order Reed-Muller codes}.
\newblock {\em Information Theory, IEEE Transactions on}, 16(6):745--751, Nov
  1970.

\bibitem[SF12]{schapire}
Robert~E. Schapire and Yoav Freund.
\newblock {\em Boosting: Foundations and Algorithms}.
\newblock The MIT Press, 2012.

\bibitem[Sha48]{Shannon48}
C.~E. Shannon.
\newblock A mathematical theory of communication.
\newblock {\em Bell system technical journal}, 27(3):379--423, 1948.

\bibitem[Sha79]{Shamir79}
A.~Shamir.
\newblock How to share a secret.
\newblock {\em Communications of the ACM}, 22(11):612--613, 1979.

\bibitem[Sha92]{Sha92}
A.~Shamir.
\newblock {IP= PSPACE}.
\newblock {\em Journal of the ACM (JACM)}, 39(4):869--877, 1992.

\bibitem[SP92]{sidel}
V.~M. Sidel'nikov and A.~S. Pershakov.
\newblock {Decoding of Reed-Muller Codes with a Large Number of Errors}.
\newblock {\em Problems Inform.\ Transmission}, 28(3):80--94, 1992.

\bibitem[SS96]{expander}
M.~Sipser and D.~A. Spielman.
\newblock {Expander Codes}.
\newblock {\em IEEE Trans. on Inform. Theory}, 42:1710--1722, 1996.

\bibitem[SS20]{Sberlo18}
O.~Sberlo and A.~Shpilka.
\newblock On the performance of {Reed-Muller} codes with respect to random
  errors and erasures.
\newblock In {\em Proceedings of the Fourteenth Annual ACM-SIAM Symposium on
  Discrete Algorithms}, pages 1357--1376. SIAM, 2020.

\bibitem[SSV17]{Saptharishi17}
R.~Saptharishi, A.~Shpilka, and B.~L. Volk.
\newblock Efficiently decoding {R}eed--{M}uller codes from random errors.
\newblock {\em IEEE Transactions on Information Theory}, 63(4):1954--1960,
  2017.

\bibitem[TV06]{tao_vu_2006}
Terence Tao and Van~H. Vu.
\newblock {\em Additive Combinatorics}.
\newblock Cambridge Studies in Advanced Mathematics. Cambridge University
  Press, 2006.

\bibitem[TV13]{Tal13}
I.~Tal and A.~Vardy.
\newblock How to construct polar codes.
\newblock {\em IEEE Transactions on Information Theory}, 59(10):6562--6582,
  2013.

\bibitem[TZS06]{DBLP:journals/jcss/Ta-ShmaZS06}
Amnon Ta{-}Shma, David Zuckerman, and Shmuel Safra.
\newblock Extractors from reed-muller codes.
\newblock {\em J. Comput. Syst. Sci.}, 72(5):786--812, 2006.

\bibitem[Wil13]{wilde_2013}
Mark~M. Wilde.
\newblock {\em Quantum Information Theory}.
\newblock Cambridge University Press, 2013.

\bibitem[YA20]{YA18}
Min Ye and Emmanuel Abbe.
\newblock Recursive projection-aggregation decoding of {Reed-Muller} codes.
\newblock {\em IEEE Transactions on Information Theory}, 66(8):4948--4965,
  2020.

\bibitem[Yek12]{Yekhanin12}
S.~Yekhanin.
\newblock Locally decodable codes.
\newblock {\em Foundations and Trends{\textregistered} in Theoretical Computer
  Science}, 6(3):139--255, 2012.

\end{thebibliography}

\end{document}